\def\bx{{\mathbf x}}
\def\bc{{\mathbf c}}
\def\bv{{\mathbf v}}
\def\b0{{\mathbf 0}}
\newcommand{\beq}{\begin{equation}}
\newcommand{\eeq}{\end{equation}}
\def\bA{{\mathbf A}}
\def\bv{\mbox{\boldmath $v$}}
\def\bc{\mbox{\boldmath $c$}}
\def\bu{\mbox{\boldmath $u$}}
\def\by{\mbox{\boldmath $y$}}
\def\bv{\mbox{\boldmath $v$}}
\def\bx{\mbox{\boldmath $x$}}
\def\bs{\mbox{\boldmath $s$}}
\def\bt{\mbox{\boldmath $t$}}
\def\by{\mbox{\boldmath $y$}}
\def\blambda{\mbox{\boldmath $\lambda$}}
\def\bA{\mbox{\boldmath $A$}}
\def\mB{\mbox{$\mathbf{B}$}}
\def\mD{\mbox{$\mathbf{D}$}}
\def\mX{\mbox{$\mathbf{X}$}}
\def\mD{\mbox{$\mathbf{D}$}}
\def\mG{\mbox{$\mathbf{G}$}}
\def\mF{\mbox{$\mathbf{F}$}}
\def\mI{\mbox{$\mathbf{I}$}}
\def\mL{\mbox{$\mathbf{L}$}}
\def\mP{\mbox{$\mathbf{P}$}}
\def\mQ{\mbox{$\mathbf{Q}$}}
\def\mR{\mbox{$\mathbf{R}$}}
\def\mS{\mbox{$\mathbf{S}$}}
\def\mU{\mbox{$\mathbf{U}$}}
\def\mV{\mbox{$\mathbf{V}$}}
\def\mM{\mbox{$\mathbf{M}$}}
\newcommand{\ds}{\displaystyle}
\def\V{{\cal V}}
\def\E{{\cal E}}
\def\T{{\cal T}}
\def\A{{\cal A}}
\newtheorem{theorem}{\textbf{Theorem}}
\newtheorem{proposition}{Proposition}
\newtheorem{definition}{Definition}
\newenvironment{proof}[1][Proof]{\noindent \textbf{#1.} }{\qedsymbol}
\newcommand{\qedsymbol}{\hspace{\fill}\rule{1.5ex}{1.5ex}}
\providecommand{\algorithmname}{Algorithm}
\begin{document}

\title{Topological Signal Processing over\\ Simplicial Complexes}
\author{Sergio Barbarossa,~\IEEEmembership{Fellow,~IEEE}, Stefania Sardellitti,~\IEEEmembership{Member,~IEEE} \\
\thanks{The authors are with the Department of Information
Engineering, Electronics, and Telecommunications,
Sapienza University of Rome, Via Eudossiana 18, 00184,
Rome, Italy. E-mails: \{sergio.barbarossa; stefania.sardellitti\}@uniroma1.it.
This work has been supported by
H2020 EU/Taiwan Project 5G CONNI, Nr. AMD-861459-3 and  by MIUR, under the PRIN Liquid-Edge contract.
Some preliminary results of this work were presented at the 2018 IEEE Workshop in Data Science \cite{barbarossa2018learning}.
}}

\maketitle

\begin{abstract}
The goal of this paper is to establish the fundamental tools to analyze signals defined over a topological space, i.e. a set of points along with a set of neighborhood relations. This setup does not require the definition of a metric and then it is especially useful to deal with signals defined over non-metric spaces. We focus on signals defined over simplicial complexes. Graph Signal Processing (GSP) represents a special case of Topological Signal Processing (TSP), referring to the situation where the signals are associated only with the vertices of a graph. Even though the theory can be applied to signals of any order, we focus on signals defined over the edges of a graph and show how building a simplicial complex of order two, i.e. including triangles, yields benefits in the analysis of edge signals. After reviewing the basic principles of algebraic topology, we derive a sampling theory for signals of any order and emphasize the interplay between signals of different order. Then we propose a method to infer the topology of a simplicial complex from  data. We conclude with applications to real edge signals and to the analysis of discrete vector fields to illustrate the benefits of the proposed methodologies.
\end{abstract}
\begin{IEEEkeywords}
Algebraic topology, graph signal processing, topology inference.
\end{IEEEkeywords}
\section{Introduction}
\label{sec:intro}
Historically, signal processing has been developed for signals defined over a metric space, typically time or space.  More recently, there has been a surge of interest to deal with signals that are not necessarily defined over a metric space. Examples of particular interest are biological networks, social networks, etc.
The field of graph signal processing (GSP) has recently emerged as a framework to analyze signals defined over the vertices of a graph \cite{shuman2013}, \cite{ortega2018graph}. A graph ${\cal G}({\cal V}, {\cal E})$ is a simple example of topological space, composed of a set of elements (vertices)  ${\cal V}$ and a set of edges ${\cal E}$ representing {\it pairwise} relations. However, notwithstanding their enormous success, graph-based representations are not always able to capture all the information present in interconnected systems in which the complex interactions among the system constitutive elements cannot be reduced to pairwise interactions, but require a {\it multiway} description, as suggested in \cite{klamt2009hypergraphs, courtney2016generalized, giusti2016two, shen2018genome, benson2018simplicial, agarwal2006higher}.

To establish a general framework to deal with complex interacting systems, it is useful to start from
a topological space, i.e.  a set of elements  ${\cal V}$, along with an ensemble of {\it multiway} relations, represented by a set ${\cal S}$ containing subsets of various cardinality, of the elements of ${\cal V}$. The structure ${\cal H}({\cal V}, {\cal S})$ is known as a {\it hypergraph}. In particular, a class of hypergraphs that is particularly appealing for its rich algebraic structure is given by {\it simplicial complexes}, whose defining feature is the inclusion property stating that if a set ${\cal A}$ belongs to ${\cal S}$, then all subsets of  ${\cal A}$ also belong to  ${\cal S}$ \cite{munkres2000topology}. Restricting the attention to simplicial complexes is a limitation with respect to a hypergraph model. Nevertheless, simplicial complex models include many cases of interest and, most important, their rich algebraic structure makes possible to derive tools that are very useful for analyzing signals over the complex.
Learning simplicial complexes representing the complex interactions among sets of elements has been already proposed in brain network analysis \cite{giusti2016two}, neuronal morphologies \cite{kanari2018topological}, co-authorship networks \cite{patania2017shape}, collaboration networks \cite{ramanathan2011beyond}, \cite{moore2012analyzing}, tumor progression analysis \cite{roman2015simplicial}. More generally, the use of algebraic topology tools for the extraction of information from data is not new: The framework known as \textit{Topological Data Analysis} (TDA), see e.g. \cite{carlsson2009topology}, has exactly this goal. Interesting applications of algebraic topology tools have been proposed to control systems \cite{muhammad2006control}, statistical ranking from incomplete data \cite{jiang2011}, \cite{Xu}, distributed coverage control of sensor networks \cite{tahbaz2010, chintakunta2014distributed, desilva2007coverage}, wheeze detection \cite{emrani2014persistent}.  One of the fundamental tools of TDA is the analysis of persistent homologies extracted from data \cite{edelsbrunner2008persistent}, \cite{horak2009persistent}. Topological methods to analyze signals and images are also the subjects of the two books \cite{krim2015geometric} and \cite{robinson2016topological}.

The goal of our paper is to establish a fundamental framework to analyze signals defined over a simplicial complex. Our approach is complementary to TDA: Rather than focusing, like TDA, on the properties of the simplicial complex extracted from data, {\it we focus on the  properties of signals defined over a simplicial complex}. Our approach includes GSP as a particular case: While GSP focuses on the analysis of signals defined over the vertices of a graph, topological signal processing (TSP) considers signals defined over simplices of various order, i.e. signals defined over nodes, edges, triangles, etc.  Relevant examples of edge signals are flow signals, like blood flow between different areas of the brain \cite{huang2018graph}, data traffic over communication links \cite{leung1994traffic}, regulatory signals in gene regulatory networks \cite{sever2015signal}, where it was shown that a dysregulation of these regulatory signals is one of the causes of cancer \cite{sever2015signal}. Examples of signals defined over triplets are co-authorship networks, where a (filled) triangle indicates the presence of papers co-authored by the authors associated to its three
vertices \cite{patania2017shape} and the associated signal value counts the number of such publications. Further examples of even higher order structures are analyzed in \cite{benson2018simplicial}, with the goal of predicting higher-order links. There are previous works dealing with the analysis of edge signals, like \cite{Segarra_018, Segarra_019, Evans, Ahn}. More specifically, in \cite{Segarra_018} the authors introduced a class of filters  to analyze  edge signals based on the edge-Laplacian matrix \cite{mesbahi2010graph}. A semi-supervised learning method for learning edge flows was also suggested in \cite{Segarra_019}, where the authors proposed filters highlighting both divergence-free and curl-free behaviors.
Other works analyzed edge signals using a line-graph transformation  \cite{Evans}, \cite{Ahn}.
Random walks evolving over simplicial complexes have been  analyzed in  \cite{mukherjee2016random},\cite{parzanchevski2017simplicial}, \cite{schaub2018random}.
In \cite{mukherjee2016random},  random walks and diffusion process over simplicial complexes are introduced, while
in \cite{schaub2018random}  the authors focused on the study of diffusion processes on the edge-space by generalizing the well-known relationship between the normalized graph Laplacian operator and random walks on graphs.

Building a representation based on a simplicial complex is a straightforward generalization of a graph-based representation. Given a set of time series, it is well known that graph-based representations are very useful to capture relevant correlations or causality relations present between different signals \cite{mateos2019connecting},\cite{dong2019learning}. In such a case, each time series is associated to a node of a graph, and the presence of an edge is an indicator of the relations between the signals associated to its endpoints. As a direct generalization, if we have signals defined over the edges of a graph, capturing their relations requires inferring the presence of triangles associating triplets of edges.

\noindent In summary, our main contributions are listed below:
\begin{enumerate}
\item \it we show how to derive a real-valued function capturing triple-wise relations among data, to be used as a regularization function in the analysis of edge signals;
\item we derive a {\it sampling theory} defining the conditions for the recovery of high order signals from a subset of observations, highlighting the {\it interplay between signals of different order};
\item we propose {\it inference algorithms to extract the structure of the simplicial complex from high order signals};
\item  \it we apply our edge signal processing tools to the analysis of vector fields, with a specific attention to the recovery of the  RiboNucleic Acid (RNA) velocity vector field, useful to predict the evolution of a cell behavior \cite{la2018rna}.
\end{enumerate}
We presented some preliminary results of our work in \cite{barbarossa2018learning}. Here, we extend the work of \cite{barbarossa2018learning}, deriving a sampling theory for signals defined over complexes of various order,  proposing new inference methods, more robust against noise, and showing applications to the analysis of wireless data traffic and to the estimation of the RNA velocity vector field.

The paper is organized as follows.  Section \ref{sec:discrete calculus} recalls the main algebraic principles that will form the basis  for the derivation of the signal processing tools carried out in the ensuing sections. In Section \ref{sec:spectr_theory}, we will first recall the eigenvectors properties of higher-order Laplacian and the Hodge decomposition. Then, we derive the real-valued extension of an edge set function, capturing triple-wise relations among the elements of a discrete set, which will be used to design unitary bases to represent edge signals. In Section \ref{sec:edge_flows_est},
we  illustrate some methods to recover the edge signal components from noisy observations. Section \ref{sec:sampling} provides  theoretical  conditions to recover an edge signal
from a subset of samples, highlighting the interplay between signals defined over structures of different order. In Section \ref{Estimation of discrete vector fields}, we illustrate how to use edge signal processing to filter discrete vector fields. Then, in Section \ref{sec:L1_inference}, we propose some methods to infer the simplicial complex structure from  noisy observations, illustrating their performance over both synthetic and real data.
Finally, in Section \ref{sec:conclusions} we draw some conclusions.

\section{Review of algebraic topology tools}
\label{sec:discrete calculus}
In this section we recall the basic principles of algebraic topology \cite{munkres2000topology} and discrete calculus \cite{grady2010}, as they will form the background required for deriving the basic signal processing tools to be used in later sections. We follow an algebraic approach that is accessible also to readers  with no specific background on algebraic topology.

\subsection{Discrete domains: Simplicial complexes}
Given a finite set $\mathcal{V}\triangleq \{ v_0,\ldots, v_{N-1}\}$ of $N$ points (vertices), a {\it $k$-simplex} $\sigma^k_i$ is an unordered set $\{v_{i_0}, \ldots, v_{i_k}\}$ of $k+1$ points with $0 \le i_j \le N-1$, for $j=0,\ldots, k$, and $v_{i_j}\neq v_{i_n}$ for all $i_j \neq i_n$. A \textit{face} of the $k$-simplex $\{v_{i_0}, \ldots, v_{i_k}\}$
is a $(k-1)$-simplex of the form $\{v_{i_0}, \ldots, v_{i_{j-1}}, v_{i_{j+1}},\ldots, v_{i_k}\}$,
for some $0\le j \le k$. Every $k$-simplex has exactly $ k+1$ faces. An \textit{abstract simplicial complex} ${\cal X}$ is a finite collection of simplices that is closed under inclusion of faces, i.e., if ${\cal \sigma}_i \in {\cal X}$, then all faces of $\sigma_i$ also belong to ${\cal X}$.
The order (or dimension) of a simplex is one less than its cardinality. Then, a vertex is a 0-dimensional simplex, an edge  has dimension $1$, and so on. The dimension of a simplicial complex is the largest dimension of any of its simplices.
A graph is a particular case of an abstract simplicial complex of order $1$, containing only simplices of order $0$ (vertices) and $1$ (edges).

If the set of points is embedded in a real space $\mathbb{R}^D$  of dimension $D$, we can associate a {\it geometric simplicial complex} with  the abstract complex. A set of points in a real space $\mathbb{R}^D$  is \textit{affinely independent} if it is not contained in a hyperplane; an affinely independent set in $\mathbb{R}^D$ contains at most $D + 1$ points. A geometric  $k$-simplex is the \textit{convex hull} of a set of $k + 1$ affinely independent points, called its vertices. Hence, a point is a $0$-simplex, a line segment is a $1$-simplex, a triangle is a $2$-simplex,  a tetrahedron is a $3$-simplex, and so on. A geometric simplicial complex shares the fundamental property of an abstract simplicial complex: It is a collection of simplices that is closed under inclusion and with the further property that  the intersection of any two simplices in ${\cal X}$ is also a simplex  in ${\cal X}$, assuming that the empty set  is an element of every simplicial complex. Although geometric simplicial complexes are easier to visualize and, for this reason, we will often use geometric terms like edges, triangles, and so on, as synonyms of pairs, triplets, we do not require the simplicial complex to be embedded in any real space, so as to leave the treatment as general as possible.

The structure of a simplicial complex is captured by the neighborhood relations of its subsets.
As with graphs, it is useful to introduce first the orientation of the simplices. Every simplex, of any order, can have only two orientations,  depending on the permutations of its elements. Two orientations are equivalent, or coherent, if each of them can be recovered from the other by an even number of transpositions, where each transposition is a permutation of two elements \cite{munkres2000topology}. A $k$-simplex $\sigma_i^k\equiv\{v_{i_0}, v_{i_1}, \ldots, v_{i_k}\}$ of order $k$, together with an orientation is an {\it oriented} $k$-simplex and is denoted by $\left[v_{i_0}, v_{i_1}, \ldots, v_{i_k}\right]$. Two simplices of order $k$, $\sigma_i^k, \sigma_j^k \in {\cal X}$, are {\it upper adjacent} in ${\cal X}$, if both are faces of a simplex of order $k+1$. Two simplices of order $k$, $\sigma_i^k, \sigma_j^k \in {\cal X}$, are {\it lower adjacent} in ${\cal X}$, if both have a common face of order $k-1$ in ${\cal X}$.
A  $(k-1)$-face $\sigma_j^{k-1}$ of a $k$-simplex $\sigma_i^{k}$ is called a boundary element of $\sigma_i^{k}$. We use the notation $\sigma_j^{k-1}\subset \sigma_i^{k}$ to indicate that $\sigma_j^{k-1}$ is a boundary element of $\sigma_i^{k}$. Given a simplex $\sigma_j^{k-1} \subset \sigma_i^{k}$, we use the notation $ \sigma_j^{k-1} \sim \sigma_i^{k}$ to indicate that the orientation of $\sigma_j^{k-1}$ is coherent with that of $\sigma_i^{k}$, whereas we write $ \sigma_j^{k-1} \nsim \sigma_i^{k}$ to indicate that the two orientations are opposite.

For each $k$,  $C_k({\cal X}, \mathbb{R})$ denotes the vector space obtained by the linear combination, using real coefficients, of the set of oriented $k$-simplices of ${\cal X}$. In algebraic topology, the elements of $C_k({\cal X}, \mathbb{R})$ are called $k$-\textit{chains}.  If $\{\sigma_1^k, \ldots, \sigma_{n_k}^{k}\}$ is the set of $k$-simplices in ${\cal X}$, a $k$-chain $\tau_k$ can be written as $\tau_k=\sum_{i=1}^{n_k} \alpha_i \sigma_i^k$. Then, given the basis $\{\sigma_1^k, \ldots, \sigma_{n_k}^{k}\}$, a chain $\tau_k$ can be represented by the vector of its expansion coefficients $(\alpha_1, \ldots, \alpha_{n_k})$.
An important operator acting on ordered chains is the \textit{boundary operator}. The boundary of the ordered $k$-chain  $[v_{i_0}, \ldots, v_{i_k}]$ is a linear mapping $\partial_k : C_k({\cal X},\mathbb{R})\rightarrow  C_{k-1}({\cal X},\mathbb{R})$ defined as
\begin{equation}
\label{boundary}
\partial_k  [v_{i_0}, \ldots, v_{i_k}] \triangleq \sum_{j=0}^{k}\, (-1)^j [v_{i_0}, \ldots, v_{i_{j-1}}, v_{i_{j+1}}, \ldots, v_{i_k}].
\end{equation}
So, for example, given an oriented triangle $\sigma^2_i\triangleq [v_{i_0}, v_{i_1}, v_{i_2}]$, its boundary is
\begin{equation}
\label{boundary_triangle}
\partial_2 \sigma^2_i=[v_{i_1}, v_{i_2}]-[v_{i_0}, v_{i_2}]+[v_{i_0}, v_{i_1}],
\end{equation}
i.e., a suitable linear combination of its edges.
It is straightforward to verify, by simple substitution, that {\it the boundary of a boundary is zero}, i.e., $\partial_k \partial_{k+1} = 0$.

It is important to remark that an oriented simplex is different from a directed one. As with graphs, an oriented edge establishes which direction of the flow is considered positive or negative, whereas a directed edge only permits flow in one direction \cite{grady2010}. In this work we will considered oriented, undirected simplices.

\subsection{Algebraic representation}
The structure of a simplicial complex ${\cal X}$ of dimension $K$,  shortly named  $K$-simplicial complex, is fully described by the set of its incidence matrices $\mathbf{B}_k$, $k=1, \ldots, K$. Given an orientation of the simplicial complex ${\cal X}$, the entries of the incidence matrix $\mathbf{B}_k$ establish which $k$-simplices are incident to which $(k-1)$-simplices. Then $\mathbf{B}_k$
is the matrix representation of the boundary operator. Formally speaking, its entries are defined as follows:
  \beq \label{inc_coeff}
  B_k(i,j)=\left\{\begin{array}{rll}
  0, & \text{if} \; \sigma^{k-1}_i \not\subset \sigma^{k}_j \\
  1,& \text{if} \; \sigma^{k-1}_i \subset \sigma^{k}_j \;  \text{and} \; \sigma^{k-1}_i \sim \sigma^{k}_j\\
  -1,& \text{if} \; \sigma^{k-1}_i \subset \sigma^{k}_j \;  \text{and} \; \sigma^{k-1}_i \nsim \sigma^{k}_j\\
  \end{array}\right. .
  \eeq
If we consider, for simplicity, a simplicial complex of order two, composed of a set $\V$ of vertices, a set $\E$ of edges, and a set $\T$ of triangles,  having cardinalities $V=|\V|$,
$E=|\E|$, and $T=|\T|$, respectively, we need to build two incidence matrices $\mB_1 \in \mathbb{R}^{V \times E}$ and  $\mB_2 \in \mathbb{R}^{E \times T}$.

From (\ref{boundary}), the property that the boundary of a boundary is zero maps into the following matrix form
\begin{equation}
\label{boundary of boundary}
\mathbf{B}_k \mathbf{B}_{k+1}=\mathbf{0}.
\end{equation}
The structure of a $K$-simplicial complex is fully described by its {\it high order combinatorial Laplacian} matrices \cite{goldberg2002combinatorial}, of order $k=0, \ldots, K$, defined as
\begin{align}
\label{Laplacians}
&\mathbf{L}_0=\mathbf{B}_{1}\mathbf{B}_{1}^T,\\
&\mathbf{L}_k=\mathbf{B}^T_k\mathbf{B}_k+\mathbf{B}_{k+1}\mathbf{B}_{k+1}^T, \, k=1, \ldots, K-1, \label{eq:L_1}\\
&\mathbf{L}_K=\mathbf{B}_{K}^T\mathbf{B}_{K}. \label{eq:L_K}
\end{align}
It is worth emphasizing that all Laplacian matrices of intermediate order, i.e. $k=1, \ldots, K-1$, contain two terms: The first term, also known as {\it lower Laplacian}, expresses the lower adjacency of $k$-order simplices; the second terms, also known as {\it upper Laplacian}, expresses the upper adjacency
of $k$-order simplices. So, for example, two edges are lower adjacent if they share a common vertex, whereas they are upper adjacent if they are faces of a common triangle. Note that the vertices of a graph can only be upper adjacent, if they are incident to the same edge. This is why the graph Laplacian $\mathbf{L}_0$ contains only one term.

\section{Spectral simplicial theory}
\label{sec:spectr_theory}
In this paper, we are interested in analyzing signals defined over a simplicial complex. Given a set  ${\cal S}_k$, with elements of cardinality $k+1$, a signal is defined as a real-valued map  on the elements of ${\cal S}_k$, of the form
\begin{equation}
f_{k}: {\cal S}_k \rightarrow \mathbb{R}, \,\, k=0, 1, \ldots
\end{equation}
The order of the signal is one less the cardinality of the elements of ${\cal S}_k$.
Even though our framework is general, in many cases we focus on simplices of order up to two. In that case, we consider a set of vertices $\V$, a set of edges $\E$ and a set of triangles $\T$, of dimension $V$, $E$, and $T$, respectively. We denote with $\cal X (\V, \E, T)$ the associated simplicial complex. The signals over each complex of order $k$, with $k=0, 1$ and $2$, are defined as the following maps: $\bs^0: {\cal V} \rightarrow \mathbb{R}^V$, $\bs^1: {\cal E} \rightarrow \mathbb{R}^E$, and $\bs^2: {\cal T} \rightarrow \mathbb{R}^T$.

Spectral graph theory represents a solid framework to extract features of a graph looking at the eigenvectors of the combinatorial Laplacian  $\mathbf{L}_0$ of order $0$.
The eigenvectors associated with the smallest eigenvalues of $\mathbf{L}_0$ are very useful, for example, to identify clusters \cite{von2007tutorial}. Furthermore, in GSP it is well known that a suitable basis to represent signals defined over the vertices of a graph, i.e. signals of order $0$, is given by the eigenvectors of $\mathbf{L}_0$. In particular, given the eigendecomposition of $\mathbf{L}_0$:
\begin{equation}
\mathbf{L}_0=\mathbf{U}_0 \mathbf{\Lambda}_0\mathbf{U}_0^T,
\end{equation}
the Graph Fourier Transform (GFT) of a signal $\bs^0$ over an undirected graph has been defined as the projection of the signal onto the space spanned by the eigenvectors of $\mathbf{L}_0$, i.e. (see  \cite{ortega2018graph}
and the references therein)
\begin{equation}
\label{s=U s^tilde_1}
\widehat{\bs}^0 \triangleq \mathbf{U}_0^T\, \bs^0.
\end{equation}
Equivalently, a signal defined over the vertices of a graph can be represented as
\begin{equation}
\label{s=U s^tilde}
\bs^0=\mathbf{U}_0\, \widehat{\bs}^0.
\end{equation}
From graph spectral theory, it is well known that the eigenvectors associated with the smallest eigenvalues of $\mathbf{L}_0$ encode information about the clusters of the graph. Hence, the representation given by (\ref{s=U s^tilde}) is particularly suitable for signals that are smooth within each cluster, whereas they can vary arbitrarily across different clusters. For such signals, in fact, the representation in  (\ref{s=U s^tilde}) is {\it sparse} or approximately sparse.

As a generalization of the above approach, we may  represent signals of various order over bases built with the eigenvectors of the corresponding high order Laplacian matrices, given in (\ref{eq:L_1})-(\ref{eq:L_K}). Hence, using the eigendecomposition
\begin{equation}
\mathbf{L}_k=\mathbf{U}_k \mathbf{\Lambda}_k\mathbf{U}_k^T,
\end{equation}
we may define the GFT of order $k$ as the projection of a $k$-order signal onto the eigenvectors of $\mathbf{L}_k$, i.e.
\begin{equation}
\label{k-GFT}
\widehat{\bs}^k \triangleq \mathbf{U}_k^T\, \bs^k,
\end{equation}
so that a signal $\bs^k$ can be represented in terms of its GFT coefficients as
\begin{equation}
\label{s^k=U_k s^tilde^k}
\bs^k=\mathbf{U}_k\, \widehat{\bs}^k.
\end{equation}
Now we want to show  under what conditions (\ref{s^k=U_k s^tilde^k}) is a meaningful representation of a $k$-order signal and what is the meaning of such a representation. More specifically, the goal of this section is threefold: i) we recall first the relations between eigenvectors of various order of $\mathbf{L}_k$; ii) we recall the Hodge decomposition, which is a basic theory showing that the eigenvectors of any order can be split into three different classes, each representing a specific behavior of the signal; iii) we provide a theory showing how to build a {\it topology-aware} unitary basis to represent signals of various order starting only from topological properties.
\subsection{Relations between eigenvectors of different order}
There are interesting relations between the eigenvectors of Laplacian matrices of different order \cite{Horak13}, \cite{steenbergen2013towards} which is useful to recall as they play a key role in spectral analysis. The following properties hold true for the eigendecomposition of $k$-order Laplacian matrices, with $k=1, \ldots, K-1$.\\

\begin{proposition} Given the  Laplacian matrices $\mathbf{L}_k$ of any order $k$, with $k=1, \ldots, K-1$, it holds:
\begin{enumerate}
\item the eigenvectors associated with the nonzero eigenvalues of $\mathbf{B}^T_k\mathbf{B}_k$ are orthogonal to the eigenvectors associated with the nonzero eigenvalues of  $\mathbf{B}_{k+1}\mathbf{B}_{k+1}^T$ and viceversa;
\item if $\bv$ is an eigenvector of $\mathbf{B}_{k}\mathbf{B}_{k}^T$ associated with the eigenvalue $\lambda$, then $\mathbf{B}_{k}^T \bv$ is an eigenvector of $\mathbf{B}^T_k\mathbf{B}_k$, associated with the same eigenvalue;
\item the eigenvectors  associated with the nonzero eigenvalues $\lambda$ of  $\mathbf{L}_k$ are either the eigenvectors of $\mathbf{B}^T_k\mathbf{B}_k$ or those of $\mathbf{B}_{k+1}\mathbf{B}_{k+1}^T$;
\item the  nonzero eigenvalues of $\mathbf{L}_k$ are either the eigenvalues of $\mathbf{B}^T_k\mathbf{B}_k$ or those of $\mathbf{B}_{k+1}\mathbf{B}_{k+1}^T$.
\end{enumerate}
\end{proposition}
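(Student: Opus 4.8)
The plan is to build everything on the single algebraic identity $\mathbf{B}_k\mathbf{B}_{k+1}=\mathbf{0}$ from (\ref{boundary of boundary}), together with the standard orthogonality between the range of a matrix and the kernel of its transpose. Properties 1) and 2) are immediate consequences of these facts, while properties 3) and 4) follow from an invariant-subspace decomposition of $\mathbf{L}_k$ into the two terms that compose it.

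For property 2), which is the classical relationship between the singular subspaces of a matrix, I would simply left-multiply the eigenrelation $\mathbf{B}_k\mathbf{B}_k^T\bv=\lambda\bv$ by $\mathbf{B}_k^T$ to obtain $\mathbf{B}_k^T\mathbf{B}_k(\mathbf{B}_k^T\bv)=\lambda(\mathbf{B}_k^T\bv)$, exhibiting $\mathbf{B}_k^T\bv$ as an eigenvector of $\mathbf{B}_k^T\mathbf{B}_k$ with the same eigenvalue; one only needs $\lambda\neq 0$ to guarantee $\mathbf{B}_k^T\bv\neq\mathbf{0}$, since $\mathbf{B}_k^T\bv=\mathbf{0}$ would force $\lambda\bv=\mathbf{0}$. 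For property 1), I would first observe that any eigenvector $\bv$ of $\mathbf{B}_k^T\mathbf{B}_k$ with nonzero eigenvalue lies in $\mathrm{range}(\mathbf{B}_k^T)$, because $\bv=\lambda^{-1}\mathbf{B}_k^T(\mathbf{B}_k\bv)$; symmetrically, any eigenvector $\bw$ of $\mathbf{B}_{k+1}\mathbf{B}_{k+1}^T$ with nonzero eigenvalue lies in $\mathrm{range}(\mathbf{B}_{k+1})$. Writing $\bv=\mathbf{B}_k^T\ba$ and $\bw=\mathbf{B}_{k+1}\bc$, the orthogonality then drops straight out of the boundary-of-a-boundary identity, since $\bv^T\bw=\ba^T\mathbf{B}_k\mathbf{B}_{k+1}\bc=0$.

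For properties 3) and 4), the key step is to show that the two orthogonal subspaces $\mathrm{range}(\mathbf{B}_k^T)$ and $\mathrm{range}(\mathbf{B}_{k+1})$ are each invariant under $\mathbf{L}_k$, and that on each of them one of the two terms of $\mathbf{L}_k$ vanishes. Indeed, for $\bv\in\mathrm{range}(\mathbf{B}_k^T)$ one has $\mathbf{B}_{k+1}^T\bv=\mathbf{0}$ (again by $\mathbf{B}_k\mathbf{B}_{k+1}=\mathbf{0}$), so $\mathbf{L}_k\bv=\mathbf{B}_k^T\mathbf{B}_k\bv$ remains in $\mathrm{range}(\mathbf{B}_k^T)$; dually, for $\bw\in\mathrm{range}(\mathbf{B}_{k+1})$ one has $\mathbf{B}_k\bw=\mathbf{0}$, so $\mathbf{L}_k\bw=\mathbf{B}_{k+1}\mathbf{B}_{k+1}^T\bw$ remains in $\mathrm{range}(\mathbf{B}_{k+1})$. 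Since $\mathbb{R}^{n_k}$ splits as the orthogonal sum of $\mathrm{range}(\mathbf{B}_k^T)$, $\mathrm{range}(\mathbf{B}_{k+1})$ and $\ker(\mathbf{L}_k)$, and $\mathbf{L}_k$ is symmetric, I would diagonalize $\mathbf{L}_k$ block by block: on the first block it coincides with the lower Laplacian $\mathbf{B}_k^T\mathbf{B}_k$, on the second with the upper Laplacian $\mathbf{B}_{k+1}\mathbf{B}_{k+1}^T$, and on the third it is identically zero. Every eigenvector attached to a nonzero eigenvalue is therefore orthogonal to $\ker(\mathbf{L}_k)$ and can be taken inside one of the two range spaces, where it is an eigenvector of the corresponding single term with the same eigenvalue, yielding both claims at once.

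The main obstacle I anticipate is not in the computations, which are all one-liners, but in the precise reading of statement 3) when the lower and upper Laplacians share a common nonzero eigenvalue $\lambda$. In that degenerate case the $\lambda$-eigenspace of $\mathbf{L}_k$ is the direct sum of a piece sitting in $\mathrm{range}(\mathbf{B}_k^T)$ and a piece in $\mathrm{range}(\mathbf{B}_{k+1})$, so a generic eigenvector of $\mathbf{L}_k$ for $\lambda$ is a mixture of the two and is, strictly speaking, an eigenvector of neither single term. The clean way to handle this is to phrase the conclusion as the existence of an orthonormal eigenbasis of $\mathbf{L}_k$ that respects the orthogonal decomposition above, so that each basis vector lies entirely in one range space; this is exactly what the invariant-subspace argument delivers, and it is the form actually needed for the topology-aware bases constructed later.
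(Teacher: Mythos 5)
Your proof is correct, and for the first two properties it follows essentially the paper's own route: everything rests on the single identity $\mathbf{B}_k\mathbf{B}_{k+1}=\mathbf{0}$, and your property-2 computation (left-multiply the eigenrelation by $\mathbf{B}_k^T$ and reassociate) is exactly the paper's equation (\ref{eq:pro_two}), with the useful extra remark that $\lambda\neq 0$ is what guarantees $\mathbf{B}_k^T\bv\neq\mathbf{0}$. For property 1 the mechanisms differ slightly but equivalently: the paper shows that an eigenvector of $\mathbf{B}_k^T\mathbf{B}_k$ with nonzero eigenvalue is annihilated by $\mathbf{B}_{k+1}\mathbf{B}_{k+1}^T$ (leaving implicit the step that kernel vectors are orthogonal to nonzero-eigenvalue eigenvectors of a symmetric matrix), whereas you place both eigenvectors in $\mathrm{range}(\mathbf{B}_k^T)$ and $\mathrm{range}(\mathbf{B}_{k+1})$ respectively and compute the inner product $\ba^T\mathbf{B}_k\mathbf{B}_{k+1}\bc=0$ directly, which is marginally more self-contained.

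The genuine divergence is in properties 3) and 4), which the paper dispatches with the single sentence that they ``follow from the definition of $\mathbf{L}_k$ and from property 1,'' without an actual argument. You supply one: $\mathrm{range}(\mathbf{B}_k^T)$ and $\mathrm{range}(\mathbf{B}_{k+1})$ are $\mathbf{L}_k$-invariant, on each of them one of the two terms of $\mathbf{L}_k$ vanishes, and together with $\ker(\mathbf{L}_k)$ they give an orthogonal decomposition of the whole space --- in effect you re-derive the Hodge decomposition \eqref{eq:Hodge} (which the paper only introduces in the subsequent subsection) and block-diagonalize $\mathbf{L}_k$ over it. This buys two things the paper's proof does not. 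First, a complete derivation of 3) and 4). Second, and more importantly, the identification of a real imprecision in statement 3): when the lower and upper Laplacians share a common nonzero eigenvalue $\lambda$, the $\lambda$-eigenspace of $\mathbf{L}_k$ mixes the two ranges, and a generic eigenvector of $\mathbf{L}_k$ is then an eigenvector of \emph{neither} single term, so the statement is only true when read as ``there exists an orthonormal eigenbasis of $\mathbf{L}_k$ each of whose nonzero-eigenvalue vectors lies entirely in one of the two ranges.'' Your reformulation is the correct fix, and it is also the form the paper implicitly relies on later, when it takes as a basis for edge signals the eigenvectors associated with the nonzero eigenvalues of $\mathbf{B}_2\mathbf{B}_2^T$ and of $\mathbf{B}_1^T\mathbf{B}_1$ together with a basis of $\ker(\mathbf{L}_1)$.
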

\begin{proof}
All above properties are easy to prove. Property 1) is straightforward: If $\mathbf{B}_k^T \mathbf{B}_k \bv=\lambda \bv$, then
\begin{equation}
\mathbf{B}_{k+1}\mathbf{B}_{k+1}^T \lambda \bv= \mathbf{B}_{k+1}\mathbf{B}_{k+1}^T \mB_k^T \mB_k \bv  = \b0
\end{equation}
because of (\ref{boundary of boundary}). Similarly, for the converse.
Property 2) is also straightforward: If $\bv$ is an eigenvector of $\mathbf{B}_k \mathbf{B}_k^T$ associated with a nonvanishing eigenvalue $\lambda$, then
\begin{equation} \label{eq:pro_two}
(\mathbf{B}_k^T \mathbf{B}_k) \mathbf{B}_k^T \bv=\mathbf{B}_k^T (\mathbf{B}_k \mathbf{B}_k^T) \bv =\lambda \mathbf{B}_k^T \bv.
\end{equation}
Finally, properties 3) and 4) follow from the definition of $k$-order Laplacian, i.e.   $\mathbf{L}_k=\mathbf{B}^T_k\mathbf{B}_k+\mathbf{B}_{k+1}\mathbf{B}_{k+1}^T$ and from property 1).
\end{proof}

\noindent{\bf Remark}: Recalling that the eigenvectors associated with the smallest nonzero eigenvalues of $\mathbf{L}_0$ are smooth within each cluster, applying property 2) to the case $k=1$, it turns out that the eigenvectors of $\mathbf{L}_1$ associated with the smallest eigenvalues of $\mathbf{B}_{1}^T\mathbf{B}_{1}$ are approximately null over the links within each cluster, whereas they assume the largest (in modulus) values on the edges across clusters. These eigenvectors are  then useful to highlight {\it inter-cluster edges}.

\subsection{Hodge decomposition}
Let us  consider the eigendecomposition of the $k$-th order Laplacian, for $k=1, \ldots, K-1$,
\begin{equation}
\mathbf{L}_k=\mathbf{B}^T_k\mathbf{B}_k+\mathbf{B}_{k+1}\mathbf{B}_{k+1}^T=\mathbf{U}_k \mathbf{\Lambda}_k \mathbf{U}_k^T.
\end{equation}
The structure of $\mathbf{L}_k$, together with the property $\mathbf{B}_k\mathbf{B}_{k+1}=\mathbf{0}$, induces an interesting decomposition of the space $\mathbb{R}^{D_k}$ of signals of order $k$ of dimension ${D_k}$. First of all, the property $\mathbf{B}_k\mathbf{B}_{k+1}=\mathbf{0}$ implies
${\rm img}(\mathbf{B}_{k+1})\subseteq{\rm ker}(\mathbf{B}_{k})$. Hence, each vector $\bx \in  {\rm ker}(\mathbf{B}_k)$ can be decomposed into two parts: one belonging to ${\rm img}(\mathbf{B}_{k+1})$ and one orthogonal to it. Furthermore,  since the whole space $\mathbb{R}^{D_k}$ can always be    written as $\mathbb{R}^{D_k}\equiv {\rm ker}(\mathbf{B}_k)  \oplus {\rm img}(\mathbf{B}_k^T)$,  it is possible to decompose  $\mathbb{R}^{D_k}$ into the direct sum \cite{Lim}
\beq \label{eq:Hodge}
\begin{split}
\mathbb{R}^{D_k} &\equiv \text{img}(\mathbf{B}_{k}^{T}) \oplus \text{ker}(\mathbf{L}_k) \oplus \text{img}(\mathbf{B}_{k+1}),
\end{split}
\eeq
where the vectors in $\text{ker}(\mathbf{L}_k)$ are also in $\text{ker}(\mathbf{B}_k)$ and $\text{ker}(\mathbf{B}_{k+1}^T)$.
This implies that, given any signal $\bs^k$ of order $k$, there always exist three signals $\bs^{k-1}$, $\bs_{H}^k$, and $\bs^{k+1}$, of order $k-1$, $k$, and $k+1$, respectively, such that $\bs^k$ can always be expressed as the sum of three {\it orthogonal} components:
\begin{equation}
\label{s_decomp}
    \bs^k=\mathbf{B}_{k}^T\, \bs^{k-1}+\bs_{H}^k +\mathbf{B}_{k+1}\, \bs^{k+1}.
\end{equation}
This decomposition is known as  \textit{Hodge decomposition} \cite{eckmann1944harmonische} and it is the extension of the Hodge theory for differential forms on Riemannian manifold to  simplicial complexes. The subspace $\text{ker}(\mathbf{L}_k)$ is  called harmonic subspace since each
$\bs_{H}^k \in \text{ker}(\mathbf{L}_k)$ is a solution of the {\it discrete} Laplace equation $$\mathbf{L}_k\,\bs_{H}^k= (\mathbf{B}_k^T \mathbf{B}_k + \mathbf{B}_{k+1} \mathbf{B}_{k+1}^T) \, \bs_{H}^k=\mathbf{0}.$$
When embedded in a real space, a fundamental property of geometric simplicial complexes of order $k$ is that the dimensions of $\text{ker}(\mathbf{L}_k)$, for $k=0, \ldots, K$ are \textit{topological invariants} of the $K$-simplicial complex, i.e. topological features that are preserved under homeomorphic transformations of the space. The dimensions of $\text{ker}(\mathbf{L}_k)$ are also known as {\it Betti numbers} $\beta_k$ of order $k$:  $\beta_0$ is the number of connected components of the graph, $\beta_1$ is the number of holes, $\beta_2$ is the number of cavities, and so on \cite{eckmann1944harmonische}.

The decomposition in (\ref{s_decomp}) shows an interesting interplay between signals of different order, which we will exploit in the ensuing sections. Before proceeding, it is useful to clarify the meaning of the terms appearing in (\ref{s_decomp}).  Let us consider, for simplicity, the analysis of flow signals, i.e. the case $k=1$.
To this end, it is useful to introduce the curl and divergence operators, in analogy with their continuous time counterpart operators applied to vector fields. More specifically, given an edge signal $\bs^1$, the (discrete) curl operator is defined as
\begin{equation}
    \mbox{curl}(\bs^1)=\mB_2^T \bs^1.
\end{equation}
This operator maps the edge signal $\bs^1$ onto a signal defined over the triangle sets, i.e. in $\mathbb{R}^T$, and it is straightforward to verify that the generic $i$-th entry of the $\mbox{curl}(\bs^1)$ is a measure  of the {\it flow circulating along the edges of the $i$-th triangle}.
As an example for the simplicial complex in Fig. \ref{cut}, we  have
\beq \label{B2_ex}
\mB_2^T=\left[\begin{array}{cccccccccccc}
0 & 0 &1 & \!\!\!\!-1 & 1 & 0 & 0& 0& 0& 0& 0\\
0 & 1 & \!\!\!\!-1 & 0 & 0 & 0 & 1 & 0 & 0 & 0 & 0\\
0 & 0 & 0 & 0 & 0 & 0 &\!\!\!\!-1 & 1 & 1 & 0 & 0
\end{array}\right]
\eeq
and, defining  $\bs^1=[e_1,e_2,\ldots,e_{10}, e_{11}]^T$, we get
 $\mbox{curl}(\bs^1)=[e_3-e_4+e_5, e_2-e_3+e_7,-e_7+e_8+e_9]^T$. Each entry of $\bs^1$ is then the circulation over the corresponding triangle.

Similarly, the (discrete) divergence operator maps the edge signal $\bs^1$ onto a signal defined over the vertex space, i.e.  ${\mathbb R}^V$, and it is defined as
\begin{equation}
    \mbox{div}(\bs^1)=\mB_1 \bs^1.
\end{equation}
Again, by direct substitution, it turns out that the $i$-th entry of $\mbox{div}(\bs^1)$ represents the {\it net-flow passing through the $i$-th vertex}, i.e. the difference between the inflow and outflow at node $i$. Thus a non-zero divergence reveals the presence of a source or sink node.
For the example in Fig. \ref{cut}, we  get
\beq
\mB_1=\left[\begin{array}{cccccccccccccccc}
\!\!\!\!-1 & 0 & 0 & 0 & 0 & \!\!\!\!-1 & 0& 0& 0& 0& 0\\
1 & \!\!\!\!-1 & \!\!\!\!-1 & \!\!\!\!-1 & 0 & 0 & 0 & 0 & 0 & 0 & 0\\
0 & 1 & 0 & 0 & 0 & 0 &\!\!\!\!-1 & 0 & \!\!\!\!-1 &\!\!\!\! -1 & 0 \\
0 & 0 & 0 & 0 & 0 & 0 & 0 & 0 & 0 & 1 & \!\!\!\!-1 \\
0 & 0 & 0 & 0 & 0 & 0 & 0 &\!\!\!\! -1 & 1 & 0 & 1\\
0 & 0 & 1 & 0 & \!\!\!\!-1 & 0 & 1 & 1 & 0 & 0 & 0\\
0 & 0 & 0 & 1 & 1 & 1 & 0 & 0 & 0 & 0 & 0
\end{array}\right]
\eeq so that $\mbox{div}(\bs^1)=[-e_1-e_6, e_1-e_2-e_3-e_4, e_2-e_7-e_9-e_{10},e_{10}-e_{11},-e_{8}+e_9+e_{11},e_3-e_5+e_7+e_8,e_4+e_5+e_6]^T$.\\
If we consider equation (\ref{s_decomp}) in the case $k=1$, \begin{equation}
\label{s_decomp_k=1}
   \bs^1=\mathbf{B}_{1}^T\, \bs^{0}+\bs_{H}^1 +\mathbf{B}_{2}\, \bs^{2},
\end{equation}
recalling that $\mB_1 \mB_2=\mathbf{0}$, it is easy to check that the first component in (\ref{s_decomp_k=1}) has zero curl, and then it may be called an {\it irrotational} component, whereas the third component has zero divergence, and then it may be called a {\it solenoidal} component, in analogy to the calculus terminology used for vector fields. The {\it harmonic} component $\mathbf{s}_{H}^1$ is a flow vector that is both curl-free and divergence-free.
Notice also that, in \eqref{s_decomp_k=1}, $\mathbf{B}_{1}^T\, \bs^{0}$ represents the (discrete) {\it gradient} of $\bs^0$.
\subsection{Topology-aware unitary basis}
In this section, we propose a  method to build a unitary basis to represent edge signals, reflecting  topological properties of the complex, more specifically triple-wise relations. The idea underlying the method is to search a basis that minimizes a
real-valued function capturing triple-wise relations.  For example, in the graph case, a key topological property is connectivity, which is well captured by the {\it cut} function. The associated real-valued function can be built using the so called {\it Lov\'{a}sz extension} \cite{Bach2013}, \cite{Lovasz1983} of the cut size.
\begin{figure}[!htp]
	\centering
	\includegraphics[width=7.8cm, height =5.6cm]{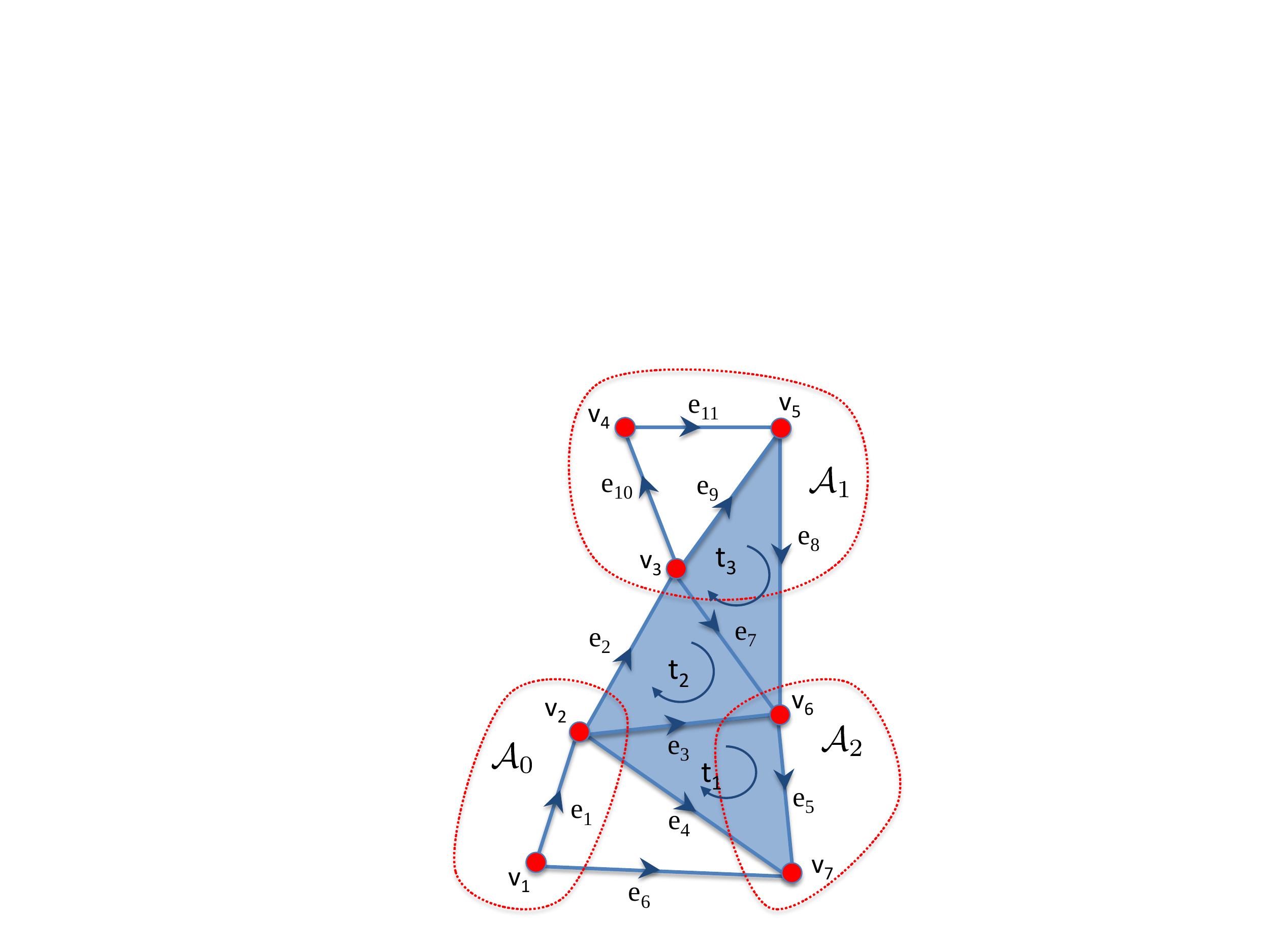}
	\caption{Cut of order $1$.}
	\label{cut}
\end{figure}
More specifically, given a graph ${\cal G({\cal V}, {\cal E})}$
and a partition of its vertex set ${\cal V}$ in two sets ${\cal A}_0$ and ${\cal A}_1$, with  ${\cal A}_0 \cup {\cal A}_1={\cal V}$ and ${\cal A}_0 \cap {\cal A}_1=\emptyset$, the cut size is defined as
\begin{equation}
F_0({\cal A}_0, {\cal A}_1)={\rm cut}({\cal A}_0, {\cal A}_1)\triangleq \sum_{i\in {\cal A}_0}\sum_{j\in {\cal A}_1} a_{ij}
\end{equation}
where $a_{ij}=1$ if $(i, j)\in {\cal{E}}$ and  $a_{ij}=0$ otherwise.
$F_0({\cal A}_0, {\cal A}_1)$ is a {\it set} function and is known to be a submodular function \cite{Bach2013}. Now, we want to translate the set function $F_0({\cal A}_0, {\cal A}_1)$ onto a real-valued function $f_0(\bx^0)$, defined over ${\mathbb R}^V$, to be used for the subsequent optimization. This can be done using the so called {\it Lov\'{a}sz extension} \cite{Bach2013},  which is equal to:
\begin{equation}
\label{l1-total variation}
    f_0(\bx^0)=\sum_{i=1}^{V}\sum_{j=1}^{V} a_{ij} |x_i^0-x_j^0|.
\end{equation}
with $\bx^0 \in \mathbb{R}^{V}$.
The function in \eqref{l1-total variation} measures the total variation of a signal defined over the nodes of a graph and then it can be used as a regularization function, whenever the signal of interest is known to be a smooth function.
The function in \eqref{l1-total variation} formed the basis of the method proposed in \cite{sardellitti2017graph} to build a unitary basis for analyzing signals defined over the vertices of a graph. More specifically, in \cite{sardellitti2017graph} the basis was built by solving the following optimization problem
\begin{equation}
\label{opt_U0}
\begin{array}{lll}
\mathbf{U}\triangleq \left(\mathbf{u}_1, \ldots, \mathbf{u}_{V}\right)= & \!\!\!  \underset{\mathbf{U} \in \mathbb{R}^{V \times V}}{\arg \min} \; \; \ds \sum_{n=2}^{V} f_0(\mathbf{u}_n)\medskip \\
&\; \,\text{s.t.} \qquad  \mathbf{U}^T\mathbf{U}=\mathbf{I}, \,\,\, \mathbf{u}_1=\frac{1}{\sqrt{V}} \mathbf{1}.
\end{array}
\end{equation}
In the above problem, the objective function to be minimized is convex, but the problem is non-convex because of the unitarity constraint.
To simplify the search for the basis, the objective function in \eqref{opt_U0} can be relaxed to become
\begin{equation}
\label{relaxed f_0}
    f_0^r(\bx^0)=\sum_{i=1}^{V}\sum_{j=1}^{V}  a_{ij} (x_i^0-x_j^0)^2.
\end{equation}
Substituting (\ref{relaxed f_0}) in (\ref{opt_U0}), we still have a non-convex problem. However, its solution is known to be given by the eigenvectors of $\mathbf{L}_0$. From this perspective, the basis typically used in the GFT, built as the eigenvectors of the Laplacian matrix, can be interpreted as the solution of the above optimization problem, after relaxation.\\
Now, we extend the previous approach to find a regularization function capturing triple-wise relations, useful to analyze signals defined over the {\it edges} of a graph. In the previous case, the analysis of {\it node} signals started from the {\it bi-partition} of a graph and then the construction of a real-valued extension of the cut size. Here, to analyze {\it edge} signals, we need to start from the {\it tri-partition} of the discrete set and define a set function counting the number of {\it triangles} gluing the three sets together. The function to be minimized should then be the real valued (Lov\'{a}sz) extension of such a set function.

The combinatorial study of simplicial complexes has attracted increasing attention and the generalization of Cheeger inequalities to simplicial complexes has been the subject of several works   \cite{Tessler}, \cite{Steenbergen}. In particular,  Hein \textit{et al.} introduced the total variation on hypergraphs as the Lov\'{a}sz extension of the hypergraph cut, defined as the size of the hyperedges set connecting a bipartition of the vertex set \cite{Hein13}.
In this work, we derive a Lov\'{a}sz extension {\it defined over the edge set}.
More formally,
we study a simplicial complex ${\cal X}({\cal V}, {\cal E}, {\cal T})$ of order two, as in the example sketched in Fig.\ref{cut}, and consider the partition of ${\cal V}$ in {\it three} sets $({\cal A}_0, {\cal A}_1, {\cal A}_2)$.
The {\it triple-wise coupling} function is now defined as
\begin{equation} \label{cut_trian}
F_1({\cal A}_0, {\cal A}_1, {\cal A}_2)=\sum_{i\in {\cal A}_0}\sum_{j\in {\cal A}_1}\sum_{i\in {\cal A}_2} a_{ijk}
\end{equation}
where  $a_{ijk}=1$ if $(i, j, k)\in {\cal{T}}$ and  $a_{ijk}=0$ otherwise.
Our main result, stated in the following theorem, is that the
Lov\'{a}sz extension of the triple-wise coupling function gives rise to a measure of the curl of edge signals along triangles.
\begin{theorem}
\textit{Let $\mathcal{A}_0,\mathcal{A}_1,\mathcal{A}_2$ be a partition of the vertex set $\mathcal{V}$ of the $2$-dimensional simplicial complex
$\mathcal{X}(\mathcal{V},\mathcal{E},\mathcal{T})$ with $|\mathcal{E}|=E$, $|\mathcal{T}|=T$.
 Then the Lov\'{a}sz  extension $f : \mathbb{R}^{E} \rightarrow \mathbb{R}$, evaluated at $\bx^1 \in \mathbb{R}^{E}$, of the  triple-wise coupling size $F_1(\mathcal{A}_0,\mathcal{A}_1,\mathcal{A}_2)$ defined in (\ref{cut_trian}), is
\beq \label{eq:lovasz}
f_1(\bx^1)= \sum_{i,j,k=1}^{E} a_{ijk} |x^1_i-x^1_j+x^1_k|\,= \sum_{j=1}^{T} \left|\sum_{i=1}^{E} B_2(i,j) x_i^1\right|
\eeq
where $B_2(i,j)$  are the edge-triplet incidence coefficients defined in (\ref{inc_coeff}).}
\end{theorem}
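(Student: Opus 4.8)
The plan is to mirror, one order up, the derivation that produced the total-variation extension $f_0(\bx^0)=\sum_{ij}a_{ij}|x^0_i-x^0_j|=\|\mathbf B_1^T\bx^0\|_1$ from the cut set function $F_0$. Two facts drive the argument: the Lov\'asz extension is additive with respect to sums of set functions, and the triple-wise coupling decomposes over triangles. So I would first write $F_1=\sum_{t\in\mathcal T} g_t$, where $g_t(\mathcal A_0,\mathcal A_1,\mathcal A_2)=1$ if the three vertices of triangle $t$ land in three distinct blocks of the partition and $0$ otherwise. By additivity it then suffices to show that the extension of the single-triangle indicator $g_t$, viewed as a function of the edge variables, equals $|\sum_i B_2(i,t)x^1_i|$; summing over $t$ yields both equalities in (\ref{eq:lovasz}), the second being just the definition (\ref{inc_coeff}) of $\mathbf B_2$ applied to the three edges of $t$ (the incidence signs $+1,-1,+1$ realizing the alternating pattern $x_i-x_j+x_k$).

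The second step is the combinatorial bridge between the vertex partition and the edge signal. I would record the elementary fact that a triangle is \emph{rainbow} (one vertex per block) if and only if all three of its edges are cut by the partition, i.e.\ iff its three vertices carry three distinct labels $\phi:\mathcal V\to\{0,1,2\}$. Encoding the partition on the edges by the signed cut indicator $x_{[u,v]}=\operatorname{sign}(\phi(v)-\phi(u))\in\{-1,0,1\}$, a direct check on the six label orderings shows that the oriented boundary sum $\sum_i B_2(i,t)x^1_i=\operatorname{curl}_t(\bx^1)$ equals $\pm1$ exactly on rainbow triangles and vanishes otherwise (including when two labels coincide). Hence $|\operatorname{curl}_t|$ reproduces $g_t$ on every partition-induced configuration. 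This is the oriented analogue of the identity ``edge $(i,j)$ is cut $\Leftrightarrow |x_i-x_j|=1$'' underlying the order-zero case.

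With agreement on the discrete configurations in hand, I would invoke the defining construction of the Lov\'asz extension (the level-set / sorted-coordinate integral of \cite{Bach2013}) to extend from $\{-1,0,1\}$-valued edge signals to arbitrary $\bx^1\in\mathbb R^E$, and verify that $|\operatorname{curl}_t(\bx^1)|$ is exactly the function produced: it is positively homogeneous, piecewise linear, and linear on each region where the ordering of the edge values (equivalently, the sign of the curl) is fixed, so it matches the extension region by region. Carrying out the threshold integral on the three edges of a single triangle is the computational analogue of the one-line integral $\int\operatorname{\mathbb 1}[\min(x_i,x_j)<\alpha\le\max(x_i,x_j)]\,d\alpha=|x_i-x_j|$ that settles the cut case.

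The main obstacle is the orientation. Unlike the cut function, whose single-edge extension is the genuinely submodular $|x_i-x_j|$, the curl magnitude $|x_i-x_j+x_k|$ carries the incidence signs of $\mathbf B_2$ and is \emph{not} the Lov\'asz extension of any ordinary symmetric submodular function on the three edges: its extreme directions $\pm(1,-1,1)$ do not lie in a common constant-sum hyperplane, so the classical submodular machinery does not apply verbatim. The crux is therefore to handle the signs correctly, namely to set up the partition-to-edge encoding so that the oriented boundary operator appears, and to check that the absolute value emerges uniformly across all orderings of the three edge values, in particular at the degenerate loci where $\operatorname{curl}_t$ changes sign. Once the single-triangle case is verified, additivity over $\mathcal T$ closes the proof.
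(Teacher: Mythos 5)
Your overall architecture mirrors the paper's: encode the tri-partition on the edges via a signed cut indicator, decompose over triangles, compute a per-triangle extension, and assemble by additivity. Your second paragraph (a triangle is rainbow iff all three of its edges are cut, and the oriented boundary sum equals $\pm 1$ on rainbow configurations and $0$ otherwise) is correct, and it is exactly the paper's verification that $\mathbf{B}_2^T\mathbf{1}_{\mathcal{E}_T}$ computes the triangle cut. The genuine gap is in your third paragraph. You justify identifying the per-triangle extension with $|x_i-x_j+x_k|$ by asserting that this function is linear on each region where the ordering of the edge values is fixed, adding ``(equivalently, the sign of the curl)''. That equivalence is false: the points $(0,1,2)$ and $(-3,-1,0)$ both satisfy $x_i\le x_j\le x_k$, yet $x_i-x_j+x_k$ equals $+1$ for the first and $-2$ for the second. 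Consequently $|x_i-x_j+x_k|$ is \emph{not} linear on the ordering cone $\{x_i\le x_j\le x_k\}$: its value at the midpoint $(-1.5,0,1)$ is $0.5$, while linearity would force $1.5$. Since the sorted-coordinate formula (\ref{Lov_def}) makes every Lov\'asz extension linear on each ordering cone, $|x_i-x_j+x_k|$ cannot coincide, cone by cone, with the Lov\'asz extension of \emph{any} ordinary set function on the three edges; in particular the reduction you set up in your first paragraph --- show that the extension of the rainbow indicator $g_t$ equals $\bigl|\sum_i B_2(i,t)x_i^1\bigr|$ --- cannot be carried out (completing $g_t$ by zero off the partition-induced configurations, its classical extension is $\min(x_i,x_j,x_k)$, not the curl magnitude). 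You flag precisely this obstruction in your closing paragraph, but the region-by-region matching you propose runs into it rather than around it.

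The paper threads the needle differently, and this is the step your proposal is missing. The per-triangle object it extends is not the unsigned rainbow indicator but the \emph{signed, modular} set function $\tilde G_i$ of (\ref{G_njk}), namely $\tilde G_i(C)=\mathbf{1}_C(i_0)-\mathbf{1}_C(i_1)+\mathbf{1}_C(i_2)$, which carries the incidence signs of $\mathbf{B}_2$ and takes values in $\{-1,0,1,2\}$. The case check over all six orderings shows that its Lov\'asz extension is the \emph{linear} form $x_{i_0}-x_{i_1}+x_{i_2}$ on every cone, with no absolute value appearing at this stage. The absolute values in (\ref{eq:lovasz}) enter earlier, at the set-function level: the paper defines $G(\mathcal{E}_T,\overline{\mathcal{E}}_T)=\sum_j |(\mathbf{B}_2^T\mathbf{1}_{\mathcal{E}_T})_j|$ in (\ref{G_A}), identifies it with $F_1$ in (\ref{eq:G_F1}), and decomposes it triangle-wise in (\ref{G_2}); only then are additivity and the per-triangle linear extensions combined to produce (\ref{eq:lovasz}). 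So to repair your argument you should extend the signed per-triangle function rather than the rainbow indicator, and let the absolute value come from the decomposition (\ref{G_2}) of the edge-set function, not from a cone-wise identification of $|x_i-x_j+x_k|$ with a classical Lov\'asz extension.
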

\begin{proof} Please see Appendix A.
\end{proof}\\
The function in \eqref{eq:lovasz} represents the sum of the absolute values of the curls over all the triangles. Differently from \cite{Hein13}, where the total variation over a hypergraph was built from a bipartition of a discrete set and it was a function defined over $\mathbb{R}^V$, in our case, we start from a {\it triparition}  of the original set and we build a real-valued extension, defined over $\mathbb{R}^E$, i.e. a space of dimension equal to the number of edges. A suitable unitary basis for representing (curling) edge signals can then be found as the solution of the following optimization problem
\begin{equation}
\label{opt_U1}
\begin{array}{lll}
\mathbf{U}\triangleq \left(\mathbf{u}_1, \ldots, \mathbf{u}_{E}\right) = &  \! \!\!\underset{\mathbf{U} \in\mathbb{R}^{E \times E}}{\arg \min} \; \; \ds\sum_{n=1}^{E} f_1(\mathbf{u}_n) \medskip\\
&\;\text{s.t.} \qquad \, \mathbf{U}^T\mathbf{U}=\mathbf{I}.
\end{array}
\end{equation}
The objective function is convex, but the problem is non-convex because of the unitarity constraint. The above problem can be solved resorting to the algorithm proposed in \cite{sardellitti2017graph}, tuned to the new objective function given in \eqref{eq:lovasz}. Alternatively,  similarly to what is typically done with graphs,  $f_1(\bx^1)$ can be relaxed and replaced with
\begin{equation}
\label{relaxed_TV}
    f_1^r(\bx^1)=\ds \sum_{j=1}^{T} \left(\sum_{i=1}^{E} B_2(i,j) x^1_i\right)^2= \bx^{1 \, T} \mB_2 \mB_2^T \bx^1.
\end{equation}
Substituting this function back to \eqref{opt_U1}, the solution is given by the eigenvectors  of $\mB_2 \mB_2^T$.

The above arguments show that the eigenvectors of $\mB_2 \mB_2^T$ provide a suitable basis to analyze edge signals having a curling behavior. However, as we know from the Hodge decomposition recalled in the previous section, edge signals contain other useful components that are orthogonal to solenoidal signals, namely irrotational and harmonic components. Then, in general, it is useful to take as a unitary basis for analyzing edge signals {\it all} the eigenvectors of $\mL_1$, i.e. the eigenvectors associated to the nonzero eigenvalues of $\mB_2 \mB_2^T$ and of $\mB_1^T \mB_1$, plus the eigenvectors associated to the kernel of $\mL_1$.
In summary, the theory developed in this section provides a further argument, rooted on intrinsic topological properties of the simplicial complex on which the signal is defined,  to exploit the Hodge decomposition to find a suitable basis for the analysis of edge signals. Furthermore, the theory says that using the Laplacian eigenvectors comes as a consequence of a relaxation step.
In many cases, whenever the numerical complexity is not an issue, it may be better to keep the $\ell_1$-norm objective functions given in \eqref{l1-total variation} and \eqref{eq:lovasz}, as this would yield more sparse, and then more informative, representations, as a generalization of what done for directed graphs in \cite{sardellitti2017graph}.

\section{Edge flows estimation}
\label{sec:edge_flows_est}
Let us consider now the observation of an edge signal affected by additive noise. Our goal in this section, is to formulate the denoising problem as a constrained problem, rooted on the Hodge decomposition. Denoising edge signals embedded in noise was already considered in \cite{grady2010},\cite{Segarra_018},\cite{Segarra_019} and, more specifically, in \cite{jiang2011}. The formulation proposed in \cite{jiang2011} was based on the definition of signals over simplicial complexes as skew-symmetric arrays of dimension equal to the order of the corresponding simplex plus one. So, the edge flow was represented as a matrix (i.e., a two-dimensional array) satisfying the constraint $X(i, j)=-X(j, i)$. A signal defined over the triangles was represented as an array of dimension three, satisfying the property $\Phi(i, j, k)=\Phi(j, k, i)=\Phi(k, i, j)=-\Phi(j, i, k)=-\Phi(i, k, j)=-\Phi(k, j, i)$. Our aim in this section, is to formulate the denoising optimization problem dealing only with vectors.
Based on \eqref{s_decomp}, the observed vector can be modeled as
\begin{equation}
\label{noisy_observation}
    \bx^1=\mathbf{B}_{1}^T\, \bs^{0}+\bs_{H}^1 +\mathbf{B}_{2}\, \bs^{2}+\bv^1
\end{equation}
where $\bv^1$ is noise. Let us suppose, for simplicity, that the noise vector is Gaussian, with zero-mean entries all having the same variance $\sigma_n^2$. The optimal estimator can then be formulated as the solution of the following problem
\beq \label{eq:p_tot}
\begin{array}{lll}
(\hat{\bs}^0,\hat{\bs}^2,\hat{\bs}^1_{\text{H}})= \hspace{-0.3cm}
 \underset{{{\bs}^0 \in \mathbb{R}^{V},{\bs}^2 \in \mathbb{R}^{T} ,{\bs}^1_{\text{H}}\in \mathbb{R}^{E}}}{\text{argmin}}  &\hspace{-0.52cm} \parallel \mathbf{B}_2 \bs^2+
\mathbf{B}_1^T \bs^0+\bs^1_H-\bx^1 \!\parallel^2   \\
\hspace{3cm} \text{s.t.} &  \hspace{-0.41cm} \mathbf{B}_1\bs^1_H=\mathbf{0} \medskip \\
&  \hspace{-0.41cm}  \mathbf{B}_2^T\bs^1_H=\mathbf{0}\qquad (\mathcal{Q}).\\
 \end{array}
\eeq
Note that problem  $\mathcal{Q}$ is convex. Then, there exists multipliers $\boldsymbol{\lambda}_1 \in
\mathbb{R}^{V},\boldsymbol{\lambda}_2 \in
\mathbb{R}^{T}$ such that the tuple
$(\hat{\bs}^0,\hat{\bs}^2,\hat{\bs}^1_{\text{H}},\boldsymbol{\lambda}_1, \boldsymbol{\lambda}_2)$ satisfies the Karush-Kuhn-Tucker (KKT) conditions of
$\mathcal{Q}$ (note that Slater's constraint qualification is satisfied \cite{boyd2004convex}).
The associated Lagrangian function is
\beq
\begin{split}
\mathcal{L}({\bs}^0,{\bs}^2,{\bs}^1_{\text{H}},\boldsymbol{\lambda}_1, \boldsymbol{\lambda}_2)=(\mathbf{B}_2 \bs^2+
\mathbf{B}_1^T \bs^0 +\bs^1_H -\bx^1)^T \\ \cdot (\mathbf{B}_2 \bs^2+
\mathbf{B}_1^T \bs^0 +\bs^1_H -\bx^1)+\boldsymbol{\lambda}_1^T \mathbf{B}_1 \bs^1_H+
\boldsymbol{\lambda}_2^T \mathbf{B}_2^T \bs^1_H.
\end{split}
\eeq
Exploiting the orthogonality property $\mathbf{B}_1 \mathbf{B}_2=\mathbf{0}$, it is easy to get the following KKT conditions
\beq \hspace{-0.1cm}
\begin{array}{lllll}
\text{(a)} & \!\!\! \nabla_{\bs^0}\mathcal{L}({\bs}^0,{\bs}^2,{\bs}^1_{\text{H}},\boldsymbol{\lambda}_1, \boldsymbol{\lambda}_2)= \mathbf{B}_1 \mathbf{B}_1^T \bs^0- \mathbf{B}_1 \bx^1=\mathbf{0} \medskip\\
\text{(b)} & \!\!\! \nabla_{\bs^2}\mathcal{L}({\bs}^0,{\bs}^2,{\bs}^1_{\text{H}},\boldsymbol{\lambda}_1, \boldsymbol{\lambda}_2)= \mathbf{B}_2^T \mathbf{B}_2 \bs^2- \mathbf{B}_2^T \bx^1=\mathbf{0}\medskip\\
\text{(c)} & \!\!\! \nabla_{\bs^1_{H}}\mathcal{L}({\bs}^0,{\bs}^2,{\bs}^1_{\text{H}},\boldsymbol{\lambda}_1, \boldsymbol{\lambda}_2)\!=\!
\bs^{1}_{H}-\bx^{1}\!+\mathbf{B}_1^T \boldsymbol{\lambda}_1\!+\mathbf{B}_2 \boldsymbol{\lambda}_2\!=\!\mathbf{0}\medskip\\
\text{(d)} &  \!\!\! \mathbf{B_1}\bs^1_H=\mathbf{0}, \;\, \mathbf{B}_2^T\bs^1_H=\mathbf{0}\medskip\\
\text{(e)} & \!\!\! \boldsymbol{\lambda}_1 \in
\mathbb{R}^{V},\boldsymbol{\lambda}_2 \in
\mathbb{R}^{T}.
\end{array}  \nonumber
\eeq
Note that conditions (a)-(c) reduce to
\beq
\begin{split}
\label{eq:all_cnd}
\text{(a)}  & \; \mathbf{L}_0 \bs^0= \mathbf{B}_1 \bx^1 \\
\text{(b)} &  \;\mathbf{B}_2^T \mathbf{B}_2 \bs^2= \mathbf{B}_2^T \bx^1\\
\text{(c)} & \;\bs^{1}_{H}=\bx^{1}-\mathbf{B}_1^T \boldsymbol{\lambda}_1-\mathbf{B}_2 \boldsymbol{\lambda}_2.\\
\end{split}
\eeq
Multiplying both sides of condition (c) by $\mathbf{B}_2^T$, and using the second equality in (d) and condition (b), we get
\beq
\mathbf{B}_2^T \mathbf{B}_2 \bs^2= \mathbf{B}_2^T \mathbf{B}_2\boldsymbol{\lambda}_2.
\eeq
This means that $\bs^2$ and $\blambda_2$ may differ only by an additive vector lying in the nullspace of $\mathbf{B}_2^T \mathbf{B}_2$. Let us set  $\bs^2=\boldsymbol{\lambda}_2+\bc_2$, with $\mathbf{B}_2^T \mathbf{B}_2\bc_2=\b0$.
Similarly,  multiplying (c)  by $\mathbf{B}_1$ and using the first equality in (d) and condition (a), we obtain
\beq
\mathbf{B}_1 \mathbf{B}_1^T \bs^0= \mathbf{B}_1 \mathbf{B}_1^T \boldsymbol{\lambda}_1,
\eeq
which implies that $\bs^0=\boldsymbol{\lambda}_1+\bc_1$, with $\bc_1$ such that $\mathbf{B}_1 \mathbf{B}_1^T \bc_1=\b0$.
Thus, condition (c) reduces to
  \beq
 \bs^{1}_{H}=\bx^{1}-\mathbf{B}_1^T \bs^0-\mathbf{B}_2 \bs^2
 \eeq
 which says, as expected, that we can derive the harmonic component by subtracting the estimated solenoidal and irrotational parts from
 the observed flow signal $\bx^1$.
To recover the irrotational flow $\bs^1_{\text{irr}}$ from the $0$-order signal $\bs^0$ we need to solve equation (a) in  (\ref{eq:all_cnd}).
Note that $\mathbf{L}_0$ is not invertible. For connected graphs, it has rank $V-1$ and its kernel is the span of the vector  $\mathbf{1}$ of all ones. However, since the vector $\mathbf{b}=\mathbf{B}_1 \bx^{1}$ is also orthogonal to $\mathbf{1}$, the normal equation $\mathbf{L}_0 \bs^{0}=\mathbf{B}_1 \bx^{1}$ admits the  nontrivial solution (at least for connected graphs):
$$\bs^{0}=\mathbf{L}_0^{\dagger}  \mathbf{B}_1 \bx^{1}$$
where $^{\dagger}$ denoted the Moore-Penrose pseudo-inverse.

Similarly, the $2$-order signal $\bs^2$, solution of the  second equation in  (\ref{eq:all_cnd}), can be obtained as
$${\bs}^2=(\mathbf{B}_2^T \mathbf{B}_2)^{\dagger} \mathbf{B}_2^T \bx^1$$
since $\mathbf{B}_2^T \bx^1$  is orthogonal to the null space of $\mathbf{B}_2^T \mathbf{B}_2$.
The irrotational, solenoidal and harmonic components can then be recovered as follows
\beq
\begin{array}{llll}
 \hat{\bs}^1_{\text{irr}}=\mB_1^T \hat{\bs}^0=\mB_1^T \mathbf{L}_0^{\dagger}  \mathbf{B}_1 \bx^{1} \\
 \hat{\bs}^1_{\text{sol}}=\mB_2 \hat{\bs}^2=\mB_2 (\mathbf{B}_2^T \mathbf{B}_2)^{\dagger} \mathbf{B}_2^T \bx^1\\
 \hat{\bs}^1_{\text{H}}=\bx^1-\hat{\bs}^1_{\text{irr}}-\hat{\bs}^1_{\text{sol}}\medskip.
\end{array}
\eeq
Note that the first two conditions in (\ref{eq:all_cnd}) imply that the variables $\bs^0$,$\bs^2$ in $\mathcal{Q}$ are indeed decoupled so that the optimal solutions coincide with those of the following problems
\beq \label{eq:p_tot0}
\begin{array}{lll}
\hat{\bs}^0=
& \underset{{\bs}^0 \in \mathbb{R}^V}{\text{argmin}} & \parallel
\mathbf{B}_1^T \bs^0  -\bx^1\parallel^2   \qquad (\mathcal{Q}_0)\\
 \end{array}
\eeq
\beq \label{eq:p_tot1}
\begin{array}{lll}
\hat{\bs}^2=
& \underset{{\bs}^2 \in \mathbb{R}^T}{\text{argmin}} & \parallel
\mathbf{B}_2 \bs^2  -\bx^1\parallel^2   \qquad (\mathcal{Q}_2).\\
 \end{array}
\eeq
\begin{figure}[!htp]
\centering
\includegraphics[width=0.35\textwidth]{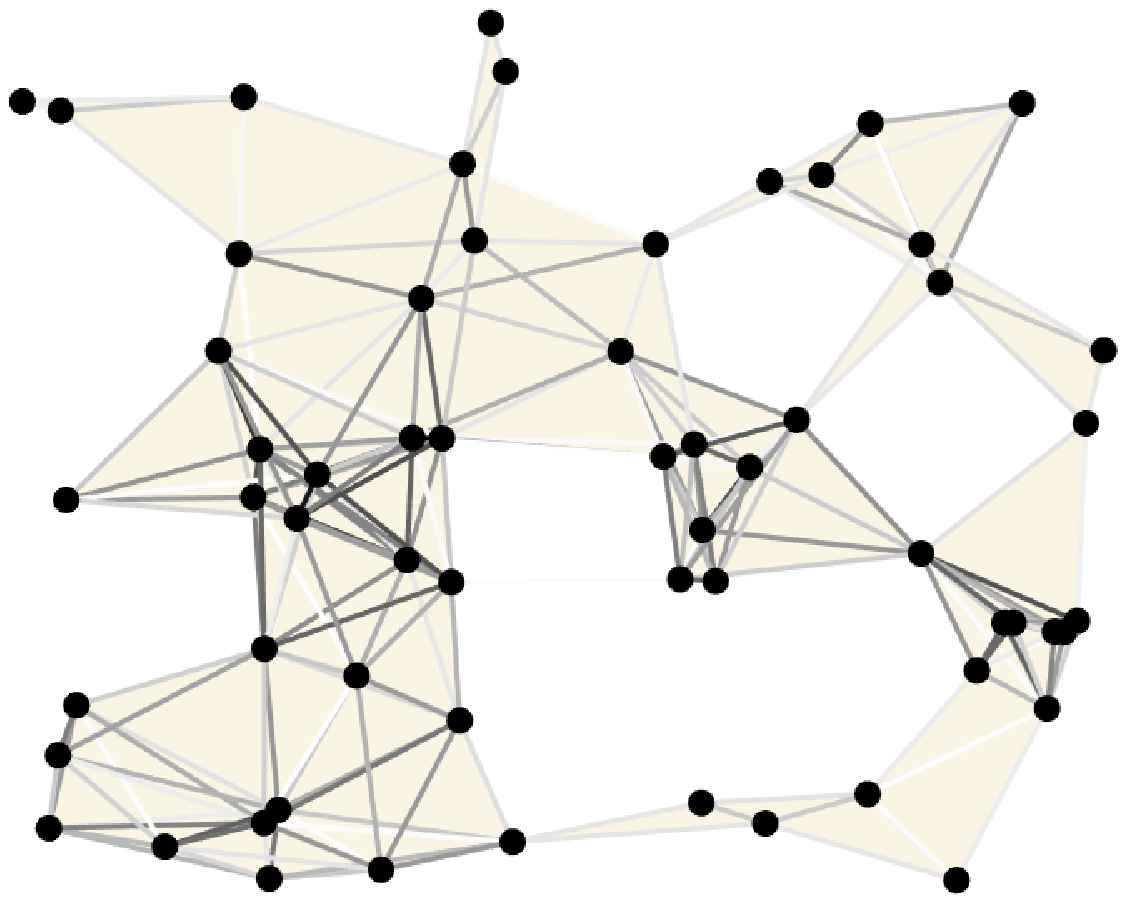}\\ (a)\\
\includegraphics[width=0.35\textwidth]{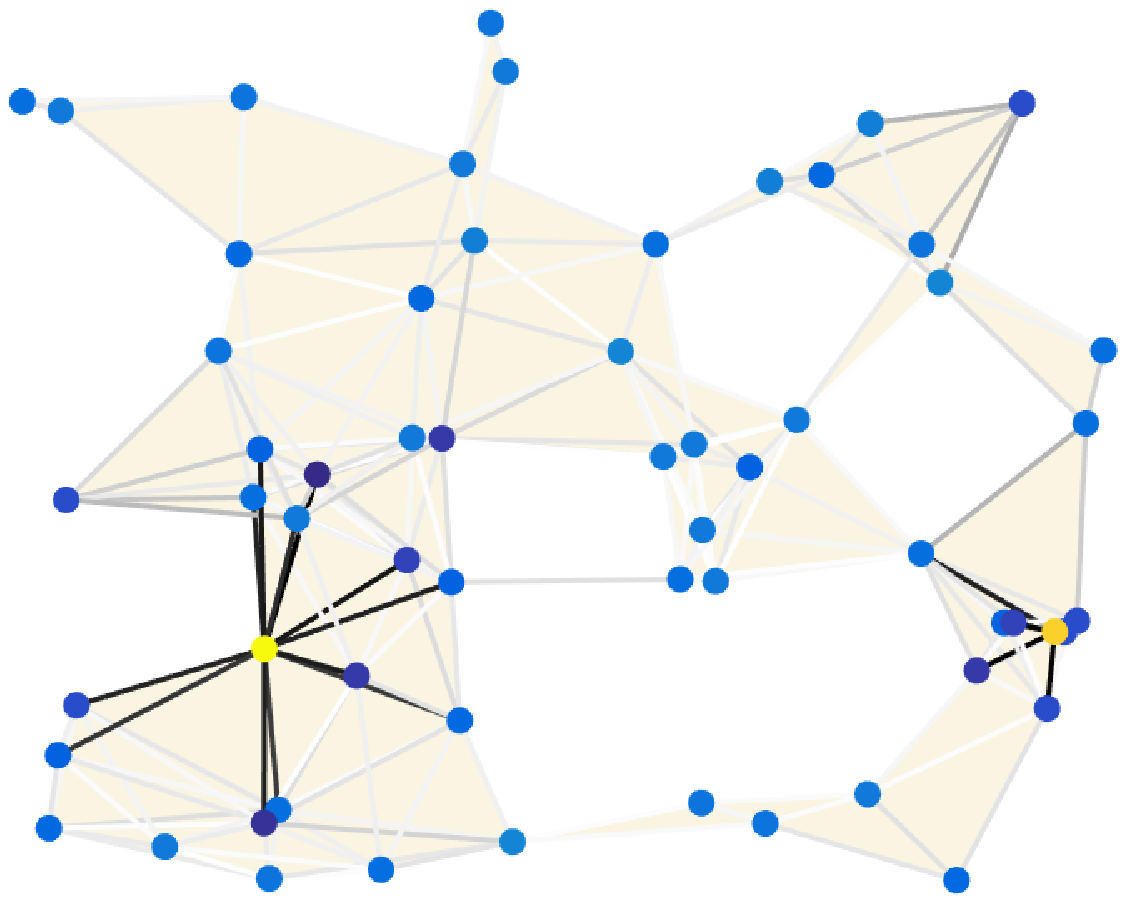}\\ (b)\\
\caption{Observed flow on a simplicial complex (a) and reconstruction of the irrotational flow (b).}
\label{edge-filtering}
\end{figure}

An example of an edge (flow) signal is reported in Fig. \ref{edge-filtering} (a), representing the simulation of data packet flow over a network, including measurement errors. The signal values are encoded in the gray color associated to each link. Suppose now, that the goal of processing is to detect nodes injecting anomalous traffic in the network, starting from the traffic shown in Fig. \ref{edge-filtering}(a). If some node is a source of an anomalous traffic, that node generates an edge signal with a strong irrotational (or divergence-like) component.
To detect spamming nodes, we can then project the overall observed traffic vector  $\bs^1$ onto the space orthogonal to the space spanned  by the nonzero eigenvectors of $\mB_2\mB_2^T$, computing
\begin{equation}
\by^1=\left(\mI-\mU_{\text{sol}}\mU_{\text{sol}}^T\right) \bs^{1}
\end{equation}
where $\mU_{\text{sol}}$ is the matrix whose columns collect the eigenvectors associated with the nonzero eigenvalues of $\mB_2\mB_2^T$. The result of this projection is reported in Fig. \ref{edge-filtering}(b), where we can clearly see that the only edges with a significant contributions are the ones located around two nodes, i.e. the source nodes injecting traffic in the network,  whose divergence is encoded by the node color.

\section{Sampling and recovering of signal defined over simplicial complex}
\label{sec:sampling}
Suppose now that we only observe a few samples of a $k$-order signal. The question we address here is to find the conditions to recover the whole signal $\bs^k$ from a subset of samples. To answer this question, we may use the theory developed in \cite{tsitsvero2016signals} for signals on graph, and later extended to hypergraphs in \cite{barbarossa2016introduction}.
For simplicity, we focus on  signals defined over a simplicial complex of order $K=2$, i.e. on vertices, edges and triangles.
Given a set of edges $\mathcal{S}\subseteq \mathcal{E}$
  we define an edge-limiting operator as a diagonal matrix   $\mathbf{D}_{\mathcal{S}}$ of dimension equal to the number of edges, with a one in the positions where we measure the flow, and zero elsewhere, i.e.
  \beq
  \mathbf{D}_{\mathcal{S}}=\text{diag}\{\mathbf{1}_{\mathcal{S}}\}
  \eeq
where $\mathbf{1}_{\mathcal{S}}$ is the set indicator vector whose $i$-th entry is equal to one if the edge $e_i \in \mathcal{S}$
and zero otherwise.
We say that an edge signal $\bs^1$ is perfectly localized over the subset $\mathcal{S}\subseteq \mathcal{E}$ (or $\mathcal{S}$-edge-limited) if
 $\bs^1 = \mathbf{D}_{\mathcal{S}}\bs^1$.
Similarly, given the matrix $\mathbf{U}_1$ whose columns are the eigenvectors of $\mathbf{L}_1$, and a  subset of indices $\mathcal{F}$,  we define the operator
 \beq
 \mathbf{F}_{\mathcal{F}}=\mathbf{U}_1 \mathbf{\Sigma}_\mathcal{F}\mathbf{U}^T_1
 \eeq
 where  $\mathbf{\Sigma}_\mathcal{F}=\text{diag}(\mathbf{1}_{\mathcal{F}})$.
An edge signal $\bs^1$ is {\it $\mathcal{F}$-bandlimited} over a frequency set $\mathcal{F}$
if  $\mathbf{F}_{\mathcal{F}} \bs^1=\bs^1$.
The operators $\mathbf{D}_{\mathcal{S}}$ and $\mathbf{F}_{\mathcal{F}}$ are self-adjoint and idempotent and represents orthogonal projectors, respectively, on the sets ${\mathcal{S}}$ and ${\mathcal{F}}$.
If we look for edges signals which are perfectly localized in both the edge and frequency domains, some conditions for perfect localization  have been derived in \cite{tsitsvero2016signals}. More specifically: a)  $\bs^1$ is perfectly localized over both  the edge set $\mathcal{S}$ and  the frequency set $\mathcal{F}$ if and only if  the operator
$\mathbf{F}_{\mathcal{F}} \mathbf{D}_{\mathcal{S}} \mathbf{F}_{\mathcal{F}}$ has an eigenvalue equal to one, i.e.
\beq \label{eq:loc_both}
\| \mathbf{D}_{\mathcal{S}}\mathbf{F}_{\mathcal{F}}\|_2=\|\mathbf{F}_{\mathcal{F}} \mathbf{D}_{\mathcal{S}}\|_2=\|\mathbf{F}_{\mathcal{F}} \mathbf{D}_{\mathcal{S}} \mathbf{F}_{\mathcal{F}}\|_2=1\eeq where $\|\bA\|_2$ denotes the spectral norm of $\bA$; b) a sufficient condition for the existence of such signals is that $|\mathcal{S}|+|\mathcal{F}|>E$. Conversely, if $|\mathcal{S}|+|\mathcal{F}|\leq E$, there could still exist perfectly localized vectors, when condition (\ref{eq:loc_both}) holds.

In the following we first consider sampling on a single layer by extracting samples  from the edge signals  and then, we consider multi-layer processing using samples of signals defined over different layers of the simplex.
 \subsection{Single-layer sampling}
 In the following theorem  \cite{tsitsvero2016signals} we provide a necessary and sufficient condition to recover the edge signal $\bs^1$ from its samples $\bs^{1}_{\mathcal{S}}\triangleq  \mathbf{D}_{\mathcal{S}} \bs^1$.
 \begin{theorem}
 \label{sampling theorem}
{\it Given the  bandlimited edge signal  $\bs^1=\mathbf{F}_{\mathcal{F}} \bs^{1}$,
 it is possible to recover $\bs^1$ from a subset of samples collected over the subset $\mathcal{S}\subseteq \mathcal{E}$ if and only if the following condition holds:
\beq \label{eq_spect_norm}
\|\bar{\mathbf{D}}_{\mathcal{S}}\mathbf{F}_{\mathcal{F}}\|_2=\|\mathbf{F}_{\mathcal{F}} \bar{\mathbf{D}}_{\mathcal{S}} \|_2<1
\eeq
with $\bar{\mathbf{D}}_{\mathcal{S}}=\mathbf{I}-\mathbf{D}_{\mathcal{S}}$.}
\end{theorem}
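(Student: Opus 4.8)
The plan is to recast recoverability as the injectivity of the sampling map $\bs^1 \mapsto \mathbf{D}_{\mathcal{S}}\bs^1$ restricted to the space $\mathcal{B}_{\mathcal{F}} = \{\bs^1 : \mathbf{F}_{\mathcal{F}}\bs^1 = \bs^1\}$ of $\mathcal{F}$-bandlimited signals. Two bandlimited signals share the same samples exactly when their difference is a bandlimited signal lying in the kernel of $\mathbf{D}_{\mathcal{S}}$; hence perfect recovery holds if and only if the only bandlimited signal supported on the unsampled edges $\bar{\mathcal{S}}$ (i.e. satisfying $\bar{\mathbf{D}}_{\mathcal{S}}\bs^1 = \bs^1$) is $\bs^1 = \mathbf{0}$. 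I would first record that $\bar{\mathbf{D}}_{\mathcal{S}}$ and $\mathbf{F}_{\mathcal{F}}$ are orthogonal projectors, so $\|\bar{\mathbf{D}}_{\mathcal{S}}\mathbf{F}_{\mathcal{F}}\|_2 \le 1$ always, and by $\|\mathbf{A}\|_2 = \|\mathbf{A}^T\|_2$ together with self-adjointness, $\|\bar{\mathbf{D}}_{\mathcal{S}}\mathbf{F}_{\mathcal{F}}\|_2 = \|\mathbf{F}_{\mathcal{F}}\bar{\mathbf{D}}_{\mathcal{S}}\|_2$, which justifies the two equal expressions in the statement.

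For sufficiency, suppose $\|\bar{\mathbf{D}}_{\mathcal{S}}\mathbf{F}_{\mathcal{F}}\|_2 < 1$. For any bandlimited $\bs^1$ one has $\bar{\mathbf{D}}_{\mathcal{S}}\bs^1 = \bar{\mathbf{D}}_{\mathcal{S}}\mathbf{F}_{\mathcal{F}}\bs^1$, so $\|\bar{\mathbf{D}}_{\mathcal{S}}\bs^1\| \le \|\bar{\mathbf{D}}_{\mathcal{S}}\mathbf{F}_{\mathcal{F}}\|_2\,\|\bs^1\| < \|\bs^1\|$ whenever $\bs^1 \neq \mathbf{0}$. A signal supported on $\bar{\mathcal{S}}$ would instead satisfy $\|\bar{\mathbf{D}}_{\mathcal{S}}\bs^1\| = \|\bs^1\|$, a contradiction; hence the sampling map is injective on $\mathcal{B}_{\mathcal{F}}$ and recovery is possible. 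One can then exhibit an explicit left inverse acting on the samples, but that is not needed for the equivalence itself.

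For necessity I would argue the contrapositive, exploiting finite dimensionality. If $\|\bar{\mathbf{D}}_{\mathcal{S}}\mathbf{F}_{\mathcal{F}}\|_2 = 1$ (the only alternative to the strict inequality), then the self-adjoint positive semidefinite operator $\mathbf{F}_{\mathcal{F}}\bar{\mathbf{D}}_{\mathcal{S}}\mathbf{F}_{\mathcal{F}}$ attains its largest eigenvalue $1$ at some eigenvector, which can be taken in the range of $\mathbf{F}_{\mathcal{F}}$ and is therefore bandlimited: $\mathbf{F}_{\mathcal{F}}\bs^1 = \bs^1$ and $\mathbf{F}_{\mathcal{F}}\bar{\mathbf{D}}_{\mathcal{S}}\bs^1 = \bs^1$. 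Chaining the projector inequalities $\|\bs^1\| = \|\mathbf{F}_{\mathcal{F}}\bar{\mathbf{D}}_{\mathcal{S}}\bs^1\| \le \|\bar{\mathbf{D}}_{\mathcal{S}}\bs^1\| \le \|\bs^1\|$ forces equality throughout, so $\bar{\mathbf{D}}_{\mathcal{S}}\bs^1 = \bs^1$: the eigenvector is simultaneously bandlimited and supported on $\bar{\mathcal{S}}$. This is exactly the simultaneous-localization situation characterized by (\ref{eq:loc_both}), and since $\mathbf{D}_{\mathcal{S}}\bs^1 = \mathbf{0}$ it has vanishing samples, making recovery impossible.

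The main obstacle is this converse step: turning the operator-norm equality into a genuine nonzero signal requires the spectral theorem to guarantee that the norm is attained, plus some care to ensure the extracted eigenvector is both bandlimited and fully supported on the complement of the sampling set. This is precisely the point where the localization characterization of the previous subsection does the work, and where the finite-dimensional setting is essential.
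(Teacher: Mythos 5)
Your proposal is correct and, in substance, it is the same argument the paper relies on: the paper's proof is a one-line citation of Th.~4.1 in \cite{tsitsvero2016signals}, whose two directions are precisely what you reconstruct --- sufficiency from the contraction property of $\bar{\mathbf{D}}_{\mathcal{S}}\mathbf{F}_{\mathcal{F}}$ (the paper realizes this via the explicit inverse $\mathbf{Q}_{\mathcal{S}}=(\mathbf{I}-\bar{\mathbf{D}}_{\mathcal{S}}\mathbf{F}_{\mathcal{F}})^{-1}$ and the zero reconstruction-error computation given right after the theorem), and necessity from the perfect-localization characterization recalled in (\ref{eq:loc_both}). Your only deviation is cosmetic: you prove sufficiency by injectivity of the sampling map on the bandlimited subspace instead of exhibiting $\mathbf{Q}_{\mathcal{S}}$, which you correctly note exists under the same condition.
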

\begin{proof}
The proof is a straightforward extension of Th. 4.1 in  \cite{tsitsvero2016signals} to signals defined on the edges of the complex.
\end{proof}

In words, the above conditions mean that there can be no $\mathcal{F}$-bandlimited signals that are perfectly localized on the complementary set $\bar{\mathcal{S}}$. Perfect recovery of the signal $\bs^{1}$ from $\bs^{1}_{\mathcal{S}}$ can be achieved as
\begin{equation}
    \mathbf{r}^1=\mathbf{Q}_{\mathcal{S}}\bs^{1}_{\mathcal{S}}
\end{equation}
where  $\mathbf{Q}_{\mathcal{S}}=(\mathbf{I}- \bar{\mathbf{D}}_{\mathcal{S}}\mathbf{F}_{\mathcal{F}})^{-1}$. The existence of the above inverse is ensured by condition (\ref{eq_spect_norm}). In fact, the reconstruction error can be written as  \cite{tsitsvero2016signals}
\beq
\bs^1-\mathbf{Q}_{\mathcal{S}} \bs^1_{\mathcal{S}}=\bs^1-\mathbf{Q}_{\mathcal{S}} (\mathbf{I}- \bar{\mathbf{D}}_{\mathcal{S}})\bs^1=\bs^1-\mathbf{Q}_{\mathcal{S}} (\mathbf{I}- \bar{\mathbf{D}}_{\mathcal{S}} \mathbf{F}_{\mathcal{F}})\bs^1=\mathbf{0}, \nonumber
\eeq
 where we exploited in the second equality the bandlimited condition $\bs^1=\mathbf{F}_{\mathcal{F}} \bs^1$.

To make (\ref{eq_spect_norm})  holds true, we must guarantee that  $\mathbf{D}_{\mathcal{S}}\mathbf{F}_{\mathcal{F}} \bs^1\neq \b0$ or, equivalently, that the matrix $\mathbf{D}_{\mathcal{S}}\mathbf{F}_{\mathcal{F}}$  is full column rank, i.e. $\text{rank}(\mathbf{D}_{\mathcal{S}} \mathbf{F}_{\mathcal{F}})=|\mathcal{F}|$. Then, a necessary condition to ensure this holds is  $|\mathcal{S}|\geq |\mathcal{F}|$.\\
An alternative way to retrieve the overall signal $\bs^1$ from its samples can be obtained as follows.
If (\ref{eq_spect_norm}) holds true, the entire signal $\bs^1$ can be recovered from $\bs^1_{\mathcal{S}}$ as follows
\begin{equation} \label{eq:recov_s1}
\bs^1=\mathbf{U}_{\mathcal{F}}\left(\mathbf{U}_{\mathcal{F}}^T \mathbf{D}_{\mathcal{S}} \mathbf{U}_{\mathcal{F}}\right)^{-1}\mathbf{U}_{\mathcal{F}}^T \,\bs^1_{\mathcal{S}}
\end{equation}
where $\mathbf{U}_{\mathcal{F}}$ is the $E \times |\mathcal{F}|$ matrix whose columns are the eigenvectors of $\mL_1$ associated with the frequency set
$\mathcal{F}$.

\noindent{\bf Remark}: It is worth to notice that, because of the Hodge decomposition (\ref{eq:Hodge}), an edge signal always contains three components that are typically band-limited, as they reside on a subspace of dimension smaller than $E$. This means that, if one knows {\it a priori}, that the edge signal contains only one component, e.g. solenoidal, irrotational, or harmonic, then it is possible to observe the edge signal and to recover the desired component over all the edges, under the conditions established by Theorem \ref{sampling theorem}.

\vspace{-0.3cm}
\subsection{Multi-layer sampling}
In this section,  we consider the case where we take samples of signals of {\it different order} and we propose two alternative strategies to retrieve an edge signal $\bs^1$
from these samples.\\
The first approach aims at  recovering
$\bs^1$
by using both,  the  vertex  signal samples  $\bs^{0}_{\mathcal{A}}=\mathbf{D}_{\mathcal{A}} \bs^0$, with $\mathcal{A}\subseteq \mathcal{V}$, and the edge  samples $\bs^{1}_{\mathcal{S}}=\mathbf{D}_{\mathcal{S}} \bs^1$. Hereinafter, we denote by  $\mathcal{F}_{\text{irr}}$, $\mathcal{F}_{\text{sol}}$ and  $\mathcal{F}_{\text{H}}$
  the set of frequency indexes in $\mathcal{F}$ corresponding to the eigenvectors of $\mL_1$ belonging, respectively,  to the irrotational, solenoidal and harmonic subspaces. Note that, if $\bs^1$ is $\mathcal{F}$-bandlimited then also $\bs^1_{H}$, $\bs^{1}_{\text{sol}}$ and $\bs^{1}_{\text{irr}}$ are bandlimited with bandwidth, respectively,  $|\mathcal{F}_{\text{H}}|$, $|\mathcal{F}_{\text{sol}}|$ and $|\mathcal{F}_{\text{irr}}|$.  Define
  $\mathcal{F}_{\text{sH}}$ as the set of frequency indexes such that $\mathcal{F}_{\text{sH}}=\mathcal{F}_{\text{H}}\cup \mathcal{F}_{\text{sol}}$.
  Furthermore, given the matrix $\mathbf{U}_0$ with columns the eigenvectors of $\mL_0$, we define the operator $\mathbf{F}_{\mathcal{F}_0}^0=\mathbf{U}_{0} \mathbf{\Sigma}_{\mathcal{F}_0}\mathbf{U}^{T}_0$ where $|\mathcal{F}_0|$ denotes the bandwidth of $\bs^0$.
 Let $\mathcal{C}_1$ be the set of indexes in $\mathcal{F}_0$ associated with the eigenvectors belonging to $\text{ker}(\mL_0)$ and  denote by $\mU^0_{\mathcal{F}_{0}-\mathcal{C}_{1}}$ the matrix whose columns are the eigenvectors of $\mL_0$ associated with the frequency set $\mathcal{F}_{0}-\mathcal{C}_{1}$. Then, we can state the following theorem.
\begin{theorem}{\it
   Consider the second-order simplex  $\mathcal{X}(\mathcal{V},\mathcal{E},\mathcal{T})$ and the  edge signal   $\bs^1=\bs^1_{\text{sol}}+ \bs^1_{{H}}+\mB_1^T\bs^0$. Then, assume that: i)  the  vertex-signal  $\bs^0$ and the edge signal $\bs^1$ are bandlimited with bandwidth, respectively, $|\mathcal{F}_0|$ and $|\mathcal{F}|=|\mathcal{F}_{\text{sH}}|+ |\mathcal{F}_0|-c_1$, where $|\mathcal{F}_{\text{sH}}|=|\mathcal{F}_{\text{sol}}|+|\mathcal{F}_{\text{H}}|$ and
 $c_1\geq 0$ denotes the number of eigenvectors in the bandwidth of $\bs^0$ belonging to $\text{ker}(\mL_0)$; ii) the conditions
  $\|\bar{\mathbf{D}}_{\mathcal{A}}\mF_{\mathcal{F}_0}^0\|_2<1$ and $\|\bar{\mathbf{D}}_{\mathcal{S}}\mF_{\mathcal{F}_{\text{sH}}}\|_2<1$ hold true.
   Then,  it follows that:
   \begin{itemize} \item[a)] $\bs^1$ can be perfectly recovered  from both a set of  vertex  signal samples  $\bs^{0}_{\mathcal{A}}=\mathbf{D}_{\mathcal{A}} \bs^0$ and from the edge  samples $\bs^{1}_{\mathcal{S}}=\mathbf{D}_{\mathcal{S}} \bs^1$  via the set of equations
\beq
\left[\begin{array}{lll}
\bs^{0} \medskip\\
\bar{\bs}^{1}
\end{array}\right]=\mQ \left[\begin{array}{lll}
\bs^{0}_{\mathcal{A}} \medskip\\
\bs^{1}_{\mathcal{S}}
\end{array}\right],
\eeq
where $\bar{\bs}^{1}=\bs^{1}_{\text{sol}}+\bs^{1}_{{H}}$,
\beq
\begin{array}{lll}
\mQ=\left[ \begin{array}{lll}(\mathbf{I}-\overline{\mathbf{D}}_{\mathcal{A}} \mathbf{F}_{\mathcal{F}_0}^0)^{-1}  & \mathbf{O}\medskip\\
 \mP &
 (\mathbf{I}-\overline{\mathbf{D}}_{\mathcal{S}} \mathbf{F}_{\mathcal{F}_{\text{sH}}} )^{-1} \end{array}\right]\end{array}
\eeq
and $\mP=- (\mathbf{I}-\overline{\mathbf{D}}_{\mathcal{S}} \mathbf{F}_{\mathcal{F}_{\text{sH}}})^{-1}{\mathbf{D}}_{\mathcal{S}} \mB_1^T \mathbf{F}_{\mathcal{F}_0}^0 (\mathbf{I}-\overline{\mathbf{D}}_{\mathcal{A}} \mathbf{F}_{\mathcal{F}_0}^0)^{-1}$;
\item[b)]  There exists  a set of frequency indexes  $\mathcal{F}_{\text{irr}} \subset \mathcal{F}$, for which the  eigenvectors matrix $\mU_{\mathcal{F}_{\text{irr}}}$ of $\mL_1$ satisfies the equality
  $\mU_{\mathcal{F}_{\text{irr}}}= \mB^T_1 \mU^0_{\mathcal{F}_{0}-\mathcal{C}_{1}}$, and such that any $\mathcal{F}$-bandlimited edge signal with set of frequency indexes $\mathcal{F}=\mathcal{F}_{\text{irr}}\cup \mathcal{F}_{\text{sH}}$ and $|\mathcal{F}_{\text{irr}}|=|\mathcal{F}_{0
    }|-c_1$  can be recovered by using $N_0\geq |\mathcal{F}_0|$ samples from $\bs^0$ and $N_1\geq |\mathcal{F}_{\text{sH}}|$ samples from $\bs^1$.
    \end{itemize}}
\end{theorem}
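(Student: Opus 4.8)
The plan is to combine the Hodge decomposition with the single-layer result of Theorem~\ref{sampling theorem}, exploiting the fact that the irrotational part of $\bs^1$ equals $\mB_1^T\bs^0$. This makes recovery sequential and accounts for the block lower-triangular structure of $\mQ$: the vertex samples $\bs^0_{\mathcal{A}}$ alone fix $\bs^0$, hence the irrotational flow; the edge samples $\bs^1_{\mathcal{S}}$ then fix the residual solenoidal-plus-harmonic part $\bar\bs^1=\bs^1_{\text{sol}}+\bs^1_H$.

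For part~a) I would first apply Theorem~\ref{sampling theorem} on the vertex layer. Since $\bs^0$ is $\mathcal{F}_0$-bandlimited and $\|\bar{\mathbf{D}}_{\mathcal{A}}\mF_{\mathcal{F}_0}^0\|_2<1$, it gives $\bs^0=(\mI-\bar{\mathbf{D}}_{\mathcal{A}}\mF_{\mathcal{F}_0}^0)^{-1}\bs^0_{\mathcal{A}}$, which is exactly the top row $[\,(\mI-\bar{\mathbf{D}}_{\mathcal{A}}\mF_{\mathcal{F}_0}^0)^{-1}\;\;\mathbf{O}\,]$ of $\mQ$. Next, since $\bar\bs^1=\bs^1-\mB_1^T\bs^0$ is bandlimited to $\mathcal{F}_{\text{sH}}$, the edge samples obey $\mathbf{D}_{\mathcal{S}}\bar\bs^1=\bs^1_{\mathcal{S}}-\mathbf{D}_{\mathcal{S}}\mB_1^T\bs^0$. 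Applying Theorem~\ref{sampling theorem} on the edge layer to $\bar\bs^1$ (using $\|\bar{\mathbf{D}}_{\mathcal{S}}\mF_{\mathcal{F}_{\text{sH}}}\|_2<1$) yields $\bar\bs^1=(\mI-\bar{\mathbf{D}}_{\mathcal{S}}\mF_{\mathcal{F}_{\text{sH}}})^{-1}(\bs^1_{\mathcal{S}}-\mathbf{D}_{\mathcal{S}}\mB_1^T\bs^0)$. Substituting the recovered $\bs^0$ and inserting the bandlimiting projector via $\bs^0=\mF_{\mathcal{F}_0}^0\bs^0$ reproduces the off-diagonal block $\mP$ exactly, completing part~a).

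For part~b) the central step is the eigenvector correspondence, obtained from property~2) of Proposition~1 with $k=1$: each eigenvector $\bv$ of $\mB_1\mB_1^T=\mL_0$ with nonzero eigenvalue $\lambda$ maps to an eigenvector $\mB_1^T\bv$ of $\mB_1^T\mB_1$, i.e.\ an irrotational eigenvector of $\mL_1$, with the same eigenvalue. Because $\text{ker}(\mL_0)=\text{ker}(\mB_1^T)$, the $c_1$ bandwidth eigenvectors of $\bs^0$ indexed by $\mathcal{C}_1$ are annihilated by $\mB_1^T$, so only the $|\mathcal{F}_0|-c_1$ eigenvectors indexed by $\mathcal{F}_0-\mathcal{C}_1$ survive; this gives $\mU_{\mathcal{F}_{\text{irr}}}=\mB_1^T\mU^0_{\mathcal{F}_0-\mathcal{C}_1}$ with $|\mathcal{F}_{\text{irr}}|=|\mathcal{F}_0|-c_1$, consistent with $\mathcal{F}=\mathcal{F}_{\text{irr}}\cup\mathcal{F}_{\text{sH}}$. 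The sample-count bounds $N_0\ge|\mathcal{F}_0|$ and $N_1\ge|\mathcal{F}_{\text{sH}}|$ then follow from the necessary condition $|\mathcal{S}|\ge|\mathcal{F}|$ noted after Theorem~\ref{sampling theorem}, applied to recovering $\bs^0$ (bandwidth $|\mathcal{F}_0|$) and $\bar\bs^1$ (bandwidth $|\mathcal{F}_{\text{sH}}|$), respectively.

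The main obstacle is the clean splitting of the frequency support: one must check that the eigenvectors of $\mL_1$ partition into irrotational, solenoidal, and harmonic families (from Proposition~1 and the Hodge decomposition (\ref{eq:Hodge})), so that $\bar\bs^1$ is genuinely $\mathcal{F}_{\text{sH}}$-bandlimited while the irrotational component lives entirely in $\mathcal{F}_{\text{irr}}$. Once this bookkeeping is in place, everything else reduces to two applications of Theorem~\ref{sampling theorem}, and the only algebraic subtlety is inserting $\mF_{\mathcal{F}_0}^0$ to match the stated form of $\mP$.
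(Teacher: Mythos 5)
Your proposal is correct and follows essentially the same route as the paper: the paper assembles the joint sampling system into the block lower-triangular matrix $\mG$ and inverts it via the partitioned-matrix inverse, which is computationally identical to your sequential recovery (recover $\bs^0$ from vertex samples, subtract $\mD_{\mathcal{S}}\mB_1^T\bs^0$ from the edge samples, then recover $\bar{\bs}^1$), and both yield the same $\mP$ after inserting $\bs^0=\mF_{\mathcal{F}_0}^0\bs^0$. Your part b) likewise reproduces the content of the paper's Proposition 2 (the correspondence $\mU_{\mathcal{F}_{\text{irr}}}=\mB_1^T\mU^0_{\mathcal{F}_0-\mathcal{C}_1}$ from property 2 of Proposition 1 and $\ker(\mL_0)=\ker(\mB_1^T)$) and the same injectivity-based sample counts.
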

\begin{proof} See Appendix B. 
\end{proof}

Let us now consider the case where we want to recover $\bs^1$ from samples collected  from signals of order $0$, $1$, and $2$, i.e. $\bs^0$, $\bs^1$ and $\bs^2$.
We denote by $\bs^2_{\mathcal{M}}=\mD_{\mathcal{M}} \bs^2$ the vector of triangle signal samples  with ${\mathcal{M}} \subseteq \mathcal{T}$.
 Furthermore, given the matrix $\mU_2$ with columns the eigenvectors of the second-order Laplacian $\mL_2$, we define the operator $\mF_{\mathcal{F}_2}^2=\mU_{2} \mathbf{\Sigma}_{\mathcal{F}_2}\mU_{2}^T$ where $|\mathcal{F}_2|$ denotes the bandwidth of $\bs^2$. Then, assuming $\bs^2$ bandlimited, it holds $\bs^2=\mF_{\mathcal{F}_2}^2 \bs^2$. Denote with $\mathcal{C}_2$ the set of indexes in $\mathcal{F}_2$ associated with the eigenvectors belonging to $\text{ker}(\mL_2)$ and with $\mU^2_{\mathcal{F}_{2}-\mathcal{C}_{2}}$ the matrix whose columns are the eigenvectors of $\mL_2$ associated with the index set  $\mathcal{F}_{2}-\mathcal{C}_{2}$.
\begin{theorem}{\it
   Consider the second-order simplex $\mathcal{X}(\mathcal{V},\mathcal{E},\mathcal{T})$, the  edge signal   $\bs^1=\mB_2 \bs^2+ \bs^1_{{H}}+\mB_1^T\bs^0$ and assume that: i)  the  vertex-signal  $\bs^0$, the edge signal $\bs^1$ and the triangle signal $\bs^2$ are bandlimited with bandwidth, respectively, $|\mathcal{F}_0|$, $|\mathcal{F}_1|=|\mathcal{F}_0|+|\mathcal{F}_{\text{H}}|+|\mathcal{F}_2|-(c_1+c_2)$ and $|\mathcal{F}_2|$, with $c_1\geq 0$ and $c_2\geq 0$ the number of eigenvectors in the bandwidth of $\bs^0$ and $\bs^2$ belonging, respectively,  to   $\mbox{ker}(\mL_0)$ and $\mbox{ker}(\mL_2)$; ii) all conditions $\|\bar{\mD}_{\mathcal{A}}\mF_{\mathcal{F}_0}^0\|_2<1$, $\|\bar{\mD}_{\mathcal{S}}\mF_{\mathcal{F}_{H}}\|_2<1$ and $\|\bar{\mD}_{\mathcal{M}}\mF_{\mathcal{F}_2}^2\|_2<1$ hold true.
   Then,  it follows:
   \begin{itemize} \item[a)] $\bs^1$ can be perfectly recovered  from  a set of  vertex  signal samples  $\bs^{0}_{\mathcal{A}}=\mD_{\mathcal{A}} \bs^0$,  from the edge  samples $\bs^{1}_{\mathcal{S}}=\mD_{\mathcal{S}} \bs^1$ and from the triangle samples $\bs^{2}_{\mathcal{M}}=\mD_{\mathcal{M}} \bs^2$   as
\beq
\left[\begin{array}{lll}
\bs^{0} \medskip\\
\bs^{1}_{{H}}\medskip\\
\bs^2
\end{array}\right]=\mR \left[\begin{array}{lll}
\bs^{0}_{\mathcal{A}} \medskip\\
\bs^{1}_{\mathcal{S}}\medskip\\
\bs^{2}_{\mathcal{M}}
\end{array}\right],
\eeq
where
\beq \hspace{-0.57cm}
\begin{array}{lll}
\mR=\left[ \begin{array}{lll}(\mI-\overline{\mD}_{\mathcal{A}} \mF_{\mathcal{F}_0}^0)^{-1}  & \mathbf{O} & \mathbf{O} \medskip\\
 \mP_1 & \mathbf{P}_2
  & \mathbf{P}_3 \medskip\\ \mathbf{O} & \mathbf{O} & (\mI-\overline{\mD}_{\mathcal{M}} \mF_{\mathcal{F}_{2}}^2 )^{-1}
 \end{array}
\!\! \right]
 \end{array}
\eeq
and $\mP_1=- (\mI-\overline{\mD}_{\mathcal{S}} \mF_{\mathcal{F}_{\text{H}}})^{-1}{\mD}_{\mathcal{S}} \mB_1^T \mF_{\mathcal{F}_0}^0 (\mI-\overline{\mD}_{\mathcal{A}} \mF_{\mathcal{F}_0}^0)^{-1}$, $\mP_2= (\mI-\overline{\mD}_{\mathcal{S}} \mF_{\mathcal{F}_{\text{H}}} )^{-1}$,
$\mP_3=- (\mI-\overline{\mD}_{\mathcal{S}} \mF_{\mathcal{F}_{\text{H}}})^{-1}{\mD}_{\mathcal{S}} \mB_2 \mF_{\mathcal{F}_2}^2 (\mI-\overline{\mD}_{\mathcal{M}} \mF_{\mathcal{F}_2}^2)^{-1}$;
\item[b)]
There exist  two sets of frequency indexes  $\mathcal{F}_{\text{sol}},\mathcal{F}_{\text{irr}} \subset \mathcal{F}$, for which the  eigenvectors of $\mL_1$ stacked in the columns of the matrices $\mU_{\mathcal{F}_{\text{sol}}}$,$\mU_{\mathcal{F}_{\text{irr}}}$  satisfy, respectively, the equality
  $\mU_{\mathcal{F}_{\text{sol}}}= \mB_2 \mU^2_{\mathcal{F}_{2}-\mathcal{C}_{2}}$, and
$\mU_{\mathcal{F}_{\text{irr}}}= \mB^T_1 \mU^0_{\mathcal{F}_{0}-\mathcal{C}_{1}}$.
  Then,
   any $\mathcal{F}$-bandlimited edge signal with frequency set $\mathcal{F}=\mathcal{F}_{\text{sol}}\cup\mathcal{F}_{\text{irr}}\cup \mathcal{F}_{\text{H}}$, and $|\mathcal{F}_{\text{irr}}|=|\mathcal{F}_{0
    }|-c_1$, $|\mathcal{F}_{\text{sol}}|=|\mathcal{F}_{2}|-c_2$,
can be recovered by using $N_0\geq |\mathcal{F}_0|$ samples from $\bs^0$,  $N_1\geq |\mathcal{F}_{\text{H}}|$ samples from $\bs^1$ and $N_2\geq |\mathcal{F}_{2}|$ samples from $\bs^2$.
    \end{itemize}}
\end{theorem}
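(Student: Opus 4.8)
The plan is to decouple the three-layer recovery into three independent single-layer reconstructions of the form already established in Theorem~\ref{sampling theorem}, one for each orthogonal term of the Hodge decomposition $\bs^1=\mB_1^T\bs^0+\bs^1_H+\mB_2\bs^2$, and then to identify the block matrix $\mR$ as the resulting cascade of reconstruction operators. The orthogonality $\mB_1\mB_2=\mathbf{0}$ ensures that the irrotational, harmonic, and solenoidal layers do not interfere, so the subproblems can be solved in a natural order: first the two outer signals $\bs^0$ and $\bs^2$, which are sampled directly, and then the harmonic edge component, which is what remains once the other two contributions are removed from the edge samples.

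First I would recover $\bs^0$ and $\bs^2$ separately. Because $\bs^0=\mF^0_{\mathcal{F}_0}\bs^0$ and $\|\bar{\mD}_{\mathcal{A}}\mF^0_{\mathcal{F}_0}\|_2<1$, Theorem~\ref{sampling theorem} applied to $\mL_0$ gives $\bs^0=(\mathbf{I}-\bar{\mD}_{\mathcal{A}}\mF^0_{\mathcal{F}_0})^{-1}\bs^0_{\mathcal{A}}$, the top-left block of $\mR$; the identical argument on $\mL_2$, using $\|\bar{\mD}_{\mathcal{M}}\mF^2_{\mathcal{F}_2}\|_2<1$, yields $\bs^2=(\mathbf{I}-\bar{\mD}_{\mathcal{M}}\mF^2_{\mathcal{F}_2})^{-1}\bs^2_{\mathcal{M}}$, the bottom-right block. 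With these in hand I would isolate the harmonic samples as $\mD_{\mathcal{S}}\bs^1_H=\bs^1_{\mathcal{S}}-\mD_{\mathcal{S}}\mB_1^T\bs^0-\mD_{\mathcal{S}}\mB_2\bs^2$, rewriting $\mB_1^T\bs^0=\mB_1^T\mF^0_{\mathcal{F}_0}\bs^0$ and $\mB_2\bs^2=\mB_2\mF^2_{\mathcal{F}_2}\bs^2$ so that the band-limiting projectors appear explicitly. Since $\bs^1_H$ is $\mathcal{F}_H$-bandlimited and $\|\bar{\mD}_{\mathcal{S}}\mF_{\mathcal{F}_H}\|_2<1$, a third use of Theorem~\ref{sampling theorem} gives $\bs^1_H=(\mathbf{I}-\bar{\mD}_{\mathcal{S}}\mF_{\mathcal{F}_H})^{-1}\mD_{\mathcal{S}}\bs^1_H$; substituting the recovered $\bs^0$ and $\bs^2$ reproduces exactly the middle block row $\mP_1\bs^0_{\mathcal{A}}+\mP_2\bs^1_{\mathcal{S}}+\mP_3\bs^2_{\mathcal{M}}$, and $\bs^1$ is then reassembled from its three components. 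This proves part~a).

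For part~b) I would invoke the eigenvector correspondences of Proposition~1. Its second property, applied to the boundary map $\mB_1$, states that each eigenvector $\bv$ of $\mL_0=\mB_1\mB_1^T$ with nonzero eigenvalue $\lambda$ yields an eigenvector $\mB_1^T\bv$ of the lower Laplacian $\mB_1^T\mB_1$ with the same eigenvalue, while any $\bv\in\ker(\mL_0)$ satisfies $\mB_1^T\bv=\mathbf{0}$. Discarding the $c_1$ kernel indices $\mathcal{C}_1$ from $\mathcal{F}_0$ therefore produces precisely $|\mathcal{F}_0|-c_1$ nonzero irrotational eigenvectors of $\mL_1$, giving $\mU_{\mathcal{F}_{\text{irr}}}=\mB_1^T\mU^0_{\mathcal{F}_0-\mathcal{C}_1}$; the mirror argument with $\mB_2$ and $\mL_2=\mB_2^T\mB_2$ gives $\mU_{\mathcal{F}_{\text{sol}}}=\mB_2\mU^2_{\mathcal{F}_2-\mathcal{C}_2}$ with $|\mathcal{F}_{\text{sol}}|=|\mathcal{F}_2|-c_2$. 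Since the irrotational, solenoidal, and harmonic subspaces are mutually orthogonal, the index sets $\mathcal{F}_{\text{irr}},\mathcal{F}_{\text{sol}},\mathcal{F}_H$ are pairwise disjoint, which reproduces the claimed bandwidth $|\mathcal{F}_1|=|\mathcal{F}_0|+|\mathcal{F}_H|+|\mathcal{F}_2|-(c_1+c_2)$. The sample counts follow because each layer feeds only its own single-layer inversion: $N_0\geq|\mathcal{F}_0|$ on $\mL_0$, $N_2\geq|\mathcal{F}_2|$ on $\mL_2$, and only $N_1\geq|\mathcal{F}_H|$ on the edges, the solenoidal burden having been shifted onto $\bs^2$.

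I expect the main obstacle to be bookkeeping rather than analysis. The crux is verifying that the three spectral-norm hypotheses are exactly what make the three inverses appearing in $\mR$ well defined, invoking Theorem~\ref{sampling theorem} layer by layer, and that the kernel indices $c_1,c_2$ are removed consistently so that the dimension counts close. A secondary delicate point is normalization in the equalities of part~b): for a unit eigenvector $\bv$ of $\mL_0$ one has $\|\mB_1^T\bv\|^2=\lambda$, so the columns of $\mB_1^T\mU^0_{\mathcal{F}_0-\mathcal{C}_1}$ are not unit-norm and must be rescaled to match $\mU_{\mathcal{F}_{\text{irr}}}$; I would therefore phrase these identifications at the level of the spanned subspaces, which is all that the recovery argument actually requires.
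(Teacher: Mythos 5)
Your proof is correct and follows essentially the same route as the paper: the paper stacks the three sampling equations into a single block matrix $\bar{\mG}$ (block upper/lower triangular, with coupling only in the edge row) and inverts it via the partitioned-inverse formula, which is exactly the forward substitution you carry out (recover $\bs^0$ and $\bs^2$ by two single-layer applications of the sampling theorem, subtract $\mD_{\mathcal{S}}\mB_1^T\bs^0$ and $\mD_{\mathcal{S}}\mB_2\bs^2$ from the edge samples, then invert for $\bs^1_H$, reproducing $\mP_1,\mP_2,\mP_3$), while your part b) rests on the same eigenvector correspondences of Proposition 1 that the paper packages as its Proposition 2 and its solenoidal mirror image. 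Your closing normalization caveat is a genuine (minor) refinement rather than a gap: the paper asserts $\mU_{\mathcal{F}_{\text{irr}}}=\mB_1^T\mU^0_{\mathcal{F}_0-\mathcal{C}_1}$ as a literal equality although those columns have norms $\sqrt{\lambda_i}$, so stating the identification at the level of spanned subspaces, as you propose, is the cleaner formulation and suffices for the recovery argument.
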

\begin{proof} See Appendix B.
\end{proof}
\section{Estimation of discrete vector fields}
\label{Estimation of discrete vector fields}
Developing tools to process signals defined over simplicial complexes is also useful to devise algorithms for processing discrete vector fields.  The use of algebraic topology, and more specifically {\it Discrete Exterior Calculus} (DEC), has been already considered for the analysis of vector fields, especially in the field of  computer graphic. Exterior Calculus (EC) is a discipline that generalizes vector calculus to smooth manifolds of arbitrary dimensions \cite{cartan1899} and DEC is a discretization of EC on simplicial complexes \cite{Hirani_thesis}, \cite{bell2012pydec}. DEC methodologies have been already proposed in \cite{fisher2007design} to produce smooth tangent vector fields in computer graphics. Smoothing tangential vector fields using techniques based on the spectral decomposition of higher order Laplacian was also advocated in \cite{brandt2017spectral}.
In this section, building on the basic ideas of  \cite{fisher2007design}, \cite{brandt2017spectral}, we propose an alternative approach to smooth discrete vector fields. More specifically, as in  \cite{fisher2007design}, \cite{brandt2017spectral}, the proposed approach is based on three main steps: i) map the vector field onto an edge signal; ii) filter the edge signal; and iii) map the edge signal back to a vector field. Steps i) and iii) are essentially the same as in \cite{brandt2017spectral}. The difference we introduce here is in step ii), i.e. in the filtering of the edge signal. Filtering edge signals has been considered before, see, e.g., \cite{Segarra_019}, but in our case we consider a different formulation and we incorporate the proper metrics in the Hodge Laplacian, dictated from the initial mapping from the vector field to the corresponding edge signal.\\ To filter vector fields, we need first to embed the vector field and the simplicial complex into a real space. Let us denote with $\mathcal{X}$ a  simplicial complex embedded in a real space $\mathbb{R}^n$ of dimension $n$. We assume that $\mathcal{X}$ is flat, i.e. all simplices are in the same affine $n$-subspace, and well-centered, which means the circumcenter of each simplex  lies in its interior \cite{Hirani_thesis}.   A discrete vector field $\vec{\bv}$ can be considered as  a map from points $\bx_i$ of $\mathcal{X}$ to $\mathbb{R}^n$.
An illustrative example is shown in Fig. \ref{fig:field_filter} (a), where the discrete vector field is represented by the set of arrows associated to the points in a regular grid in $\mathbb{R}^2$. In many applications, it is of interest to develop tools to extract relevant information from a vector field or to reduce the effect of noise.
Our proposed approach is based on the following three main steps:\\
\noindent {\it 1) Map a vector field onto a simplicial complex signal}\\
Given the set of $N$ points  $\bx_i$, we build the simplicial complex $\mathcal{X}$ using a well-centered Delaunay triangulation with  vertices  ($0$-simplicies) in the points $\bx_i$  \cite{Hirani_thesis}.  Then,
starting from the set of vectors $\vec{\bv}(\bx_i)$,
we recover a continuous vector field through interpolation based on the Whitney basis functions $\phi_{i}(\bx)$ \cite{dodziuk1976finite}
\beq \label{eq:interp_field1}
\vec{\bv}(\bx)=\sum_{i=1}^{N} \vec{\bv}\,(\bx_i) \phi_{i}(\bx)
\eeq
where $\phi_{i}(\bx)$ are affine piecewise functions.
More specifically,   $\phi_{i}$ is an hat function on vertex $v_i$, which  takes on the value one at vertex $v_i$, i.e. $\phi_{i}(\bx_i)=1$,  is zero at all other vertices, and  affine over each 2-simplex having $v_i$ as vertex.
Then,  project the vector field onto the set of $1$-simplices (edges), giving rise to a set of scalar signals defined over the edges of the complex
\beq \label{eq:flat_discrete}
\begin{split}
x_{j}^1=\left( \frac{\vec{\bv}(\sigma^0_{j_0})+\vec{\bv}(\sigma^0_{j_1})}{2}\right) \cdot \vec{\sigma}^1_{j}, \quad j=1,\ldots, E
\end{split}
\eeq
where $j_0$ and $j_1$ are the end-points of edge $j$ and $\vec{\sigma}^1_j$ stands for the vector corresponding to $\sigma^1_j=[\sigma^0_{j_0},\sigma^0_{j_1}]$, having the same direction as the orientation of $\sigma^1_j$. The vector  $\bx^1 \in \mathbb{R}^E$ of edge signals  associated with the vector field $\vec{\bv}$, is then built as  $\bx^1=[x_{1}^1, \ldots , x_{E}^1]^T$.\\
\noindent {\it 2) Filter the signal defined on the simplicial complex}\\
The  filtering strategy aims to recover an edge flow vector $\bs^1$ that fits the observed vector $\bx^1$, it is smooth and sparse. We recover the vector $\bs^1$ as the solution of
 \beq \nonumber
 \underset{\bs^1 \in \mathbb{R}^E}{\min} \quad \parallel \bs^1-\bx^1 \parallel_2+\lambda \bs^{1 \,T} \mL_1 \bs^1 + \gamma \parallel \bs^1 \parallel_1  \quad \quad (\mathcal{P}_{\mathcal{F}})
 \eeq
where $\lambda,\gamma$ are positive penalty coefficients controlling, respectively, the signal smoothness and its sparsity.
 Since we chose the Whitney form as interpolation basis, for any two signals of order $p$, $\bx^p,\by^p \in \mathbb{R}^{N_p}$, the induced inner product is given by $\bx^{p T} \mM_p \by^{p}$, where the  $N_p \times N_p$-dimensional matrix $\mM_p$, incorporating the Whitney metric,    is derived  as in \cite{bell2012pydec}[Prop. 9.7]. Therefore, the  Hodge Laplacian $\mL_1$, weighted with the appropriate metric matrices $\mM_p$, with $p=0,1,2$, is a semidefinite positive matrix that can be written as
 \beq
 \mL_1= \mB_2 \mM_2 \mB_2^T+ \mM_1 \mB_1^T \mM_0^{-1} \mB_1 \mM_1
 \eeq
 while the fitting error norm becomes
 \beq
 \parallel \bs^1-\bx^1 \parallel_2=(\bs^1-\bx^1)^T \mM_1 (\bs^1-\bx^1).
 \eeq
\noindent {\it 3) Map the filtered signal back onto a discrete vector field}\\
Finally, given the discrete signal of order $1$, $s_{e_{ij}}^1$, $\forall e_{ij} \in \mathcal{E}$,
the generated piecewise linear vector field becomes
\cite{fisher2007design}
\beq \label{eq:sharp_disc_rec}
\vec{\bv}(\bx)= \sum_{e_{ij}}  s_{e_{ij}}^1 [\phi_i(\bx) \nabla \phi_j(\bx)-\phi_j(\bx) \nabla \phi_i(\bx)].
\eeq
 \begin{figure}[!htp]
	\centering
	\includegraphics[width=7.3cm, height =4.8cm]{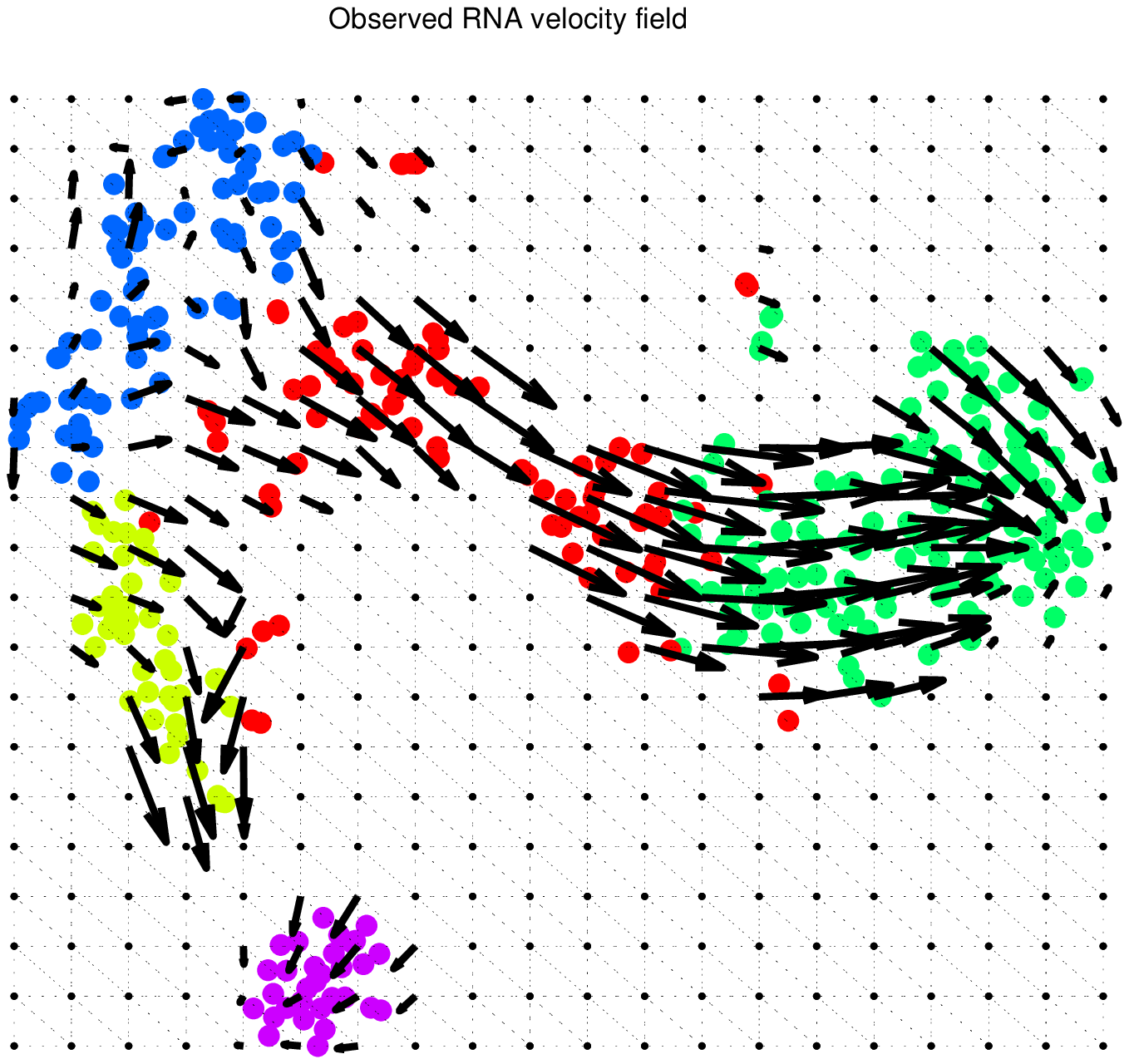}\\ (a) \vspace{0.2cm}\\
	\includegraphics[width=7.8cm, height =5cm]{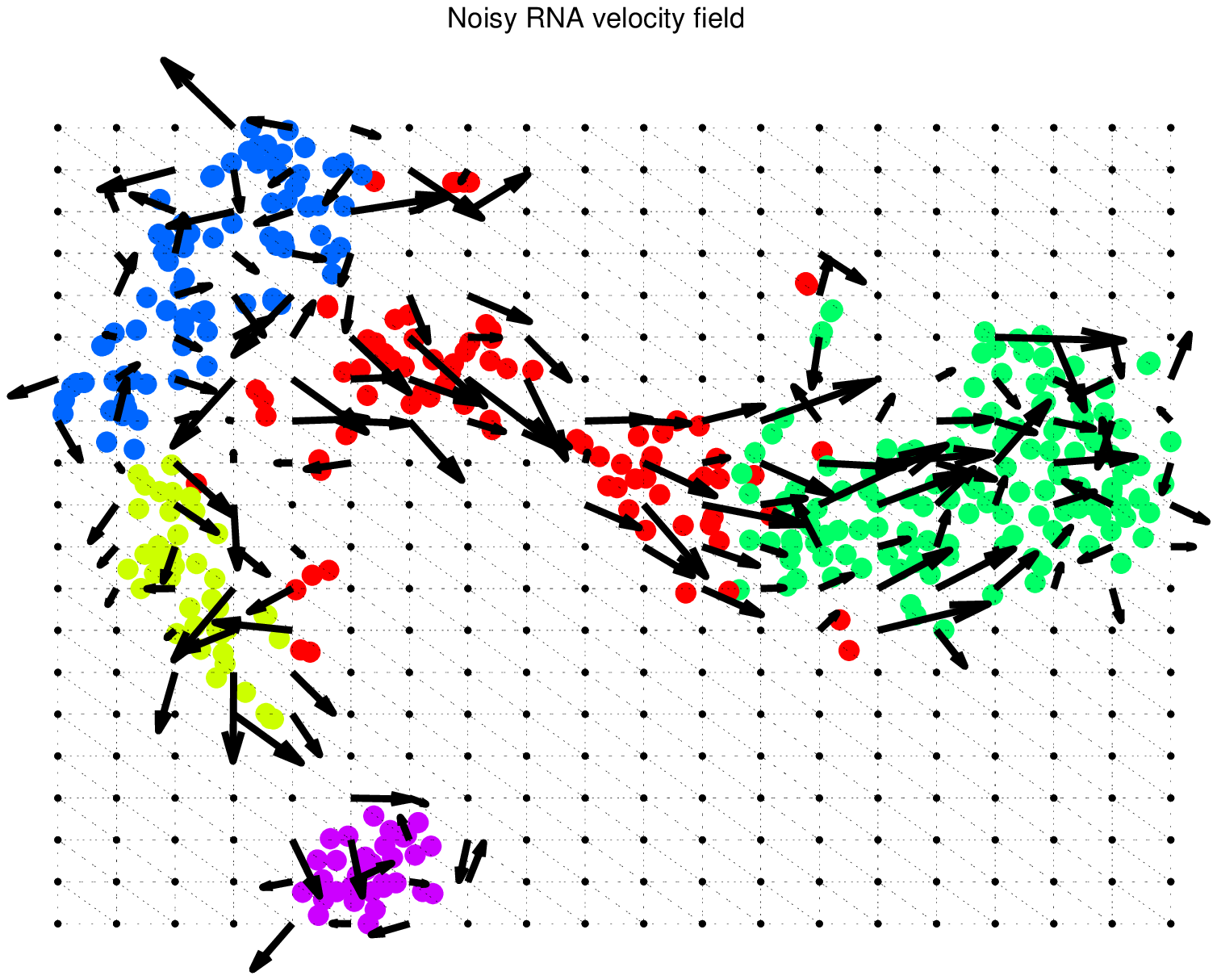}\\  (b)\vspace{0.2cm}\\
\hspace{-0.8cm}
	\includegraphics[width=8.3cm, height =5cm]{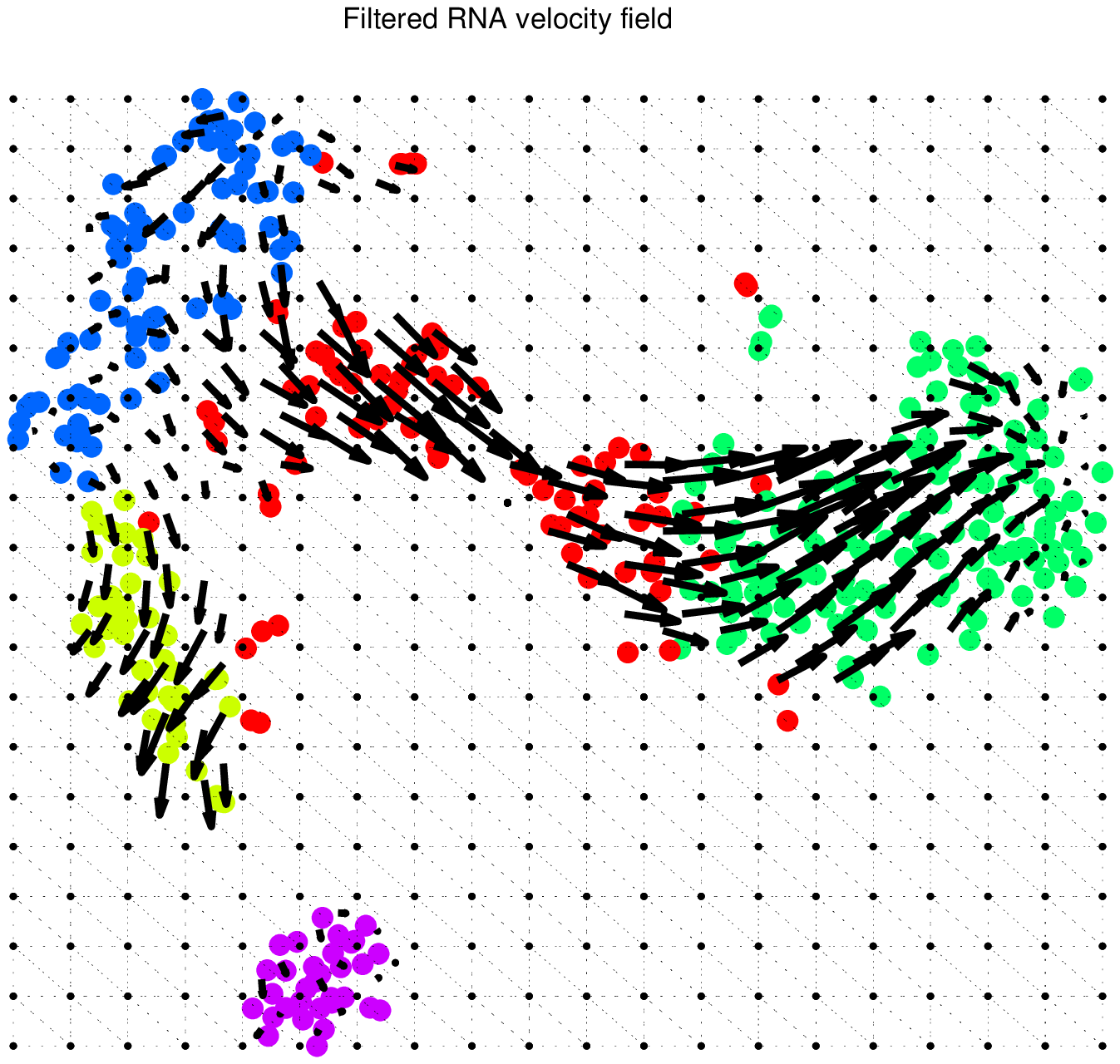}\\  (c)\vspace{0.2cm}\\
\caption{RNA velocity fields: (a) observed and (b) noisy fields; (c) reconstructed field.}
	\label{fig:field_filter}
\end{figure}
 We show now an interesting application of the above procedure to the analysis of the vector field representing the RNA velocity field, defined as the time derivative of the gene expression state \cite{la2018rna},
useful to predict the future state of individual cells and then the direction of change of the entire transcriptome during the RNA sequencing dynamic process.
The RNA velocity can be estimated by distinguishing between nascent (unspliced) and mature (spliced) messenger RNA (mRNA) in common single-cell RNA sequencing protocols \cite{la2018rna}.
We consider the mRNA-seq dataset of mouse chromaffin cell differentation analysed in  \cite{la2018rna}.
An example of RNA velocity vector field is illustrated in Fig. \ref{fig:field_filter}(a).
To analyze such a vector field using the proposed algorithms, we implemented steps 1) and 3) using the discrete exterior calculus operators provided by the PyDEC software developed in  \cite{bell2012pydec}. We considered a Delaunay well-centered triangulation of the continuous bi-dimensional space, which generates the simplicial complexes in Fig. \ref{fig:field_filter} composed of $N=400$ nodes, where we fill all the triangles. The velocity field in Fig. \ref{fig:field_filter}(a) is observed over $156$ vertices and the field vector at each vertex has been obtained  with a local Gaussian kernel smoothing \cite{la2018rna}. The underlying colored cells represented different states of the cell differentiation process. Then, to test our filtering strategy, we added a strong Gaussian noise to the observed velocity field, with  $\mbox{SNR}=0\, \mbox{dB}$, as illustrated in Fig.  \ref{fig:field_filter}(b). This noise is added to incorporate mRNA molecules degradation and model  inaccuracy.
Then we apply  the proposed filtering strategy by first  reconstructing the edge signal as in (\ref{eq:interp_field1}) and then recovering the edge signal as a solution of  the optimization problem $\mathcal{P}_{\mathcal{F}}$. Finally, we reconstruct the vector field using the interpolation formula  (\ref{eq:sharp_disc_rec}), observed at the barycentric points of each triangles.
The result is reported in Fig. \ref{fig:field_filter}(c), where we  can appreciate the robustness of the proposed filtering strategy.

\section{Inference of simplicial complex topology from data}
\label{sec:L1_inference}
The inference of the graph topology from (node) signals is a problem that has received significant attention, as shown in the recent tutorial papers \cite{giannakis2018topology}, \cite{mateos2019connecting}, \cite{dong2019learning} and in the references therein. In this section, we propose algorithms to infer the structure of a simplicial complex. Given the layer structure of a simplicial complex, we propose a {\it hierarchical} approach that infers the structure of one layer, assuming knowledge of the lower order layers.
For simplicity, we focus on the inference of a complex of order $2$ from the observation of a set of $M$ edge (flow) signals $\mX^1:=\left[\bx^1(1), \ldots, \bx^1(M)\right]$, assuming that the topology of the underlying graph is given (or it has been estimated).
So, we start from the knowledge of $\mL_0$, which implies, after selection of an orientation, knowledge of $\mB_1$. Since $\mathbf{L}_1=\mathbf{B}^T_1\mathbf{B}_1+\mathbf{B}_{2}\mathbf{B}_{2}^T$, then we need to estimate $\mB_2$. Before doing that, we check, from the data, if the term $\mB_2 \mB_2^T$ is really needed.
Since, from (\ref{s_decomp_k=1}), the only components that may depend on $\mB_2$ are the solenoidal and harmonic components, we first project the observed flow signal onto the space orthogonal to the space spanned by the irrotational component, by computing
\begin{equation}
\label{proj_on-sh}
\bx^1_{\text{sH}}(m)=\left(\mathbf{I}-\mathbf{U}_{\text{irr}}\mathbf{U}_{\text{irr}}^T\right) \bx^1(m), m=1, \ldots, M,
\end{equation}
where $\mathbf{U}_{\text{irr}}$ is the matrix whose columns are the eigenvectors associated with the  nonzero  eigenvalues of $\mB_1^T\mB_1$.
Then, denoting with   $\mX_{\text{sH}}^{1}=\left[\bx^1_{\text{sH}}(1), \ldots, \bx^1_{\text{sH}}(M)\right]$ the signal matrix of size $E \times M$, we measure the energy of $\mX_{\text{sH}}^{1}$ by taking its norm $\|\mX_{\text{sH}}^{1}\|_F$:
If the norm is smaller than a threshold $\eta$ of the  averaged energy of the observed data set, we stop, otherwise we proceed to estimate $\mB_2$.

The first step in the estimation of $\mB_2$ starts from the  detection of all cliques of three elements present in the graph. Their number is
$T={\rm trace}\left[\left(\mL_0-{\rm diag}(\mL_0 \mathbf{1})\right)^3\right]/6$. For each clique, we choose, arbitrarily, an orientation for the potential triangle filling it. The matrix $\mB_2$ can then be written as
\vspace{-0.2cm}
\begin{equation}
\label{B2}
    \mB_2=\sum_{n=1}^{T}t_n \mathbf{b}_n \mathbf{b}_n^T
\end{equation}
where $\mathbf{b}_n$ is the vector of size $E$ associated with the $n$-th clique, whose entries are all zero except the three entries associated with the three edges of the $n$-th clique. Those entries assume the value $1$ or $-1$, depending on the orientation of the triangle associated with the $n$-th clique. The coefficients $t_n$ in (\ref{B2}) are equal to one, if there is a (filled) triangle on the $n$-th clique, or zero otherwise. The goal of our inference algorithm is then to decide, starting from the data, which entries of $\bt:=[t_1, \ldots, t_T]$ are equal to one or zero. Our strategy is to make the association that enforces a small total variation of the observed flow signal on the inferred complex, using (\ref{relaxed_TV}) as a measure of total variation on flow signals. We propose two alternative algorithms: The first method infers the structure of $\mB_2$ by minimizing  the total variation of the observed data; the second method performs first a Principal Component Analysis (PCA) and then looks for the matrix $\mB_2$ and the coefficients of the expansion over the principal components that minimize the total variation plus a penalty on the model fitting error.\\

\noindent{\bf Minimum Total Variation (MTV) Algorithm}: The goal of this algorithm is to minimize the total variation over the observed data set, assuming knowledge of the number of triangles. The set of coefficients $\bt$ is found as solution of
\beq \label{eq:optm_T}
\begin{array}{lll}
\underset{\mathbf{t} \in \{0,1\}^{T}}{\min}
& q(\mathbf{t})\triangleq \ds\sum_{n=1}^{T}  t_{n}  \text{trace}\left(\mX_{\text{sH}}^{1 \; T}\mathbf{b}_{n} \mathbf{b}_{n}^T \mX_{\text{sH}}^{1}\right)  \quad (\mathcal{P}_{\text{MTV}}) \medskip\\
\quad \, \! \text{s.t.}
&  \parallel \mathbf{t} \parallel_0 = t^*, \quad  t_n \in \{0,1\}, \forall n,
\end{array}
\eeq
where $t^*$ is the number of triangles that we aim to detect. In practice, this number is not known, so it has to be found through cross-validation. Even though problem $\mathcal{P}_{\text{MTV}}$ is non-convex, it can be solved in closed form. Introducing the nonnegative coefficients
$c_n=\sum_{i=1}^{M} \bx_{\text{sH}}^{1 \; T}(i)\mathbf{b}_{n} \mathbf{b}_{n}^T \bx_{\text{sH}}^{1}(i)$, the solution can in fact be obtained by sorting the coefficients $c_n$ in increasing order and then selecting the triangles associated with the indices of
the  $t^*$ lowest coefficients $c_n$. Note that the proposed strategy infers the presence of triangles along the cliques having the minimum curl along its edges. Hence, we expect better performance when the edge signal contains only the  harmonic components, whose curls along the filled triangles is exactly null.\\

\noindent{\bf PCA-based Best Fitting with Minimum Total Variation (PCA-BFMTV)}:
To robustify the MTV algorithm in the case where the edge signal contains also a solenoidal component and is possibly corrupted by noise, we propose now the PCA-BFMTV algorithm that infers the structure of $\mB_2$  {\it and}
the edge signal that best fits the observed data set $\mX^1$, while at the same time exhibiting a small total variation over the inferred topology. The method starts performing a principal component analysis of the observed data by extracting the eigenvectors associated with the largest eigenvalues of the covariance matrix estimated from the observed data set.
More specifically, the proposed strategy is composed of two steps: 1) estimate the covariance matrix $\widehat{\mathbf{C}}_X$ from the edge signal data set $\mX^1_{\text{sH}}$ and builds the matrix $\widehat{\mathbf{U}}_{\text{sH}}$ whose columns are the eigenvectors associated with the $F$ largest eigenvalues of  $\widehat{\mathbf{C}}_X$;  2) model the observed data set as $\mX^1_{\text{sH}}= \widehat{\mathbf{U}}_{\text{sH}} \widehat{\mS}^{1}_{\text{sH}}$ and searches for the coefficient matrix $\widehat{\mS}^{1}_{\text{sH}}$ {\it and} the vector $\bt$ that solve the following problem
\beq \label{eq:optm_T_S}
\begin{array}{lll}
 \!\! \underset{\mathbf{t}\in \{0,1\}^T, \widehat{\mS}^{1}_{\text{sH}}\in \mathbb{R}^{F \times M}}{\min}
&  g(\mathbf{t},\widehat{\mS}^{1}_{\text{sH}}) + \! \gamma \! \parallel \! \mX^1_{\text{sH}}\!\! - \!\widehat{\mathbf{U}}_{\text{sH}} \widehat{\mS}^{1}_{\text{sH}}\parallel^{2}_{F}    \medskip\\
\hspace{1cm}\text{s.t.}
& \parallel \mathbf{t} \parallel_0 = t^*, \quad  t_n \in \{0,1\}, \forall n, \medskip \quad  (\mathcal{P}_{\mathcal{T S}})
\end{array}
\eeq
where $g(\mathbf{t},\widehat{\mS}^{1}_{\text{sH}})\triangleq \ds \sum_{n=1}^{T}  t_{n}\, \text{trace}(\widehat{\mS}_{\text{sH}}^{1 \; T}\widehat{\mathbf{U}}_{\text{sH}}^T\mathbf{b}_{n} \mathbf{b}_{n}^T \widehat{\mathbf{U}}_{\text{sH}} \widehat{\mS}_{\text{sH}}^{1})$
and $\gamma$ is a non-negative coefficient controlling the trade-off between the data fitting error and the signal smoothness. Although problem $\mathcal{P}_{\mathcal{T S}}$ is non-convex, it can be solved using an iterative alternating optimization algorithm
returning successive estimates of $\widehat{\mS}^{1}_{\text{sH}}$,  having fixed $\mathbf{t}$, and alternately $\mathbf{t}$,  given $\widehat{\mS}^{1}_{\text{sH}}$.
Interestingly, each step of the alternating optimization problem admits a closed form solution.
More specifically, at each iteration $k$, the coefficient matrix $\widehat{\mS}^{1}_{\text{sH}}[k]$ can be found as
\beq \label{eq:optm_Sn} \hspace{-0.17cm}
\begin{array}{lll} \nonumber
\widehat{\mS}^{1}_{\text{sH}}[k]=& \!\!\! \!\!\! \underset{\widehat{\mS}^{1}_{\text{sH}}\in \mathbb{R}^{F \times M}}{\arg \min}
& \hspace{-0.3cm}g(\mathbf{t}[k],\widehat{\mS}^1_{\text{sH}})+ \! \gamma \! \parallel \! \mX^1_{\text{sH}}\!\! - \!\widehat{\mathbf{U}}_{\text{sH}} \widehat{\mS}^{1}_{\text{sH}}\parallel^{2}_{F} \quad  (\mathcal{P}_{\mathcal{S}}^{k}).
\end{array}
\eeq
Defining $\mL_{\text{upp}}[k]:= \sum_{n=1}^{T}  t_{n}[k] \mathbf{b}_{n} \mathbf{b}_{n}^T $, problem $\mathcal{P}_{\mathcal{S}}^{k}$ admits the closed form solution
\beq
\widehat{\mS}^{1}_{\text{sH}}[k]=(\mI_F+ \gamma \, \widehat{\mathbf{U}}_{\text{sH}}^T \mL_{\text{upp}}[k] \widehat{\mathbf{U}}_{\text{sH}})^{-1} \widehat{\mathbf{U}}_{\text{sH}}^T\mX^1_{\text{sH}}.
\eeq
Then, given $\widehat{\mS}^{1}_{\text{sH}}[k]$, we can find the vector $\mathbf{t}[k+1]$ using the same method used to solve problem MTV, in (\ref{eq:optm_T}), i.e. setting  $c_n[k]:=\text{trace}(\widehat{\mS}_{\text{sH}}^{1}[k]^{T} \widehat{\mathbf{U}}_{\text{sH}}^T \mathbf{b}_{n} \mathbf{b}_{n}^T \widehat{\mathbf{U}}_{\text{sH}} \widehat{\mS}_{\text{sH}}^{1}[k])$
and taking the entries of  $\mathbf{t}_n[k+1]$  equal to $1$ for the indices corresponding to the first $t^*$ smallest  coefficients of $\{c_n[k]\}_{n=1}^{T}$, and $0$ otherwise. The iterative steps of the proposed strategy are reported in the box entitled Algorithm PCA-BFMTV.
Now we test the validity of our inference algorithms over both simulated and real data.

\noindent
 \begin{algorithm}[t]
\small

    \quad  {Set}  $\gamma>0$, $\mathbf{t}[0] \in \{0,1\}^{T}$, $\parallel \mathbf{t}[0] \parallel_0=t^*$,

     \quad $\mL_{\text{upp}}[0]=\ds\sum_{n=1}^{T}  t_{n}[0] \mathbf{b}_{n} \mathbf{b}_{n}^T $, $k=1$

    \quad  {\textbf{Repeat}}

 \quad \quad           {Set}
 $\widehat{\mS}^{1}_{\text{sH}}[k]=(\mI_F+ \gamma \, \widehat{\mathbf{U}}_{\text{sH}}^T \mL_{\text{upp}}[k-1] \widehat{\mathbf{U}}_{\text{sH}})^{-1} \widehat{\mathbf{U}}_{\text{sH}}^T\mX_{\text{sH}}^1$;
  \quad \quad

   \quad \quad          {Compute} $\mathbf{t}[k]$ {by} {sorting} {the} {coefficients}

    \quad \quad $c_n[k]=\text{trace}(\widehat{\mS}_{\text{sH}}^{1}[k]^{T} \widehat{\mathbf{U}}_{\text{sH}}^T \mathbf{b}_{n} \mathbf{b}_{n}^T \widehat{\mathbf{U}}_{\text{sH}} \widehat{\mS}_{\text{sH}}^{1}[k])$,

\quad \quad {and} {setting} {to} $1$ {the}  {entries} {of} $\bt[k]$ {corresponding} {to} {the} $t^*$

\quad \quad {smallest} {coefficients}, {and} $0$ {otherwise};

  \quad \quad  {Set}       $k=k+1$,

  \quad {\textbf{until}}   {\textbf{convergence}.}

    \caption*{ Algorithm PCA-BFMTV}
 \label{algorithm:Alg_II}
\end{algorithm}

\vspace{-0.3cm}
\label{sec:num_resul}

\noindent {\bf Performance on synthetic data: }
Some of the most critical parameters affecting the goodness of the proposed algorithms are the dimension of the subspaces associated with the solenoidal and harmonic components of the signal and the number of filled triangles in the complex.
In fact, in both MTV and PCA-BFMTV a key aspect is the detection of triangles as the cliques where the associated curl is minimum. Hence, if the signal contains only the harmonic component and there is no noise, the triangles can be identified with no error, because the harmonic component is null over the filled triangles. However, when there is a solenoidal component or noise, there might be decision errors.
\begin{figure}[h]
    \centering
        \includegraphics[width=3.2in, height=2.1in]{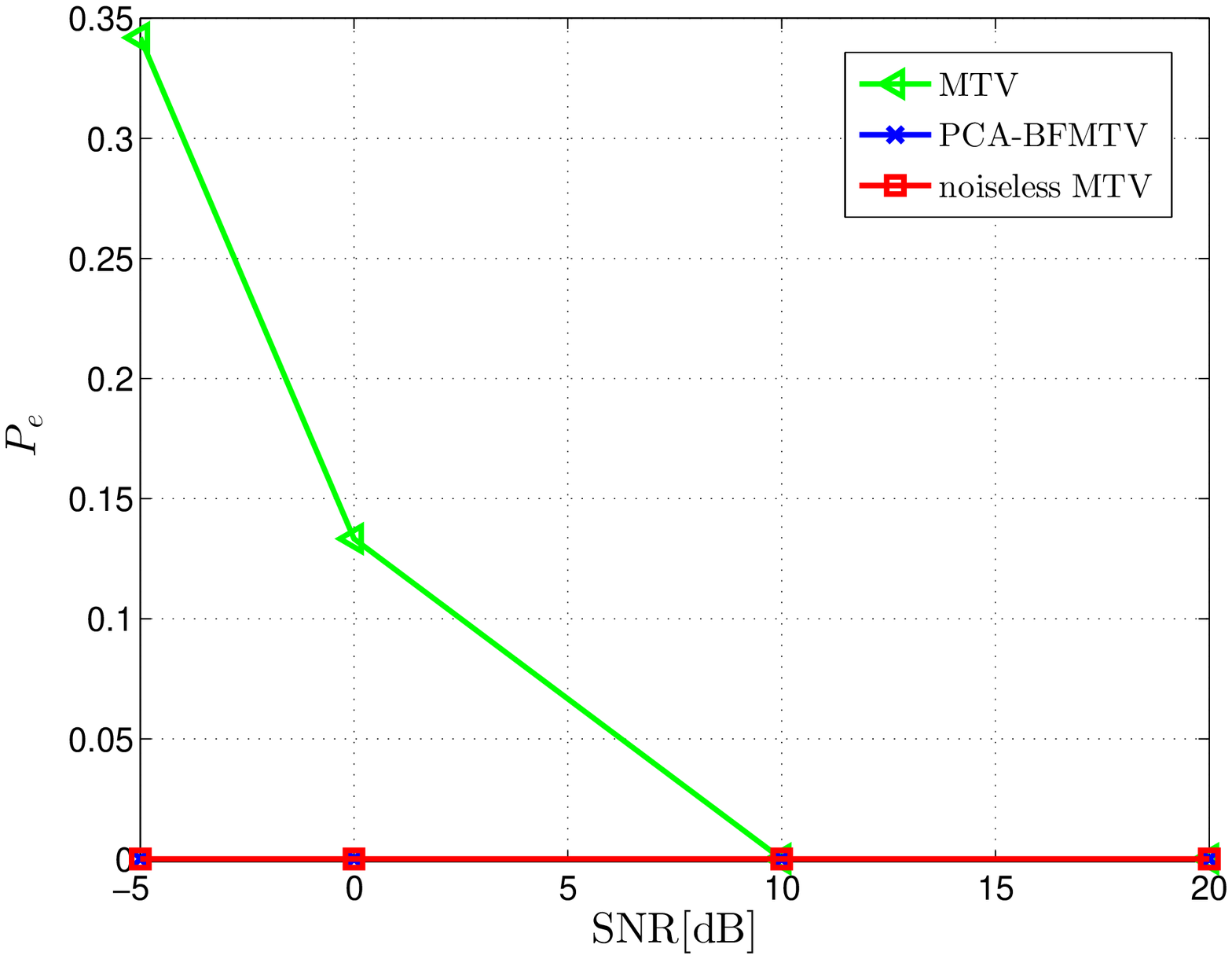}\\ (a) \\
\includegraphics[width=3.3in, height=2.1in]{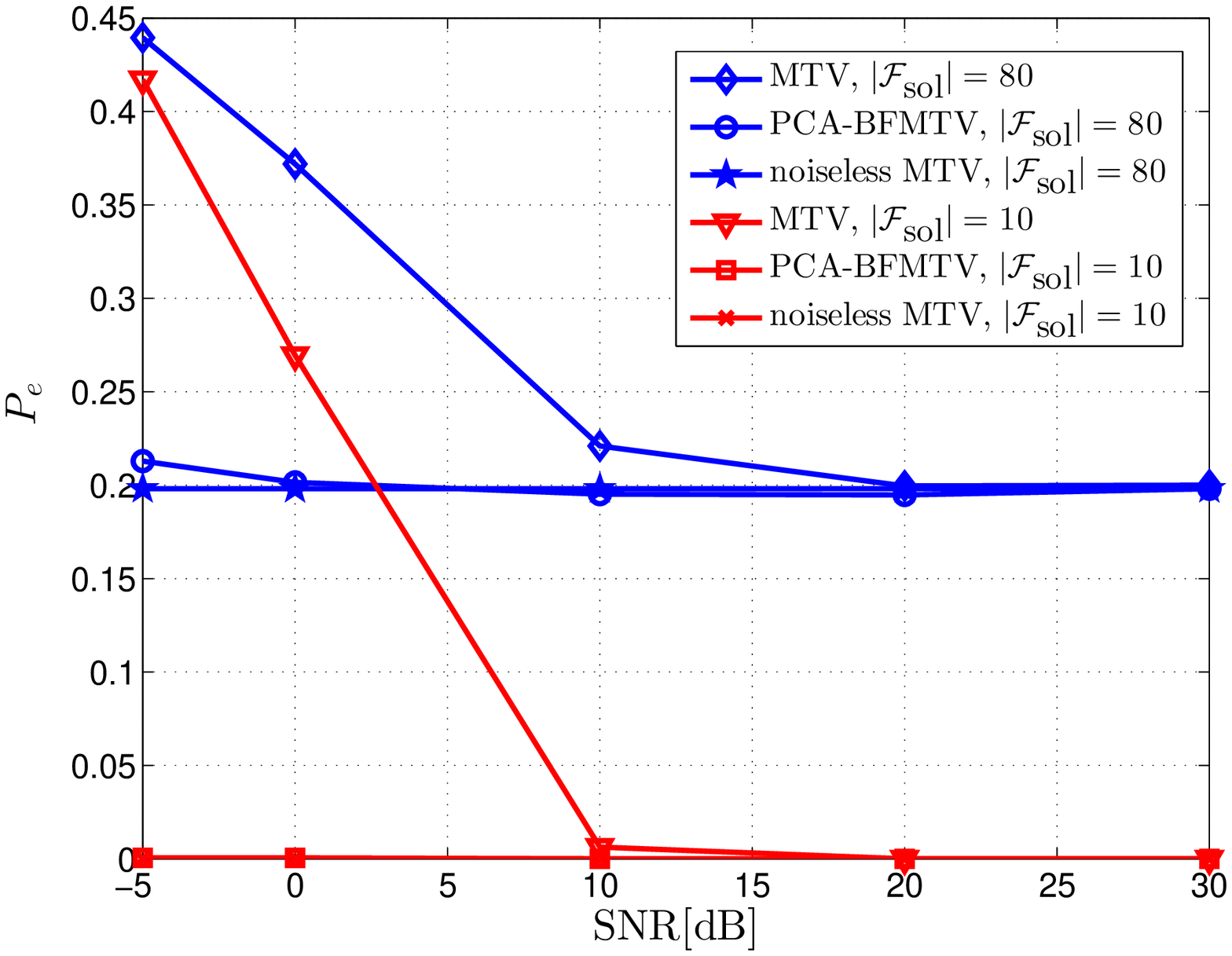}\\(b)
    \caption{Error probability by observing: $(\text{a})$  harmonic noisy signals; $(\text{b})$  harmonic plus solenoidal noisy signals.}
    \label{fig:allsimplex}
\end{figure}
To test the inference capabilities of the proposed methods,  in   Fig. \ref{fig:allsimplex}(a), we report the triangle error  probability $P_e$, defined as the percentage of incorrectly estimated triangles with respect to the number of cliques with three edges in the simplex, versus the signal-to-noise ratio ($\text{SNR}$), when the observation contains only harmonic flows plus noise.
We considered a simplex composed of $N=50$ nodes and with a percentage of filled triangles with respect to the number of second order cliques in the graph equal to $50\%$. We also  set  $M=50$,  $t^{*}=105$ and averaged our numerical results over $10^3$ zero-mean signal and noise random realizations.
The harmonic signal bandwidth $|\mathcal{F}_H|$ is chosen equal to $105$, which is equal to the dimension of the kernel of $\mL_1$.
From Fig. \ref{fig:allsimplex}(a), we can notice, as expected, that in the noiseless case the error probability is zero, since observing only harmonic flows  enables  perfect recovery of the matrix $\mB_2$. In the presence of noise, the MTV algorithm suffers and in fact we observe a non negligible error probability at low  $\text{SNR}$.  However, applying the PCA-BFMTV algorithm enables a significant recovery of performance, as evidenced by the blue curve that is entirely superimposed to the red curve, at least for the SNR values shown in the figure. In this example, the covariance matrix was estimated over   $10^{5}$ independent  observations of the edge signals. The optimal $\gamma$ coefficient was chosen after a cross validation operation following a line search approach aimed to minimize the  error probability.
The improvement of the PCA-BFMTV method with respect to the MTV method is due to the denoising made possible by the projection of the observed signal onto the space spanned by the  eigenvectors associated with the largest eigenvalues of the estimated covariance matrix.

To test the proposed  methods in the case where the observed signal contains both the solenoidal and  harmonic components, in Fig. \ref{fig:allsimplex}(b) we report  $P_e$ versus the $\text{SNR}$, for different values of the dimension of the subspace associated with the solenoidal part, indicated as $|{\cal F}_{\text{sol}}|$.
From Fig. \ref{fig:allsimplex}(b), we can observe that the performance of both algorithms MTV and PCA-BFMTV suffers when the bandwidth $|{\cal F}_{\text{sol}}|$ of the solenoidal component is large, whereas the performance degradation becomes negligible when $|{\cal F}_{\text{sol}}|$ is small. In all cases, PCA-BFMTV significantly outperforms the MTV algorithm, especially at low SNR values, because of its superior noise attenuation capabilities. As further numerical test, we run algorithm $\mathcal{P}^k_{\mathcal{S}}$  replacing the quadratic regularization term with the triple-wise coupling regularization function in  \eqref{eq:lovasz}, by obtaining the same performance of the PCA-BFMTV algorithm.\\
\noindent {\bf Performance on real data: } The real data set we used to test our algorithms is the set of mobile phone calls collected in the city of Milan, Italy, by  Telecom Italia, in the context of the Telecom Big Data Challenge \cite{TIM_data}.
The data are associated with a regular two-dimensional grid, composed of $100 \times 100$ points, superimposed to the city. Every point in the grid represents a square, of size $235$ meters. In particular, the data set collects the number $N_{ij}$ of calls from node $i$ to area $j$, as a function of time. There is an edge between nodes $i$ and $j$ only if there is a non null traffic between those points. The traffic has been aggregated in time, over time intervals of one hour. We define the flow signal over edge $(i, j)$ as $\Phi_{ij}=N_{ij}-N_{ji}$. We map all the values of matrix $\mathbf{\Phi}$ into a vector of flow signals $\bx^1$.
We observed the calls daily traffic during the month of December $2013$. The data are aggregated for each day over an interval of one hour.
Our first objective is to show whether there is an advantage in associating to the observed data set $\mX^1$ a complex of order $2$, i.e. a set of triangles, or it is sufficient to use a purely graph-based approach. In both cases, we rely on the same graph structure, whose $\mB_1$ comes from the data set, after an arbitrary choice of the edges' orientation. If we use a graph-based approach, we can build a basis of the observed flow signals using the eigenvectors of the so called edge Laplacian in \cite{mesbahi2010graph}, i.e. $\mL_1^{\text{low}}=\mB_1^T \mB_1$.
We call this basis $\mU_1^{\text{low}}$. As an alternative, our proposed approach is to build a basis using the eigenvectors of $\mL_1=\mB_1^T \mB_1+\mB_2 \mB_2^T$, where $\mB_2$ is estimated from the data set $\mX^1$ using our MTV algorithm. We call this basis $\mU_1$. To test the relative benefits of using $\mU_1$ as opposed to $\mU_1^{\text{low}}$, we run a basis pursuit algorithm with the goal of finding a good trade-off between the sparsity of the representation and the fitting error. More specifically, for any given observed vector $\bx^1(m)$, we look for the sparse vector $\bs^1$ as solution of the following basis pursuit problem
\cite{Donoho98}:
\beq \label{eq:bas_pur}
\begin{array}{lll}
 \underset{{\bs}^1 \in \mathbb{R}^E}{\text{min}} & \parallel
{\bs}^1\parallel_1   \qquad \qquad (\mathcal{B})\\
 \; \; \text{s.t.} & \parallel
 {\bx}^1 -\mV {\bs}^1\parallel_F \leq \epsilon
 \end{array}
\eeq
 where $\mV=\mU_1$ in our case, while $\mV=\mU_1^{\text{low}}$ in the graph-based approach. As a numerical result, in Fig. \ref{fig:spars} we report the sparsity of the recovered edge signals versus the mean estimation error $\parallel
 {\bx}^1 -\mV {\bs}^1\parallel_F$  considering as signal dictionary $\mV$ the eigenvectors of either the first-order Laplacian or the lower Laplacian.
 We used the MTV algorithm to infer the upper Laplacian matrix by setting the number $t^*$ of triangles   that we may detect equal to $800$.  This value is derived numerically through cross-validation over a training data set, by choosing the value of $t^*$ that yields the minimum norm  $\parallel {\bs}^1\parallel_1$ As can be observed from Fig. \ref{fig:spars},   using  the set of the eigenvectors of $\mL_1$ yields a much smaller MSE, for a given sparsity or, conversely, a much more sparse representation, for a given MSE. An intuitive reason why our method performs so much better than a purely graph-based approach is that the matrix $\mL_1$ has a much reduced kernel space with respect to $\mL_1^{\text{low}}$ and the basis built on $\mL_1$ captures much better some inner structure present in the data by inferring the structure of the additional term $\mB_2$ from the data itself.

As a further test, we tested the two basis $\mU_1$ and $\mU_1^{\text{low}}$ in terms of the capability to recover the entire flow signal from a subset of samples. To this end, we exploit the  band-limited property enforced by the sparse representation, enabling the use of the theory developed in Section V.A. Starting with the representation of each input vector $\bx^1$ as ${\bx}^1 =\mV {\bs}^1$, with either $\mV=\mU_1$ in our case, or $\mV=\mU_1^{\text{low}}$ in the graph-based approach, we used the Max-Det greedy sampling strategy in \cite{tsitsvero2016signals} to select the subset of edges where to sample the flow signal and then we used the recovery rule in (\ref{eq:recov_s1}) to retrieve the overall flow signal from the samples. The numerical results are reported in Fig.  \ref{fig:rec_err1}, representing the normalized recovering error of the edge signal versus the number $N_s$ of samples used to reconstruct the overall  signal. We can  notice how introducing the term $\mB_2 \mB_2^T$, we can achieve a much smaller error, for the same number of samples. \vspace{-0.3cm}

\begin{figure}[t!]
\centering
\includegraphics[width=8.7cm,height=5.7cm]{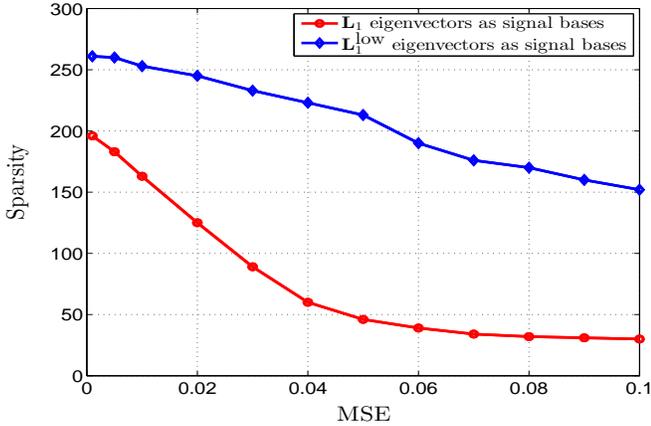}
\caption{Sparsity versus squared error.}
\label{fig:spars}
\end{figure}
\begin{figure}[t]
\centering
\includegraphics[width=8.7cm,height=5.8cm]{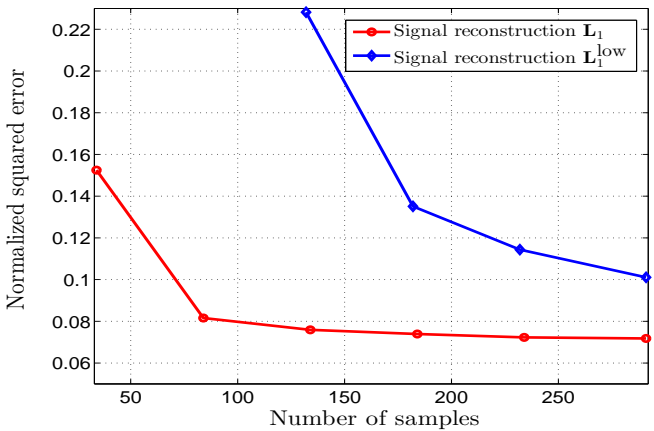}
\caption{Reconstruction error versus number of samples.}
\label{fig:rec_err1}
\end{figure}
\section{Conclusion}
\label{sec:conclusions}
In this paper we have presented an algebraic framework to analyze signals residing over a simplicial complex.  In particular, we focused on signals defined over the edges of a complex of order two, i.e. including triangles, and we showed that exploiting the full algebraic structure of this complex provides an advantage with respect to graph-based tools only. We have not analyzed higher order signals. Nevertheless, looking at the structure of the higher order Laplacian and to the Hodge decomposition, the tools derived in this paper can be directly translated to analyze signals defined over higher order structures. What would be missing in the higher order cases  would be a visual interpretation of the properties of the higher order Laplacian eigenvectors, as we could not be talking about solenoidal or irrotational behaviors.

We proposed a method to infer the structure of a second order simplicial complex from flow data and
we showed that, in applications over real wireless traffic data, the proposed approach can significantly outperform methods based only on graph representations. Furthermore, we proposed a method to analyze discrete vector fields and showed an application to the recovery of the RNA velocity field to predict the evolution of living cells. In such a case, using the eigenvectors of $\mathbf{L}_1$ we have been able to highlight irrotational and solenoidal behaviors that would have been difficult to highlight using only the eigenvectors of $\mathbf{L}_0$. Further developments should include both theoretical aspects, especially in the statistical modeling of random variables defined over a simplicial complex, and the generalization to higher order structures.
\vspace{-0.1cm}
\appendices{}

\section{Proof of Theorem 1}
We begin by briefly recalling the basic properties of Lov\'{a}sz extension \cite{Bach2013}, \cite{Hein2013} and then we proceed to the proof of Theorem 1.
\begin{definition} \label{def_lov}
Given a set ${\cal A}$, of cardinality $A=|\mathcal{A}|$, and its power set $2^{\cal A}$, let us consider a set function $Q: 2^{\mathcal{A}} \rightarrow \mathbb{R}$, with $Q(\emptyset)=0$.
Let $\bx \in \mathbb{R}^{A}$ be a vector of real variables, ordered w.l.o.g. in increasing order, such that $x_1\leq x_2 \leq \ldots \leq x_A$. Define
$C_0 \triangleq \mathcal{A}$  and $C_i \triangleq \{j \in \mathcal{A} : x_j>x_i\}$ for $i>0$.
Then, the Lov\'{a}sz extension $f: \mathbb{R}^{A} \rightarrow \mathbb{R}$
of $Q$, evaluated at $\bx$, is given by \cite{Bach2013}:
\beq \label{Lov_def}
\begin{split}
f(\bx)\,=& \,  Q(\mathcal{A}) x_1+ \ds \sum_{i=1}^{A-1} Q(C_{i})(x_{i+1}-x_i).
\end{split}
\eeq
\end{definition}
An interesting property of set functions is submodularity, defined as follows:
\begin{definition}
A set function  $Q : 2^{\mathcal{A}}\rightarrow \mathbb{R}$  is {\it submodular} if and only if, $\forall \mathcal{B},\mathcal{C}\subseteq \mathcal{A}$, it satisfies the following inequality:
\beq \nonumber
Q(\mathcal{B})+Q(\mathcal{C})\geq Q(\mathcal{B} \cup \mathcal{C})+Q(\mathcal{B} \cap \mathcal{C}).
\eeq
\end{definition}
A fundamental property  is that  a set function $Q$ is submodular iff its Lov\'{a}sz extension $f(\bx)$ is a convex function and that minimizing $f(\bx)$ on $[0,1]^A$, is equivalent to minimizing the set function $Q$ on $2^\mathcal{A}$  \cite[p.172]{Bach2013}.
 Now, we wish to apply Lov\'{a}sz extension to a set function, defined over the edge set $\mathcal{E}$, counting the number of triangles gluing nodes belonging to a tri-partition of the node set $\mathcal{V}$. To this end, let us consider a tri-partition $\{\A_0, \A_1, \A_2\}$ of the node set ${\mathcal{V}}$.
Extending the approach used for graphs, where  the cut-size is introduced as a set function  evaluated on the node set,   to define a triangle-cut size,  we need to introduce a set function  defined on the edge set.
Given a simplex $\mathcal{X}$, assume the faces of $\mathcal{X}$ to be positively oriented with the order of their vertices, i.e. for the
 $k$-simplex $\sigma^{k}_i$ we have $\sigma^{k}_i=[\sigma^{k-1}_{i_0}, \sigma^{k-1}_{i_1}, \ldots, \sigma^{k-1}_{i_{k}}]$ with $i_0<i_1<\ldots<i_k$. Given any non-empty tri-partition  $\{\mathcal{A}_0,\mathcal{A}_1,\mathcal{A}_2\}$  of the node set $\mathcal{V}$,
we define
 \beq \tau(\mathcal{A}_0,\mathcal{A}_1,\mathcal{A}_2)\!=\!\{\sigma^{2}_i \in \mathcal{T} : \, |\sigma^{2}_i\cap \mathcal{A}_j |=1, \, 0\leq j\leq 2\}
 \eeq
as the set of triangles $\sigma^2_i \in \mathcal{T}$ with exactly one vertex in each set $\mathcal{A}_j$, for $j=0,1,2$.
Given any $\sigma^2_i \in \tau(\mathcal{A}_0,\mathcal{A}_1,\mathcal{A}_2)$, there exists a permutation $\pi_i$ of its vertices $\sigma^{0}_{i_j}$, $j=0,1,2$,  such that $\sigma^{2}_i=[\sigma^0_{i_0},\sigma^0_{i_1},\sigma^0_{i_2}]$ and $\sigma^0_{i_j} \in \mathcal{A}_j$, for $j=0,1,2$.
 This returns  a linear ordering\footnote{A binary relation $\prec$ on a set $X$ is a \textit{linear} ordering on $X$ if it is transitive, antisymmetric and for any two elements $x,y \in X$, either $x\prec y$ or $y \prec x$.} of the vertices of the partition such that for all $i<j$, if $\sigma^0_k \in \mathcal{A}_i$ and $\sigma^0_m \in \mathcal{A}_j$, then  $\sigma^0_k$ precedes
 $\sigma^0_m$ in the vertex ordering.
  Given the tri-partition $\{\mathcal{A}_0,\mathcal{A}_1,\mathcal{A}_2\}$, we define the set $\mathcal{E}_T$ of oriented edges such that an oriented edge $[\sigma^0_i,\sigma^0_j] \in \mathcal{E}_T$ if and only if $\sigma^0_i \in \mathcal{A}_n$ and $\sigma^0_j \in \mathcal{A}_m$ for some $n<m$. Thus, we can think of the triangle cut size $F_1$ in \eqref{cut_trian}, equivalently,  as a different function $G(\mathcal{E}_T,\overline{\mathcal{E}}_T)$ defined on sets of oriented edges, where $\overline{\mathcal{E}}_T$ denotes the complement set of $\mathcal{E}_T$ in $\mathcal{E}$.
Given an orientation on the simplicial complex, we can associate  with the set of oriented edges $\mathcal{E}_T$ the vector $\mathbf{1}_{\mathcal{E}_T} \in \mathbb{R}^{E}$,  whose entries are $0$ for edges in
$\overline{\mathcal{E}}_T$ and $\pm 1$ for edges in $\mathcal{E}_T$, depending on the edge orientation.
It is  straightforward to check that $\mB_2^T \mathbf{1}_{\mathcal{E}_T}$, where $\mB_2$  is the edge-triangle incidence matrix defined in (\ref{inc_coeff}),  computes the triangle cut for  $\mathcal{E}_T$ in the sense that each entry of this vector, corresponding to a triangle, is  nonzero if and only if every vertex of that triangle is in a different set of the partition. As an example, considering the simplicial complex in Fig. \ref{cut}, we have $\mathcal{E}_T=\{e_2,e_3,e_4,e_7,e_8\}$  and the associated vector $\mathbf{1}_{\mathcal{E}_T}$ is $[0, 1, 1, 1, 0, 0, 1, 1, 0, 0, 0]^T$.
Then, using (\ref{B2_ex}),  we get $\mB_2^T \mathbf{1}_{\mathcal{E}_T}=[0,1,0]^T$ and the triangle cut size is equal to $1$.

Then, introducing the vector $\mathbf{t}_{\mathcal{E}_T}:= \mB_2^T \mathbf{1}_{\mathcal{E}_T}$, we define the edge set function as
\beq \label{G_A}
G(\mathcal{E}_T,\overline{\mathcal{E}}_T)= \sum_{j=1}^{T} |\mathbf{t}_{\mathcal{E}_T}(j)|.
\eeq
It can be proved that  $G$  is invariant  under any  permutation of the vertexes of the triangles, since the effect of any vertex permutation is a change of sign in the entries of $\mathbf{t}_{\mathcal{E}_T}$. Therefore, we get
  \beq \label{eq:G_F1}
  G(\mathcal{E}_T,\overline{\mathcal{E}}_T)=|\tau(\A_0, \A_1, \A_2)|=F_1(\mathcal{A}_0,\mathcal{A}_1,\mathcal{A}_2).
  \eeq

Now we wish to  derive the Lov\'{a}sz extension of the set function $G(\mathcal{E}_T,\overline{\mathcal{E}}_T)$.
Denote with  $\sigma^{2}_{i}=[\sigma^{1}_{i_0},\sigma^{1}_{i_1},\sigma^{1}_{i_2}]$ the oriented $2$-order simplex  where $1 \leq i_k\leq E$, for $k=0,1,2$.  To make explicit the dependence of  the function $G(\mathcal{E}_T,\overline{\mathcal{E}}_T)$
on the edge indexes  $i_0,i_1,i_2$,  we rewrite  (\ref{G_A}) as
 \beq \label{G_2}
G(\mathcal{E}_T,\overline{\mathcal{E}}_T)= \!\!\ds \sum_{\sigma^2_{i} \in \mathcal{T}} \left|\, \tilde{G}_{i}(\mathcal{E}_T,\overline{\mathcal{E}}_T) \right|
 \eeq
where, exploiting the structure of $\mB_2$, given in  (\ref{inc_coeff}),
each term $\tilde{G}_{i}$ can be written as (we omit the dependence of $\tilde{G}_{i}$ on the edge set partition,  for notation simplicity)
\beq \label{G_njk}
 \begin{split}
 \tilde{G}_{i}&=
 \mathbf{1}_{\mathcal{E}_T}(\sigma^{1}_{i_0})-\mathbf{1}_{\mathcal{E}_T}(\sigma^{1}_{i_1})+ \mathbf{1}_{\mathcal{E}_T}(\sigma^{1}_{i_2}),
 \end{split}
 \eeq
where $\mathbf{1}_{\mathcal{E}_T}(\sigma^{1}_{i_k})$, $k=0,1,2$, are the entries of $\mathbf{1}_{\mathcal{E}_T}$ corresponding to the edge $\sigma^{1}_{i_k}$.
$\tilde{G}_{i}$ is then a set function defined only on the power set of $\{i_0,i_1,i_2\}$. We can now derive
 its Lov\'{a}sz extension $f(x_{i_0},x_{i_1},x_{i_2})$.
 First,  assume $x_{i_0}\leq x_{i_1}\leq x_{i_2}$.
  Then, from Def. \ref{def_lov}, we have   $C_0=\{i_0,i_1,i_2\}$, $C_1=\{i_1,i_2\}$ and $C_2=\{i_2\}$.
   Therefore, from (\ref{G_njk}), it holds:
\beq
\begin{array}{lll}
\tilde{G}_{i}(C_0)=\mathbf{1}_{\mathcal{E}_T}(\sigma^1_{i_0})-\mathbf{1}_{\mathcal{E}_T}(\sigma^1_{i_1})+   \mathbf{1}_{\mathcal{E}_T}(\sigma^1_{i_2})=1 \\
\tilde{G}_{i}(C_1)=-\mathbf{1}_{\mathcal{E}_T}(\sigma^1_{i_1})+\mathbf{1}_{\mathcal{E}_T}(\sigma^1_{i_2})=0 \\
\tilde{G}_{i}(C_2)=\mathbf{1}_{\mathcal{E}_T}(\sigma^1_{i_2})=1.
\end{array}
\eeq
Then, from (\ref{Lov_def}) we get:
\beq \nonumber
\begin{split}
\!\!\!f(x_{i_0},x_{i_1},x_{i_2})\,= &\,  \tilde{G}_{i}(C_0) x_{i_0} +\tilde{G}_{i}(C_1) (x_{i_1}-x_{i_0}) \medskip \\ & +\tilde{G}_{i}(C_2) (x_{i_2}-x_{i_1})= x_{i_0}-x_{i_1}+x_{i_2}.
\end{split}
\eeq
 Let us now assume $x_{i_1}\leq x_{i_0}\leq x_{i_2}$.
 Thereby,  we have $C_0=\{i_0,i_1,i_2\}$, $C_1=\{i_0,i_2\}$ and $C_2=\{i_2\}$, so that it  results:
\beq
\begin{array}{lll}
\tilde{G}_{i}(C_0)=\mathbf{1}_{\mathcal{E}_T}(\sigma^1_{i_0})-\mathbf{1}_{\mathcal{E}_T}(\sigma^1_{i_1})+ \mathbf{1}_{\mathcal{E}_T}(\sigma^1_{i_2})=1 \\
\tilde{G}_{i}(C_1)=\mathbf{1}_{\mathcal{E}_T}(\sigma^1_{i_0})+\mathbf{1}_{\mathcal{E}_T}(\sigma^1_{i_2})=2 \\
\tilde{G}_{i}(C_2)=\mathbf{1}_{\mathcal{E}_T}(\sigma^1_{i_2})=1
\end{array}
\eeq
and \vspace{-0.2cm}
\beq \nonumber
f(x_{i_0},x_{i_1},x_{i_2})=2(x_{i_0}-x_{i_1})+(x_{i_2}-x_{i_0})+x_{i_1}=x_{i_0}-x_{i_1}+x_{i_2}.
\eeq
By following similar derivations, it is not difficult to show that for  $x_{i_0}\leq x_{i_2} \leq x_{i_1}$, $x_{i_1}\leq x_{i_2} \leq x_{i_0}$,
$x_{i_2}\leq x_{i_1} \leq x_{i_0}$ and $x_{i_2}\leq x_{i_0} \leq x_{i_1}$, it holds $f(x_{i_0},x_{i_1},x_{i_2})\,= x_{i_0}-x_{i_1}+x_{i_2}$.
Therefore, from (\ref{G_2}), defining the edge signal $\bx^1=[x_1,\ldots, x_E]^T$,
we can write the Lov\'{a}sz extension of $G(\mathcal{E}_T,\bar{\mathcal{E}_T})$  exploiting the additive property  \cite{Bach2013} as
\beq \vspace{-0.1cm}
f_1(\bx^1)= \!\!\ds \sum_{\sigma^2_{i} \in \mathcal{T}} \left| x_{i_0}^1-x_{i_1}^1+x_{i_2}^1 \right|
\eeq
or, equivalently, as
\beq \label{fx_abs}
f_1(\bx^1)= \ds \sum_{j=1}^{T} \left|\sum_{i=1}^{E} B_2(i,j) x_i^1\right|.
\eeq
From the equality
$G(\mathcal{E}_T,\bar{\mathcal{E}_T})=F_1(\mathcal{A}_0,\mathcal{A}_1,\mathcal{A}_2)$ in (\ref{eq:G_F1}), we can state that $f_1(\bx^1)$ is the
Lov\'{a}sz extension of $F_1$.
 This completes the proof of Theorem $1$. \vspace{-0.2cm}


\bibliographystyle{IEEEbib}
\bibliography{main}
\clearpage
\newpage
\section{Supporting material}
This document contains some supporting materials complementing the paper:
``Topological Signal Processing over Simplicial Complexes''  to  appear in IEEE Transactions on Signal Processing. Section A contains Proposition $2$,  and its proof, that will be instrumental in proving Theorems $3$
and $4$, respectively, in Sections B and C.
\subsection{Proposition 2}
We need first to derive the relationship between the bandwidth of the edge signal $\bs^{1}_{\text{irr}}=\mB_1^T \bs^{0}$ and that of
the vertex signal $\bs^{0}$, as
stated in the following proposition.
\begin{proposition} \label{s_irr_band}
Let $\bs^{1}_{\text{irr}}=\mB_1^T \bs^{0}$ be the irrotational part of $\bs^1$ with $\bs^{0}=\mathbf{F}^{0}_{\mathcal{F}_0} \bs^0$ a \textcolor{black}{$\mathcal{F}_0$-bandlimited vertex signal. Then, $\bs^{1}_{\text{irr}}$ is a $\mathcal{F}_{\text{irr}}$-bandlimited edge signal with $\mathcal{F}_{\text{irr}}$ the set of indexes of the eigenvectors of $\mL_1$ stacked in the columns of $\mU_{\mathcal{F}_{\text{irr}}}$,  such that $\mU_{\mathcal{F}_{\text{irr}}}=\mB_1^T \mU^0_{\mathcal{F}_{0}-\mathcal{C}_{1}}$. The bandwidth of $\bs^{1}_{\text{irr}}$ is $|\mathcal{F}_{\text{irr}}|=|\mathcal{F}_0|-c_1$, where $c_1 \geq 0$ is the number of eigenvectors in the bandwidth of $\bs^{0}$ belonging to $\text{ker}(\mL_0)$.}
\end{proposition}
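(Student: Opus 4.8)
The plan is to expand the bandlimited vertex signal in the eigenbasis of $\mL_0$, apply the incidence operator $\mB_1^T$ term by term, and recognize the resulting vectors as eigenvectors of $\mL_1$ via Property 2) of Proposition 1. First I would write $\bs^0 = \sum_{i \in \mathcal{F}_0} \widehat{s}^0_i \, \mathbf{u}^0_i$, where the $\mathbf{u}^0_i$ are orthonormal eigenvectors of $\mL_0 = \mB_1 \mB_1^T$ with eigenvalues $\lambda_i^0$, so that $\bs^1_{\text{irr}} = \mB_1^T \bs^0 = \sum_{i \in \mathcal{F}_0} \widehat{s}^0_i \, \mB_1^T \mathbf{u}^0_i$. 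The whole argument then reduces to understanding the image of each eigenvector under $\mB_1^T$.

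The second step is to split $\mathcal{F}_0$ according to whether $\mathbf{u}^0_i$ lies in $\text{ker}(\mL_0)$. For $i \in \mathcal{C}_1$, i.e. $\mathbf{u}^0_i \in \text{ker}(\mL_0)$, I would note $\|\mB_1^T \mathbf{u}^0_i\|^2 = (\mathbf{u}^0_i)^T \mL_0 \mathbf{u}^0_i = 0$, so these $c_1$ terms vanish identically; this is precisely what removes $c_1$ degrees of freedom. For the remaining indices $i \in \mathcal{F}_0 - \mathcal{C}_1$, the eigenvalue $\lambda_i^0$ is nonzero, and a direct computation (equivalently Property 2) gives $\mL_1(\mB_1^T \mathbf{u}^0_i) = \mB_1^T \mB_1 \mB_1^T \mathbf{u}^0_i + \mB_2 \mB_2^T \mB_1^T \mathbf{u}^0_i$, where the second term is $\b0$ because $\mB_2^T \mB_1^T = (\mB_1 \mB_2)^T = \b0$, while the first equals $\mB_1^T \mL_0 \mathbf{u}^0_i = \lambda_i^0 \, \mB_1^T \mathbf{u}^0_i$. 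Hence each $\mB_1^T \mathbf{u}^0_i$ is an eigenvector of $\mL_1$ associated with the nonzero eigenvalue $\lambda_i^0$, consistent with Properties 1)--3).

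Collecting these vectors as the columns of $\mU_{\mathcal{F}_{\text{irr}}} = \mB_1^T \mU^0_{\mathcal{F}_{0}-\mathcal{C}_{1}}$ and defining $\mathcal{F}_{\text{irr}}$ as the corresponding index set in the spectrum of $\mL_1$, the representation of $\bs^1_{\text{irr}}$ involves only these $|\mathcal{F}_0| - c_1$ eigenvectors, which is exactly the $\mathcal{F}_{\text{irr}}$-bandlimited conclusion with $|\mathcal{F}_{\text{irr}}| = |\mathcal{F}_0| - c_1$. To confirm these columns are genuine spectral axes of $\mL_1$ I would check orthogonality via $(\mB_1^T \mathbf{u}^0_i)^T (\mB_1^T \mathbf{u}^0_j) = \lambda_j^0 (\mathbf{u}^0_i)^T \mathbf{u}^0_j = 0$ for $i \neq j$.

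I expect the only delicate point to be bookkeeping rather than a deep obstacle. One must verify that the $\mB_1^T \mathbf{u}^0_i$ are linearly independent, equivalently that $\mB_1^T$ is injective on $\text{ker}(\mL_0)^{\perp}$, which holds because $\text{ker}(\mB_1^T) = \text{ker}(\mL_0)$ by the same norm identity used above. The remaining subtlety is the normalization $\|\mB_1^T \mathbf{u}^0_i\| = \sqrt{\lambda_i^0} \neq 1$, so the stated identity $\mU_{\mathcal{F}_{\text{irr}}} = \mB_1^T \mU^0_{\mathcal{F}_{0}-\mathcal{C}_{1}}$ should be read as an equality of spanned subspaces (or up to a diagonal rescaling by $(\lambda_i^0)^{-1/2}$), which is all the bandlimited statement actually requires.
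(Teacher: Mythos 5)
Your proof is correct and follows essentially the same route as the paper's: both expand the bandlimited $\bs^0$ on the eigenbasis of $\mL_0$, discard the $\mathcal{C}_1$ terms, and use Property 2 of Proposition 1 (the intertwining $\mB_1^T \mL_0 = (\mB_1^T\mB_1)\mB_1^T$ together with $\mB_1\mB_2=\mathbf{0}$) to identify the surviving images $\mB_1^T\mathbf{u}^0_i$ as eigenvectors of $\mL_1$, so that $\bs^1_{\text{irr}}$ lies in their span, i.e. is $\mathcal{F}_{\text{irr}}$-bandlimited with $|\mathcal{F}_{\text{irr}}|=|\mathcal{F}_0|-c_1$. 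Your closing caveat about normalization --- that $\mU_{\mathcal{F}_{\text{irr}}}=\mB_1^T\mU^0_{\mathcal{F}_0-\mathcal{C}_1}$ holds only up to rescaling each column by $(\lambda_i^0)^{-1/2}$, and should be read as an equality of spanned subspaces --- is in fact a point the paper glosses over (its step $\bu^0_i=\mB_1\bu_{\ell_i}$ likewise holds only up to scale), so your version is, if anything, slightly more careful.
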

\begin{proof}
Let us define  $\mU_{\mathcal{F}_0}^{0}$ the $N\times |\mathcal{F}_0|$ matrix whose columns $\bu_i^{0}$,  $ \forall i \in  \mathcal{F}_0$ are the eigenvectors of the 0-order Laplacian $\mB_1 \mB_1^T$.
Since $\bs^{1}_{\text{irr}}=\mB_1^T \bs^{0}$ we get
\beq \label{eq:s_1s0}
\bs^{1}_{\text{irr}}=\mB_1^T \bs^{0}=\mB_1^T \mU_{\mathcal{F}_0}^{0} (\mU_{\mathcal{F}_0}^{0})^T \bs^{0}
\eeq
where the last equality follows from the bandlimitedness of $\bs^{0}$, i.e. $\bs^{0}=\mU_{\mathcal{F}_0}^{0} (\mU_{\mathcal{F}_0}^{0})^T \bs^{0}$.
From  the property 2) in Prop. 1, at each eigenvector $\bu_i^{0}$   of $\mB_1 \mB_1^T$ with $\bu_i^{0} \notin \text{ker}(\mB_1 \mB_1^T)$ corresponds an eigenvector $\mB_1^T \bu_i^{0}$ of $\mB_1^T \mB_1$ with the same eigenvalue. \textcolor{black}{Let $\mU_{\mathcal{F}_{\text{irr}}}$ denote the $E\times |\mathcal{F}_{\text{irr}}|$ matrix whose columns are the eigenvectors of $\mL_1$ associated to the frequency index set $\mathcal{F}_{\text{irr}}$ with
\beq \label{eq:F_irr}
\mathcal{F}_{\text{irr}}=\{ \ell_i \in \mathcal{F}  :  \bu_{\ell_i}=\mB_1^T \bu^0_i,  \bu_i^{0} \notin \text{ker}(\mB_1 \mB_1^T), \forall i \in \mathcal{F}_{0}-\mathcal{C}_{1} \}.\eeq}
Then,
if  $\mU_{\mathcal{F}_0}^{0}=[\mU_{\mathcal{C}_1}^{0}, \mU_{\mathcal{F}_0-\mathcal{C}_1}^{0}]$,
we get
\beq
\mB_1^T \mU_{\mathcal{F}_0}^{0}=[\mathbf{O}_{\mathcal{C}_1}, \, \mU_{\mathcal{F}_{\text{irr}}}].
\eeq
Therefore, equation (\ref{eq:s_1s0}) reduces to
\beq \label{eq:s12}
\bs^{1}_{\text{irr}}=\mU_{\textcolor{black}{\mathcal{F}_{\text{irr}}}} (\mU_{\mathcal{F}_0-\mathcal{C}_1}^{0})^T \bs^{0}.
\eeq
From the equality $\mB_1^T\mB_1 \bu_{\ell_i}=\lambda_{\ell_i} \bu_{\ell_i}$, multiplying
both sides by $\mB_1$, we easily derive $\bu^{0}_i=\mB_1 \bu_{\ell_i}$, $\forall \bu_{\ell_i} \notin \text{ker}(\mB_1^T \mB_1)$ so that equation (\ref{eq:s12}) is equivalent to
\beq \label{eq:s13}
\bs^{1}_{\text{irr}}=\mU_{\textcolor{black}{\mathcal{F}_{\text{irr}}}} \mU_{\textcolor{black}{\mathcal{F}_{\text{irr}}}}^{T}\mB_1^T \bs^{0}.
\eeq
Since $\bs^{1}_{\text{irr}}=\mB_1^T \bs^{0}$,  we can rewrite (\ref{eq:s13}) as
\beq
\bs^{1}_{\text{irr}}=\mU_{\mathcal{F}_{\text{irr}}} \mU_{\mathcal{F}_{\text{irr}}}^{T} \bs^{1}_{\text{irr}}.
\eeq
This last equality proves that $\bs^{1}_{\text{irr}}$ is a \textcolor{black}{$\mathcal{F}_{\text{irr}}$-bandlimited edge signal with $\mathcal{F}_{\text{irr}}$ as in (\ref{eq:F_irr}) and bandwidth $|\mathcal{F}_{\text{irr}}|=|\mathcal{F}_0|-c_1$, defining $c_1=|\mathcal{C}_1|\geq 0$.}
\end{proof}
\subsection{Proof of Theorem 3}
\label{Proof of Theorem 3}
Given the sampled signals $\bs^{1}_{\mathcal{S}}$ and $\bs^{0}_{\mathcal{A}}$, we get
\beq
\begin{array}{lll}
\bs^{0}_{\mathcal{A}}=\mD_{\mathcal{A}} \bs^0\\
\bs^{1}_{\mathcal{S}}=\mD_{\mathcal{S}} \bs^1=\mD_{\mathcal{S}} \bar{\bs}^1+ \mD_{\mathcal{S}} \mB_1^T\bs^0
\end{array}
\eeq
with $\bar{\bs}^1=\bs^1_{\text{sol}}+\bs^1_{{H}}$.
Then, since it holds $\bs^0=\mF_{\mathcal{F}_0}^0\bs^0$ and $\bs^1=\mF_{\mathcal{F}}\bs^1$, we easily obtain
\beq
\left[\begin{array}{lll}
\bs^{0}_{\mathcal{A}} \medskip\\
\bs^{1}_{\mathcal{S}}
\end{array}\right]=\mG \left[\begin{array}{lll}
\bs^{0} \medskip\\
\bar{\bs}^{1}
\end{array}\right],
\eeq
where
\beq
\begin{array}{lll}
\mG& =\left[ \begin{array}{lll}\mD_{\mathcal{A}} \mF_{\mathcal{F}_0}^0 & \mathbf{O}\medskip\\
\mD_{\mathcal{S}} \mB_1^T \mF_{\mathcal{F}_0}^0 &
 \mD_{\mathcal{S}} \mF_{\mathcal{F}_{\text{sH}}}\end{array}\right] \medskip \\ &= \left[ \begin{array}{lll}(\mI-\overline{\mD}_{\mathcal{A}} \mF_{\mathcal{F}_0}^0)  & \mathbf{O}\medskip\\
(\mI-\overline{\mD}_{\mathcal{S}}) \mB_1^T \mF_{\mathcal{F}_0}^0 &
 (\mI-\overline{\mD}_{\mathcal{S}} \mF_{\mathcal{F}_{\text{sH}}}) \end{array}\right]
 \end{array}
\eeq
with $\mathcal{F}_{\text{sH}}$ the set of frequency indexes in $\mathcal{F}$ corresponding to the eigenvectors of $\mL_1$ belonging to the solenoidal and harmonic subspaces.
Assuming that both the conditions
$\|\bar{\mD}_{\mathcal{A}}\mF_{\mathcal{F}_0}^0\|_2<1$ and $\|\bar{\mD}_{\mathcal{S}}\mF_{\mathcal{F}_{\text{sH}}}\|_2<1$ hold true, the matrix $\mG$ is invertible and  from the inverse of partitioned matrices \cite{Bernstein}, we get
\beq
\begin{array}{lll}
\mQ=\mG^{-1}=\left[ \begin{array}{lll}(\mI-\overline{\mD}_{\mathcal{A}} \mF_{\mathcal{F}_0}^0)^{-1}  & \mathbf{O}\medskip\\
 \mP &
 (\mI-\overline{\mD}_{\mathcal{S}} \mF_{\mathcal{F}_{\text{sH}}})^{-1} \end{array}\right]\end{array}
\eeq
with $\mP=- (\mI-\overline{\mD}_{\mathcal{S}} \mF_{\mathcal{F}_{\text{sH}}})^{-1}{\mD}_{\mathcal{S}} \mB_1^T \mF_{\mathcal{F}_0}^0 (\mI-\overline{\mD}_{\mathcal{A}} \mF_{\mathcal{F}_0}^0)^{-1}$.
Then we can recovery the signals $\bs^0$ and $\bar{\bs}^{1}$ as
\beq \label{eq:recover11}
\left[\begin{array}{lll}
\bs^{0} \medskip\\
\bar{\bs}^{1}
\end{array}\right]=\mQ \left[\begin{array}{lll}
\bs^{0}_{\mathcal{A}} \medskip\\
\bs^{1}_{\mathcal{S}}
\end{array}\right].
\eeq
This concludes the proof of point a). Let us  prove next point b).
From Proposition $2$ it results that, if $\bs^0$ is  $\mathcal{F}_0$-bandlimited, then $\bs^{1}_{\text{irr}}=\mB_1^T \bs^{0}$ is a
\textcolor{black}{$\mathcal{F}_{\text{irr}}$-bandlimted  signal with $\mathcal{F}_{\text{irr}}$ given  in (\ref{eq:F_irr})}. This implies that the edge signal $\bs^1=\bs^{1}_{\text{sol}}+\bs^{1}_{{H}}+\bs^{1}_{\text{irr}}$ is a \textcolor{black}{$\mathcal{F}$-bandlimited edge signal with $\mathcal{F}=\mathcal{F}_{\text{sH}} \cup \mathcal{F}_{\text{irr}}$} and bandwidth  $|\mathcal{F}|=|\mathcal{F}_{\text{sH}}|+ |\mathcal{F}_0|-c_1$.
Let us now consider the system in (\ref{eq:recover11}). We  get
\beq \label{eq:recover2}
\begin{array}{lll}
\bs^{0} &=(\mI-\overline{\mD}_{\mathcal{A}} \mF_{\mathcal{F}_0}^0)^{-1} \bs^{0}_{\mathcal{A}}\\
\bar{\bs}^{1}& =-(\mI-\overline{\mD}_{\mathcal{S}} \mF_{\mathcal{F}_{\text{sH}}})^{-1} \mD_{\mathcal{S}} \mB_1^T \mF_{\mathcal{F}_0}^0  (\mI-\overline{\mD}_{\mathcal{A}} \mF_{\mathcal{F}_0}^0)^{-1} \bs^{0}_{\mathcal{A}}+\medskip \\ & \quad (\mI-\overline{\mD}_{\mathcal{S}} \mF_{\mathcal{F}_{\text{sH}}})^{-1}\bs^{1}_{\mathcal{S}} .
\end{array}
\eeq
Using the first equation in (\ref{eq:recover2}) and the fact that $\mD_{\mathcal{S}} \bs^{1}=
\mD_{\mathcal{S}} \bar{\bs}^{1}+\mD_{\mathcal{S}} \mB_{1}^{T} \bs^{0}$, it holds
\beq \label{eq:s_solband}
\begin{array}{lll}
\bar{\bs}^{1}& =(\mI-\overline{\mD}_{\mathcal{S}} \mF_{\mathcal{F}_{\text{sH}}})^{-1} (\mD_{\mathcal{S}} \bar{\bs}^{1} + \mD_{\mathcal{S}} \mB_1^{T}\bs^{0}-\mD_{\mathcal{S}} \mB_1^{T}\mF_{\mathcal{F}_0}^0 \bs^{0}) \medskip\\
&=(\mI-\overline{\mD}_{\mathcal{S}} \mF_{\mathcal{F}_{\text{sH}}})^{-1} \mD_{\mathcal{S}} \bar{\bs}^{1}
\end{array}
\eeq
where the last equality follows from  $\bs^0=\mF_{\mathcal{F}_0}^0 \bs^0$. Hence, from  (\ref{eq:s_solband}), it follows that to perfectly recover the solenoidal and harmonic parts of $\bs^{1}$ we need a number of samples $N_1$ at least  equal to the signal bandwidth $|\mathcal{F}_{sH}|$. Finally, from   the first equation in (\ref{eq:recover2}) it follows that to perfectly recovering $\bs^0$ we need a number of samples $N_0$ at least  equal to the bandwidth $|\mathcal{F}_0|$ of the vertex signal $\bs^0$. This concludes the proof of point b) in the theorem.
\subsection{Proof of Theorem 4}
\label{Proof of Theorem 4}
Given the sampled signals  $\bs^{0}_{\mathcal{A}}$, $\bs^{1}_{\mathcal{S}}$ and $\bs^{2}_{\mathcal{M}}$, we have
\beq
\begin{array}{lll}
\bs^{0}_{\mathcal{A}}=\mD_{\mathcal{A}} \bs^0 \medskip\\
\bs^{1}_{\mathcal{S}}= \, \mD_{\mathcal{S}} \bs^1=\mD_{\mathcal{S}} \mB_2 {\bs}^2+ \mD_{\mathcal{S}}  {\bs}^1_{H}+ \mD_{\mathcal{S}} \mB_1^T\bs^0  \medskip\\
\bs^{2}_{\mathcal{M}}=\mD_{\mathcal{M}} \mF_{\mathcal{F}_2}^2 \bs^2 .
\end{array}
\eeq
Then, using the bandlimitedness property, so that $\bs^0=\mF_{\mathcal{F}_0}^0 \bs^0$, $\bs^1_H=\mF_{\mathcal{F}_H} \bs^1_H$ and
$\bs^2=\mF_{\mathcal{F}_2}^2 \bs^2$, we get
\beq
\left[\begin{array}{lll}
\bs^{0}_{\mathcal{A}} \medskip\\
\bs^{1}_{\mathcal{S}}\medskip\\
\bs^{2}_{\mathcal{M}}
\end{array}\right]=\bar{\mG} \left[\begin{array}{lll}
\bs^{0} \medskip\\
{\bs}^{1}_H\medskip\\
{\bs}^{2}
\end{array}\right]
\eeq
with
\beq
\begin{array}{lll}
\bar{\mG}=\left[ \begin{array}{lll}{\mD}_{\mathcal{A}} \mF_{\mathcal{F}_0}^0  & \mathbf{O} & \mathbf{O} \medskip\\
 {\mD}_{\mathcal{S}}\mB_1^T \mF_{\mathcal{F}_0}^0  & {\mD}_{\mathcal{S}} \mF_{\mathcal{F}_H}
  & {\mD}_{\mathcal{S}} \mB_2 \mF_{\mathcal{F}_2}^2 \medskip\\ \mathbf{O} & \mathbf{O} & {\mD}_{\mathcal{M}} \mF_{\mathcal{F}_2}^2
 \end{array}\right]\end{array}.
 \eeq
Under the assumptions
$\|\bar{\mD}_{\mathcal{A}}\mF_{\mathcal{F}_0}^0\|_2<1$, $\|\bar{\mD}_{\mathcal{S}}\mF_{\mathcal{F}_{H}}\|_2<1$ and $\|\bar{\mD}_{\mathcal{M}}\mF_{\mathcal{F}_2}^2\|_2<1$, the matrix $\bar{\mG}$ becomes invertible and  from the inverse of partitioned matrices \cite{Bernstein}, we get
\beq
\begin{array}{lll}
\mR=\bar{\mG}^{-1}=\left[ \begin{array}{lll}(\mI-\overline{\mD}_{\mathcal{A}} \mF_{\mathcal{F}_0}^0)^{-1}  & \mathbf{O} & \mathbf{O} \medskip\\
 \mP_1 & \mathbf{P}_2
  & \mathbf{P}_3 \medskip\\ \mathbf{O} & \mathbf{O} & (\mI-\overline{\mD}_{\mathcal{M}} \mF_{\mathcal{F}_{2}}^2 )^{-1}
 \end{array}\right]\end{array}
\eeq
and $\mP_1=- (\mI-\overline{\mD}_{\mathcal{S}} \mF_{\mathcal{F}_{\text{H}}})^{-1}{\mD}_{\mathcal{S}} \mB_1^T \mF_{\mathcal{F}_0}^0 (\mI-\overline{\mD}_{\mathcal{A}} \mF_{\mathcal{F}_0}^0)^{-1}$, $\mP_2= (\mI-\overline{\mD}_{\mathcal{S}} \mF_{\mathcal{F}_{\text{H}}} )^{-1}$,
$\mP_3=- (\mI-\overline{\mD}_{\mathcal{S}} \mF_{\mathcal{F}_{\text{H}}})^{-1}{\mD}_{\mathcal{S}} \mB_2 \mF_{\mathcal{F}_2}^2 (\mI-\overline{\mD}_{\mathcal{M}} \mF_{\mathcal{F}_2}^2)^{-1}$.
Then we can recovery the signals $\bs^0$, ${\bs}_{H}^1$ and ${\bs}^{2}$ as
\beq \label{eq:recover}
\left[\begin{array}{lll}
\bs^{0} \medskip\\
{\bs}_{H}^1\medskip\\
{\bs}^{2}
\end{array}\right]=\mR \left[\begin{array}{lll}
\bs^{0}_{\mathcal{A}} \medskip\\
\bs^{1}_{\mathcal{S}}\medskip\\
\bs^{2}_{\mathcal{M}}
\end{array}\right].
\eeq
This concludes the proof of point a). Let us  prove next point b).
From Proposition \ref{s_irr_band},
 $\bs^{1}_{\text{irr}}$ is a \textcolor{black}{$\mathcal{F}_{\text{irr}}$-bandlimited  signal with $\mathcal{F}_{\text{irr}}$ given  in (\ref{eq:F_irr})}. To find the bandwidth of the solenoidal part $\bs^{1}_{\text{sol}}=\mB_2 \bs^2$ we can proceed in a similar way to the proof of Proposition 2. Using the bandlimitedness property of $\bs^2$, so that $\bs^2=\mF_{\mathcal{F}_2}^2 \bs^2$, it results
\beq \label{eq_s_sol}
\bs^{1}_{\text{sol}}=\mB_2 \mU_{\mathcal{F}_2}^2 \mU_{\mathcal{F}_2}^{2 \, T} \bs^2\eeq
with $\mU_{\mathcal{F}_2}^2 \in \mathbb{R}^{T \times |\mathcal{F}_2|}$
the matrix whose columns  $\bu_i^2$, $\forall i \in \mathcal{F}_2$ are the eigenvectors of the second-order Laplacian $\mL_2=\mB_2^T \mB_2$.
From Proposition 1
at each eigenvector  $\bu^2$ with $\bu^2 \notin \mbox{ker}(\mB_2^T \mB_2)$ corresponds an eigenvector $\mB_2 \bu^2$ of $\mB_2 \mB_2^T$
with the same eigenvalue.
\textcolor{black}{Denote with $\mU_{\mathcal{F}_{\text{sol}}}$  the $E\times |\mathcal{F}_{\text{sol}}|$ matrix whose columns are the eigenvectors of $\mL_1$ associated to the frequency index set $\mathcal{F}_{\text{sol}}$ with
\beq \label{eq:F_sol}
\mathcal{F}_{\text{sol}}=\{ \ell_i \in \mathcal{F} \, : \, \bu_{\ell_i}=\mB_2 \bu^2_i, \, \bu^{2}_i \notin \text{ker}(\mB_2^T \mB_2), \forall i \in \mathcal{F}_{2}-\mathcal{C}_{2} \}.\eeq}
Let us write $\mU_{\mathcal{F}_2}^{2}$ as
$$\mU_{\mathcal{F}_2}^{2}=[\mU_{\mathcal{C}_2}^{2}, \mU_{\mathcal{F}_2-\mathcal{C}_2}^{2}]$$ where the columns of $\mU_{\mathcal{C}_2}^{2}$ are the $c_2=|\mathcal{C}_2|\geq 0$ eigenvectors in the kernel of  $\mL_2$   belonging  to the bandwidth of $\bs^2$. Therefore, it results
\beq
\mB_2 \mU_{\mathcal{F}_2}^{2}=[\mathbf{O}_{\mathcal{C}_2}, \, \mU_{\textcolor{black}{\mathcal{F}_{\text{sol}}}}],
\eeq\\
where we used the equality $\mB_2 \mU_{\mathcal{F}_2-\mathcal{C}_2}^{2}=\mU_{\textcolor{black}{\mathcal{F}_{\text{sol}}}}$.
Then equation (\ref{eq_s_sol}) reduces to
\beq \label{eq:sol1}
\bs^{1}_{\text{sol}}=\mU_{\textcolor{black}{\mathcal{F}_{\text{sol}}}} (\mU_{{{\mathcal{F}_2-\mathcal{C}_2}}}^{2})^T \bs^{2}.
\eeq
Since it holds $\mB_2^T \bu_{\textcolor{black}{\ell_i}}=\bu_i^2$, $\forall \bu_i^2 \notin \mbox{ker}(\mB_2)$, we easily get
\beq
\bs^{1}_{\text{sol}}= \mU_{\textcolor{black}{\mathcal{F}_{\text{sol}}}} (\mU_{\textcolor{black}{\mathcal{F}_{\text{sol}}}})^T \mB_2 \bs^{2}=\mU_{\textcolor{black}{\mathcal{F}_{\text{sol}}}} (\mU_{\textcolor{black}{\mathcal{F}_{\text{sol}}}})^T \bs^{1}_{\text{sol}}.
\eeq
Then $\bs^{1}_{\text{sol}}$ is a \textcolor{black}{$\mathcal{F}_{\text{sol}}$-bandlimited edge signal with $\mathcal{F}_{\text{sol}}$ given in (\ref{eq:F_sol}). This implies that $\bs^1$ is $\mathcal{F}$-bandlimited with $\mathcal{F}=\mathcal{F}_{\text{sol}} \cup \mathcal{F}_{\text{H}} \cup \mathcal{F}_{\text{irr}}$} and bandwidth $|\mathcal{F}|=|\mathcal{F}_0|+|\mathcal{F}_{\text{H}}|+|\mathcal{F}_{2}|-(c_1+c_2)$.
Let us now consider the system in (\ref{eq:recover}). We  get
\beq \label{eq:recover1}
\begin{array}{lll}
\bs^{0} &=(\mI-\overline{\mD}_{\mathcal{A}} \mF_{\mathcal{F}_0}^0)^{-1} \bs^{0}_{\mathcal{A}}\medskip\\
{\bs}_{H}^1& =-(\mI-\overline{\mD}_{\mathcal{S}} \mF_{\mathcal{F}_{\text{H}}})^{-1} \mD_{\mathcal{S}} \mB_1^T \mF_{\mathcal{F}_0}^0  (\mI-\overline{\mD}_{\mathcal{A}} \mF_{\mathcal{F}_0}^0)^{-1} \bs^{0}_{\mathcal{A}}+\medskip \\ & \quad (\mI-\overline{\mD}_{\mathcal{S}} \mF_{\mathcal{F}_{\text{H}}})^{-1}\bs^{1}_{\mathcal{S}}-(\mI-\overline{\mD}_{\mathcal{S}} \mF_{\mathcal{F}_{\text{H}}})^{-1} \mD_{\mathcal{S}} \mB_2 \mF_{\mathcal{F}_2}^2 \medskip \\ & \quad \cdot (\mI-\overline{\mD}_{\mathcal{M}} \mF_{\mathcal{F}_{2}}^2)^{-1}\bs^2_{\mathcal{M}}\medskip \\
\bs^{2} &=(\mI-\overline{\mD}_{\mathcal{M}} \mF_{\mathcal{F}_2}^2)^{-1} \bs^{2}_{\mathcal{M}}.
\end{array}
\eeq
Using equations in (\ref{eq:recover1}) and the fact that $\bs^{1}_{\mathcal{S}}=
\mD_{\mathcal{S}}  \mB_2 {\bs}^{2}+\mD_{\mathcal{S}}  \mB_{1}^{T} \bs^{0} +\mD_{\mathcal{S}}  \bs_H^1$, it holds
\beq \label{eq:s_sol_band}
\begin{array}{lll}
{\bs}_{H}^1& =(\mI-\overline{\mD}_{\mathcal{S}} \mF_{\mathcal{F}_{H}})^{-1} \mD_{\mathcal{S}} {\bs}_{H}^1.
\end{array}
\eeq
 Hence, from  (\ref{eq:s_sol_band}), it follows that to perfectly recover  the harmonic part of $\bs^{1}$ we need a number of edge samples  $N_1$ at least  equal to the signal bandwidth $|\mathcal{F}_{H}|$.
 Finally, from the first and last equation in  (\ref{eq:recover1}), we  conclude that
to retrieve $\bs^0$ and  the solenoidal signal $\bs^{1}_{\text{sol}}$ we need, respectively,  a number of node samples $N_0\geq |\mathcal{F}_0|$  and a number of triangle signal samples $N_2 \geq |\mathcal{F}_{2}|$.

\end{document}